\documentclass{article} 

\usepackage[dvipsnames]{xcolor}
\usepackage[margin=1in]{geometry}
\usepackage{amsmath,amsthm,amssymb} 
\usepackage{url}
\usepackage{forest}
\usepackage{tikz}
\usepackage{tikz-qtree}
\usepackage{wrapfig}
\usepackage{listings}
\usepackage{enumitem}
\usepackage{marginnote}
\usepackage{thm-restate}
\usepackage{mathtools}
\usepackage{microtype}
\usepackage[tableposition=above]{caption}
\usepackage{hyperref}

\usetikzlibrary{matrix,positioning, 
decorations.pathreplacing,
decorations.pathmorphing,
arrows,calc,shapes,fit,quotes}

\definecolor{bg}{gray}{1}
\definecolor{niceblue}{rgb}{.392,.584,.929}
\definecolor{dkgreen}{rgb}{0,0.6,0}
\definecolor{gray}{rgb}{0.5,0.5,0.5}
\definecolor{mauve}{rgb}{0.58,0,0.82}

\newtheorem{theorem}{Theorem}
\newtheorem{lemma}[theorem]{Lemma}

\newtheorem{observation}[theorem]{Observation}
\newtheorem{corollary}[theorem]{Corollary}

\theoremstyle{definition}
\newtheorem{definition}[theorem]{Definition}

\newcommand{\defn}{\emph}
\renewcommand{\epsilon}{\varepsilon}
\renewcommand{\hat}{\widehat}
\renewcommand{\tilde}{\widetilde}

\newcommand{\Smab}{\smash[b]{\binom{\log n}{k}}}

\newcommand{\problem}{Text Indexing with Mismatches}
\newcommand{\occ}{\# occ}

\newcommand{\poly}{\text{poly}}
\newcommand{\polylog}{\text{polylog}}
\newcommand{\calT}{\mathcal{T}}

\newcommand{\frakq}{\mathfrak{q}}
\newcommand{\predds}{$y$-fast trie}

\newcommand{\hi}[1]{\textcolor{red}{#1}}

\newif\ifsymbols
\symbolsfalse

\newif\iffull
\fulltrue

\title{Space-Efficient Text Indexing with Mismatches using Function Inversion}
\date{}
\author{
Jackson Bibbens\\
University of Massachusetts, Amherst, MA, USA\\
\texttt{jbibbens@umass.edu}
\and
Levi Borevitz\\
Northwestern University, Evanston, IL, USA\\
\texttt{leviborevitz2023@u.northwestern.edu}
\and
Samuel McCauley\\
Williams College, Williamstown, MA, USA\\
\texttt{sam@cs.williams.edu}
}

\begin{document}
\maketitle
 \begin{abstract}
	A classic data structure problem is to preprocess a string $T$ of length $n$ so that, given a query $q$, we can quickly find all substrings of $T$ with Hamming distance at most $k$ from the query string.  
    Variants of this problem have seen significant research both in theory and in practice.  
    For a wide parameter range, the best worst-case bounds are achieved by the ``CGL tree'' (Cole, Gottlieb, Lewenstein 2004), which achieves query time roughly $\tilde{O}(|q| + \log^k n + \occ)$, where $\occ$ is the size of the output, and space ${O}(n\log^k n)$.  The CGL Tree space was recently improved to $O(n \log^{k-1} n)$ (Kociumaka, Radoszewski 2026).

	A natural question that arises is whether a high space bound is necessary.  How efficient can we make queries when the data structure is constrained to $O(n)$ space?  
    While this question has seen extensive research, all known results have query time with unfavorable dependence on the alphabet size,
   $n$ and $k$. The state of the art query time from (Chan, Lam, Sung, Tam, Wong 2011) is roughly $\tilde{O}(|q| + |\Sigma|^k \log^{k^2 + k} n + \occ)$ for alphabet $\Sigma$.  

    We give an $O(n)$-space data structure with query time roughly $\tilde{O}(|q| + \log^{4k} n + \log^{2k} n \cdot \occ)$, with no dependence on the size of the alphabet.  Even for a constant-sized alphabet, this is the best known query time for linear space if $k\geq 3$ unless $\occ$ is large.  Our results give a smooth tradeoff between time and space.
     Interestingly, our results are the first to extend to the sublinear space regime: we give a succinct data structure using only $o(n)$ space in addition to the text itself, with only a modest increase in query time. 

	The main technical idea behind this result is to apply Fiat-Naor function inversion (Fiat, Naor 2000) to the CGL tree.  Combining these techniques is not immediate; in fact, we revisit the exposition of both the Fiat-Naor data structure and the CGL tree to obtain our bounds.  Along the way, we obtain improved performance for both data structures, which may be of independent interest.
\end{abstract}

\section{Introduction}
\label{sec:introduction}
In many string processing applications, it is crucial to be able to handle errors.
After all, real-world methodologies for obtaining and querying data are subject to mistakes and to noise.
With this motivation in mind, a classic string processing problem is finding substrings of a text that nearly match---but may not exactly match---a query string.

One common setting is building an indexing data structure.  We begin with a preprocessing step, in which we build an index on a text $T$.  After the preprocessing is complete, there is a sequence of queries $q$.  All characters in the text and any query are from an alphabet $\Sigma$.  The goal is to use the index to quickly find substrings of $T$ that are approximately equal to $q$.  
 In this work, 
 we use Hamming distance as our notion of approximate matching: we want to find all substrings of $T$ that differ from $q$ in at most $r$ locations for a specified parameter $r$.  
 Throughout this paper, we will assume that during preprocessing, we are given a bound $k$ on the maximum $r$ of any query.

This is a classic problem, first introduced by Minsky and Papert in 1969 as the ``best match'' or ``nearest neighbors'' problem~\cite{MinskyPapert69}.
In their original exposition, the problem is defined slightly differently: the input is a dictionary of strings rather than a single large text, and the goal is to find if any string in the dictionary is approximately equal to $q$; we discuss this further and explain why both versions are equivalent in Section~\ref{sec:dictionary_version}.
 We follow recent work (i.e.~\cite{ColeGoLe04,CohenAddadFeSt19}) in calling this the 
 \problem{} problem.   

Past work has found a number of efficient data structures for this problem.  A particularly performant data structure is the index of Cole, Gottlieb, and Lewenstein~\cite{ColeGoLe04}, which we refer to throughout this paper as the ``CGL tree.''  If $T$ has size $n$, and the maximum search radius is $k$, the CGL tree uses
 $O(n 3^k\binom{\log n + k}{k})$ space and can answer queries in\footnote{We use $\tilde{O}$ notation to suppress log factors: $\tilde{O}(f(n))$ is the same as $O(f(n)\polylog f(n))$.  Note that the log factors are in terms of the argument: for example, $\tilde{O}(\log^k n)$ suppresses $k$ and $\log\log n$ terms.}
$\tilde{O}(|q| + 6^k\binom{\log n + k}{k} + \occ)$ time, where $\occ$ is the number of substrings of $T$ that match $q$.  Perhaps most notably, these bounds exhibit optimal linear dependence on $|q|$ and $\occ$, with no dependence on the size of the alphabet. Furthermore, the search time is polylogarithmic for $k = O(1)$, since $\binom{\log n + k}{k} = O(\log^k n)$.

 The drawback of the Cole, Gottlieb, and Lewenstein result is the prohibitive space usage: 
 the CGL tree requires $\Theta(n3^k\binom{\log n+k}{k})$ words of space.
 There has been a long line of work on tradeoffs between space usage and query time~\cite{Cobbs95,HuynhHoLa06,LamSuWo08,ChanLaSu11}.  
 In particular, Chan et al.~\cite{ChanLaSu11} obtained an $O(n)$-sized index with query time $\tilde{O}(|q| + |\Sigma|^k (3\log n)^{k(k+1)} + \occ)$.

Recently, Kociumaka and Radoszewski~\cite{KociumakaRadoszewski26} were the first to improve the space of the CGL tree while maintaining essentially the same query time.  
 For large alphabets, their data structure requires $O(n \log^{k-1} n)$ space; roughly $\log n$ better than the CGL tree.  For small alphabets, they are able to save an additional log, achieving $O(n\log^{k-2 + \epsilon} n)$ space for any $\epsilon >0$ for sufficiently large $k$.  Their parameter setting is significantly different from ours: they improve the CGL tree space as much as possible while maintaining the same query time, whereas we increase the query time noticeably in exchange for drastic space improvements.  

The techniques used to obtain Kociumaka and Radoszewski's bounds in~\cite{KociumakaRadoszewski26} differ substantially from the techniques in this paper.  
Our strategy is to augment a truncated version of the CGL tree with a function-inversion data structure.  
Their approach leverages the underlying problem structure more directly.  
They first give a data structure whose query time depends on the length of a query; these bounds are useful if the length of a query is short.  For long queries, they carefully select ``anchors'' of the text to cover all matches; this improves space if the query is sufficiently long.  
\iffull
Using the better of these two data structures gives their final bounds.  
\fi
It is possible that their results could be augmented with a function-inversion data structure as described in this work, improving our tradeoffs by roughly a $\log n$ factor space.  We leave this to future work.

There has also been work on lower bounds for this problem.
Cohen-Addad, Feuilloley, and Starikovskaya~\cite{CohenAddadFeSt19} 
showed that in the pointer machine model, any data structure with $O(|q| + (\frac{\log n}{2k})^k + \occ)$ query time must use $\Omega(c^kn)$ space for some constant $c$.  
\iffull
Put in terms of previous results, they show that in the pointer machine model, it is not possible to simultaneously (1) substantially improve the query time of Cole, Gottlieb, and Lewenstein, and (2) obtain linear space for $k = \omega(1)$.
\fi

\begin{table*}[t]
    \centering
    \renewcommand{\arraystretch}{1.3}
    \captionsetup{position=below}
    \begin{tabular}{l l l}
        \hline
        \textbf{Space} & \textbf{Query Time} & \textbf{Reference} \\
        \hline
        \hline
        $O(n)$ words & $O(|q|^{k+1}|\Sigma|^k + \occ)$ & \cite{Cobbs95} \\
                     & $\tilde{O}(|q|^k|\Sigma|^k \log n + \occ)$ & \cite{HuynhHoLa06} \\
                     & $\tilde{O}(|q|^k|\Sigma|^k \log \log n + \occ)$ & \cite{LamSuWo08} \\
                     & $\tilde{O}(|q| +  |\Sigma|^k k^k 3^{k^2} \log^{k^2 + k} n + \occ)$ & \cite{ChanLaSu11} \\
                     & $\tilde{O}(|q| +  3^k \log^{4k} n + \log^{2k} n \cdot \occ)$ & Corollary~\ref{cor:result_linear_space} \\
        \hline
        $O(\frac{n}{\sigma} \log^k n)$ words; $\sigma \leq \log^k n$  & $\tilde{O}(|q| +  |\Sigma|^k k^k 3^{k^2} \sigma^k \log^k n + \occ)$ & \cite{ChanLaSu11} \\
        & $\tilde{O}(|q| + 3^k\sigma^3 \log^{k} n + \sigma^2 \cdot \occ)$ & Theorem~\ref{thm:result_large_space} \\
        \hline
        \hline
        $O(n \log|\Sigma|)$ bits  & $\tilde{O}(|q|^k|\Sigma|^k \log^2 n + \log n\cdot \occ )$ & \cite{HuynhHoLa06} \\
                     & $\tilde{O}(|q|^k|\Sigma|^k \log^{\epsilon} n + \log^\epsilon n\cdot  \occ )$ & \cite{LamSuWo08} \\
                     & $\tilde{O}(|q|\log^\epsilon n +  |\Sigma|^k k^k 3^{k^2} \log^{k^2 + k + \epsilon} n + \log^\epsilon n \cdot \occ)$ & \cite{ChanLaSu11} \\
        \hline
        $O(\frac{n}{\sigma} \log^k n)$ words; $\sigma \geq \log^{k} n$ & $\tilde{O}(|q| + 3^k \sigma^4 \log n + \sigma^3 \log^{1-k} n \cdot \occ)$ & Theorem~\ref{thm:result_small_space} \\
        
        \hline
    \end{tabular}
    \caption{Space-efficient data structures for \problem{}.  
    In this table, we simplify $\binom{x}{y} \approx x^y$;  this replacement does not affect the asymptotic bounds when $k=O(1)$.}
\label{table:related}
\end{table*}

\paragraph{Goal of This Work.}
 In the following discussion, we assume $k=O(1)$ to allow us to substitute $c^{\text{poly}(k)}\binom{\log n + k}{k} = \Theta(\log^k n)$.

A significant gap remains in the known time-space tradeoffs for this problem.
The best upper bound on the query time for a $\Theta(n)$ space data structure is $\tilde{O}(|\Sigma|^k\log^{k^2 + k} n + |q| + \occ)$, whereas the best lower bound is $\Omega(\log^k n + |q| + \occ)$.  Furthermore, known space-efficient upper bounds all have a $|\Sigma|^k$ term in the query time (see Table~\ref{table:related}),
whereas the query time of the CGL tree has no dependence on the alphabet size.

In this paper, we make significant progress towards closing this gap, obtaining a $\tilde{O}(\log^{4k} n + |q| + \log^{2k} n \cdot\occ)$ query time with $O(n)$ space. This is the first linear-space result with a query time that is a polynomial factor larger than the lower bound, and the first without a dependency on the size of the alphabet.  

Our time-space tradeoff also extends to a new parameter setting: a succinct data structure using \emph{sublinear} extra space.
Past work has used either $\Theta(n)$ words of space or $\Theta(n)$ bits of space~\cite{ChanLaSu11}.  We show that it is possible to achieve $o(n)$ bits of space in addition to the text $T$ while still retaining competitive query times.
This space savings is particularly well-motivated in the case that $T$ can be significantly compressed, leading to sublinear overall space usage.
At the extreme end of the time-space tradeoff, we maintain $o(n)$ query times while only storing $\tilde{O}(n^{3/4}\log^{3k} n)$ words of metadata in addition to the text $T$.  

\paragraph{Function Inversion for Data Structures.}

A classic problem in cryptography is function inversion: given a function $f: \{1,\ldots, N\}\rightarrow\{1,\ldots, N\}$, create a space-efficient data structure that allows us to quickly evaluate its inverse $f^{-1}$---that is to say, for some $j$, find an $i$ such that $f(i) = j$.  A space-inefficient solution is easy: we can store a lookup table for all possible queries in $\Theta(N)$ space; the goal is to obtain an $O(N/\sigma)$-space data structure for some $\sigma = \omega(1)$.  The best-known bounds for this classic problem are an $\tilde{O}(N/\sigma)$-space data structure that can evaluate $f^{-1}$ in $\tilde{O}(\sigma^3)$ time~\cite{FiatNaor00}. If $f$ is random, or if there are few pairs of elements $i,j$ with $f(i) = f(j)$, then the evaluation time can be improved to $\tilde{O}(\sigma^2)$ time~\cite{Hellman80, FiatNaor00}.

A recent line of work has shown that function inversion can help improve time-space tradeoffs for data structures.  The classic example is 
\iffull
3SUM indexing: suppose we want to preprocess an array $A$ so that given a query $q$, we can quickly find $i$ and $j$ such that $A[i] + A[j] = q$.  Thus, if we define a function $f(i,j) = A[i] + A[j]$, the query becomes equivalent to evaluating $f^{-1}(q)$.  With some extra care to ensure that the range and domain of $f$ are $\{1,\ldots, N\}$ for some integer $N$, it is possible to use this method to obtain a space-time tradeoff for 3SUM indexing; see~\cite{GolovnevGuHo20,KopelowitzPorat19,KirkpatrickKuMa26}.
\else
3SUM indexing~\cite{GolovnevGuHo20,KopelowitzPorat19,KirkpatrickKuMa26}.
\fi
 Further work has included function inversion results for other data structure problems including collinearity testing~\cite{AronovEzSh23}, string indexing~\cite{BilleGoLe24,CorriganGibbsKogan19}, and similarity search~\cite{McCauley24}.

The main idea behind this paper is to use function inversion to give a space-time tradeoff for \problem.  One interesting way this differs from past work is that we ``open the black box'' on both sides.  We must significantly alter both the CGL tree, and Fiat-Naor's function inversion result, in order to obtain our bounds.  Our exposition obtains improved bounds for each problem individually, and may be of independent interest.

\subsection{Results}
\label{sec:results}
 Our main result is the following time-space tradeoff.

\begin{theorem}
\label{thm:result_large_space}
    For any $\sigma$ satisfying $1 \leq \sigma \leq \Smab$, there exists a \problem{} data structure with $O(n\Smab/\sigma)$ space that can be constructed in $O(nk^2\Smab (\log n + k^2(\log\log n)^2))$ expected time and can answer queries in time 
    \[
    \tilde{O}\left(|q| + 3^k\sigma^3\Smab + \sigma^2 \cdot (\occ)\right).
    \]
\end{theorem}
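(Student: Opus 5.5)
The plan is to combine a modified CGL tree with a Fiat--Naor style function-inversion structure. The data structure has two layers.

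\textbf{The CGL tree and the function to invert.} First I would re-examine the CGL tree. Its total size is $O(n\Smab)$, up to $c^k$ factors. The bulk of this space consists of the leaves, i.e.\ suffix pointers, of the $k$-errata trees built on top of the suffix tree. Each internal structure has size proportional to the number of leaves it contains. A query walks through $O(3^k\Smab)$ such structures; in each it locates a locus and reports the leaves in a contiguous range.

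The observation is that every leaf entry of every errata subtree corresponds to some suffix of $T$. So I would index all $N=O(n\Smab)$ leaf slots, ordered globally. I would then define $f:[N]\to[N]$ mapping a slot to an encoding of the text position together with its ``key'' (its rank in the lexicographic order of the modified string). Given a locus in an errata tree, the set of matching slots is a contiguous range. What we need is, for a given slot index, the text position stored there. We store this explicitly only for every $\sigma$-th slot and otherwise reconstruct it.

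\textbf{Recovering slots via inversion.} The natural recovery uses an inverse of the map (position $\mapsto$ slot in this subtree) computable from the text. Concretely, given a text position $i$ and an errata subtree (determined by the mismatch positions chosen along the path), one can compute in $\tilde O(1)$ time, using LCE queries on the suffix tree, the rank of the modified suffix. This gives forward evaluation of $g(i)=\text{slot}$ in polylog time, while the query needs $g^{-1}$.

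I would apply Fiat--Naor to $g$ with space $\tilde O(N/\sigma)$. The original version gives inversion time $\tilde O(\sigma^3)$; the ``few collisions'' regime gives $\tilde O(\sigma^2)$. This is why we get $\sigma^2$ per reported occurrence. The $\sigma^3\cdot 3^k\Smab$ term comes from locating range endpoints: each of the $O(3^k\Smab)$ visited substructures needs a binary search or locus computation costing one inversion of $\tilde O(\sigma^3)$ amortized, or a worst-case inversion with collisions.

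To make the domain and range equal to $[N]$ and keep collisions few, I would pad and encode carefully. The $O(n\Smab/\sigma)$ space without the extra log factor requires the refined Fiat--Naor analysis mentioned in the introduction: the per-table bookkeeping must be charged as $O(N/\sigma)$ words rather than $\tilde O(N/\sigma)$. I would reduce log factors by choosing parameters so that the polylog loss moves into the time bound, which is hidden by $\tilde O$.

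\textbf{Query procedure, construction time, and the main obstacle.} The query runs the CGL search on the truncated structure. This truncated structure keeps the tree topology, which has size $O(N/\sigma)$ after sampling or pruning subtrees smaller than $\sigma$. The search finds slot ranges, and each reported slot is inverted at cost $\tilde O(\sigma^2)$. The $|q|$ term comes from the initial suffix tree traversal and LCE preprocessing of $q$. Deduplication of occurrences across errata trees follows the CGL approach.

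For the construction time, building the CGL tree costs $O(nk^2\Smab\log n)$ with a sort, and building the Fiat--Naor tables with $y$-fast-trie lookups costs the $(\log\log n)^2$ term, in expectation because of hashing.

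The main obstacle I expect is that the CGL tree's internal nodes also take $\Theta(N)$ space in the worst case, not just its leaves. Pruning small subtrees and recomputing their topology on the fly must cost only $\tilde O(\sigma^3)$ per visited structure. Handling this is where both black boxes must be opened.
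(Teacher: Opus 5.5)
Your high-level plan is right: truncate the CGL tree and recover what was discarded by Fiat--Naor inversion. But the function you invert does not work, and the proof hinges on exactly that choice. You first map slots to text positions, then switch to $g$: position $\mapsto$ exact slot (rank) of the modified suffix inside an errata subtree. Evaluating $g$ forward requires the rank of a modified suffix among the other modified suffixes stored in that subtree, which is precisely the information you pruned. LCE queries give LCPs and global suffix ranks, not ranks within an arbitrary set of altered suffixes.

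The missing idea is to coarsen the range to the truncated leaf. For each path label $\lambda$, let $f_\lambda(i)$ be the label of the leaf of $\calT$ with path label $\lambda$ that contains $s_i$. This leaf is unique by Observation~\ref{obs:path_distinct_labels}, and $f_\lambda(i)=\bot$ if there is none. $f_\lambda$ is computable from stored data by walking the truncated tree, and $f_\lambda^{-1}(\ell)$ is exactly the set of at most $\sigma$ suffixes in leaf $\ell$. A query therefore never recomputes pruned topology. It inverts the label of each leaf it reaches and checks every returned suffix against $q$ with Lemma~\ref{lem:lcp}. This dissolves the ``main obstacle'' you left open.

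Two further ingredients are also missing. First, pruning subtrees of size below $\sigma$ leaves $O(n/\sigma)$ leaves per $\lambda$, each with $O(\sigma)$ suffixes, only if children shrink geometrically. A caterpillar-shaped trie keeps $\Theta(N)$ nodes of size $>\sigma$. This is why the paper rebuilds CGL as a degree-7 recursion in which every child has at most half its parent's size (Lemmas~\ref{lem:set_size} and~\ref{lem:truncated_num_nodes}).

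Second, you need \emph{all} of $f_\lambda^{-1}(\ell)$, with probability $1$, in $\tilde O(\sigma^3)$ total rather than per element. Sampling or binary search on black-box Fiat--Naor costs about $\sigma^4\log n$. Theorem~\ref{thm:function_inversion_all} gets the better bound as follows:
\begin{itemize}
\item It handles $\bot$ explicitly inside $h_c$, which also removes Fiat--Naor's $\log^2 n$ space loss from the high-indegree table.
\item It shows each preimage element is found with probability $1-O(1/\sigma)$ using $\Theta(\sigma^2\log^3\sigma)$ clusters with parameter $\sigma\log\sigma$.
\item It stores the missed elements in a dictionary.
\end{itemize}

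Your accounting of the query time is also not how the terms arise. The functions have preimages of size up to $\sigma$, so they are not in a ``few collisions'' regime, and each inversion costs $\tilde O(\sigma^3)$. There are no range endpoints to binary-search. The bound comes from Lemma~\ref{lem:num_leaves}: the set $L$ of truncated leaves reached satisfies $|L| = O(3^k\Smab + \occ/\sigma)$. This holds because leaves reached only through a DFS lie entirely in the output and are charged to it. One $\tilde O(\sigma^3)$ inversion per leaf then gives the $3^k\sigma^3\Smab + \sigma^2\cdot\occ$ terms.

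Finally, $\tilde O$ in this paper does not hide $\log n$, so a ``polylog'' forward evaluation is too slow. Walking $\calT$ naively costs $O(k\log n)$ per evaluation of $f_\lambda$. The paper needs Lemma~\ref{lem:fast_function} to get $E(f_\lambda)=O(k^3+k^2\log\log n)$, using a waypoint tree, signpost substrings, $y$-fast tries and kangaroo jumps. It also needs Lemma~\ref{lem:fast_query} for the radius-$0$ subqueries.
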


This is a Las Vegas data structure: it is randomized, but the queries are always correct, and the space and query time bounds are worst-case. The randomization only affects preprocessing time.

We can immediately achieve linear space by setting $\sigma = \Smab$. 

\begin{corollary}
\label{cor:result_linear_space}
    There exists a \problem{} data structure with $O(n)$ space that can be constructed in expected time $O(nk^2\Smab(\log n + k^2(\log\log n)^2))$ 
    and can answer queries in time 
    \[    \tilde{O}\left(|q| + 3^k \binom{\log n}{k}^4 + \binom{\log n}{k}^2\cdot (\occ)\right).
    \]
\end{corollary}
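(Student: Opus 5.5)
This corollary follows from Theorem~\ref{thm:result_large_space} by setting the tradeoff parameter to its largest allowed value, $\sigma = \Smab$. The plan has three steps: check that this choice is admissible, then substitute it into each bound of the theorem.

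\textbf{Admissibility.} The theorem requires $1 \le \sigma \le \Smab$.
\begin{itemize}[nosep]
\item The upper bound holds with equality.
\item For the lower bound, we need $\Smab \ge 1$. This holds whenever $0 \le k \le \log n$, since a binomial coefficient $\binom{m}{j}$ with $0 \le j \le m$ is at least $1$.
\item If $k > \log n$, then $\Smab = 0$ and the range is empty. We handle this degenerate case by replacing $\Smab$ with $\max\{1,\Smab\}$ throughout. We expect the theorem's proof to tolerate this, since it only uses $\Smab$ as an upper bound on the number of error trees, or equivalently a truncation parameter. Alternatively, we restrict to $k \le \log n$, which is the regime the paper cares about.
\end{itemize}
Rounding is not an issue either. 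If $\sigma$ must be an integer, $\Smab$ is already an integer.

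\textbf{Space and preprocessing.} The space bound becomes $O(n\Smab/\sigma) = O(n)$. The construction-time bound in the theorem does not depend on $\sigma$, so it carries over unchanged as $O\big(nk^2\Smab(\log n + k^2(\log\log n)^2)\big)$ expected time.

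\textbf{Query time.} Substituting $\sigma = \Smab$ into the theorem's query bound gives
\[
\tilde{O}\Big(|q| + 3^k \Smab^{3}\cdot\Smab + \Smab^{2}\cdot(\occ)\Big) = \tilde{O}\Big(|q| + 3^k\binom{\log n}{k}^4 + \binom{\log n}{k}^2\cdot(\occ)\Big),
\]
which is exactly the stated bound. The $\tilde{O}$ only hides polylog factors of its argument, which do not change under this substitution. Las Vegas correctness and worst-case query guarantees are inherited directly from the theorem.

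The only step needing any care is the boundary check $\Smab \ge 1$, and it is immediate in the regime $k \le \log n$.
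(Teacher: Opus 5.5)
Your proposal is correct and matches the paper, which obtains this corollary by setting $\sigma = \Smab$ in Theorem~\ref{thm:result_large_space}. Your case analysis for $k > \log n$ is unnecessary, because the paper assumes $1 \le k < (\log n)/2$ throughout, which already gives $\Smab \ge 1$.
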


Our data structure uses a new exposition of the CGL tree.
This exposition slightly improves the performance of the CGL tree: we improve the space from $O(3^kn \binom{\log n + k}{k})$ to $O(nk\Smab)$,  and the query time from $\tilde{O}(|q| + 6^k\smash[b]{\binom{\log n + k}{k}} + \occ)$ to $\tilde{O}(|q| + 3^k\Smab + \occ)$.   Specifically, we improve the base of the exponent (removing it entirely from the space term), which improves the asymptotics for $k = \omega(1)$, and we remove the additive $k$ term in the binomial, which improves the asymptotics for $k = \omega(\sqrt{\log n})$.
\begin{theorem}
\label{thm:result_improved_cgl}
    There exists a \problem{} data structure with $O(nk \Smab)$ space that can be constructed in expected time $O(nk^2\Smab\log n)$ and can answer queries in time 
    \[    \tilde{O}\left(|q| + 3^k \binom{\log n}{k} + \occ\right).
    \]
\end{theorem}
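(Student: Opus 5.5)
The plan is to rebuild the CGL tree recursively, as a hierarchy of ``error trees,'' and to do the accounting more carefully than in~\cite{ColeGoLe04}. Level $0$ is the compressed suffix tree of $T$ (equivalently, the suffix tree of the set $S_0$ of all suffixes of $T$). Given a set $S$ of strings stored in a compressed trie, we perform a heavy-path decomposition. For every internal node $v$ and every light child $c$ of $v$, we build a modified copy of each string in the light subtree of $c$: the character right after $v$ is replaced by the character of the heavy child. These modified strings are then merged into a ``substitution'' trie hanging off $v$, so that a query can continue along the heavy path after spending one mismatch. Level $i$ consists of all such tries built from strings that have absorbed $i$ substitutions, for $i \le k$.

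Space. Heavy-path decomposition ensures that each original suffix lies in a light subtree at most $\log n$ times along its root path. If we charge each modified copy to the \emph{set} of positions at which it was substituted, then a suffix generates at most $\binom{\log n}{i}$ copies at level $i$, rather than $3^i\binom{\log n + i}{i}$. The extra factors $3^k$ and $+k$ in the original come from two sources: CGL also creates copies for the heavy branch, and it permits mismatch positions to repeat along a path after merging.

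The key step I would insist on is \emph{not} making a copy that follows a heavy edge. A query that mismatches into the heavy child can instead be handled by having the query branch into every light child explicitly at the error trees. I would also ensure that merging tries does not reset the light-depth counter. For the latter, I would use the fact that the depth of a node in the merged trie is governed by the size of the original subtree from which the copy came, and the light-edge count of that copy is counted relative to this original size. Summing $\sum_{i\le k}\binom{\log n}{i} = O(k\Smab)$ over the $n$ suffixes gives $O(nk\Smab)$ nodes and leaves. I expect this counting to be the main obstacle: I have to show that the substitution positions of a copy at level $i$ form a distinct $i$-subset of the light levels of the original suffix, even after successive merges. My first attempt would be an injective map from (copy) to (suffix, sorted set of heavy-path depths in the original tree), proved by induction on $i$.

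Query. We walk $q$ down the level-$0$ trie, using $O(1)$-time LCE/LCP queries (after $O(|q|)$ preprocessing of $q$ against $T$ via the suffix tree) to jump along heavy paths. At each branching point where a mismatch could occur, we either step into the substitution trie (consuming one error) or, when $q$'s character is a light child, take that child explicitly. Each step consumes one unit of budget out of the $k$ and corresponds to one of the $\log n$ light levels. The query therefore explores $O(\binom{\log n}{k})$ states, multiplied by the branching between ``mismatch'' and ``light-child exact'' cases, which I would bound by $3^k$. Every state costs $\tilde O(1)$ time using LCE and weighted-ancestor queries. Duplicate reports are removed by storing in each leaf its original suffix identifier and using the standard ``report once'' trick, which keeps the output term at $O(\occ)$.

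Construction. Building each level-$i$ trie by merging sorted lists with LCP computations costs $O(\log n)$ per string. With $k$ levels and $O(k\Smab)$ strings per suffix this gives $O(nk^2\Smab\log n)$ time. The expectation comes only from hashing the child pointers.
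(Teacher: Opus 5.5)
Your query analysis has a genuine gap. The ``at most $\log n$ light levels'' bound describes the root path of a \emph{stored} string, not the query. When $q$ follows a heavy path, even with a single LCE jump, every node $v$ it passes is a place where the strings in $v$'s light subtrees mismatch $q$. So you must enter the substitution trie of \emph{every} such $v$ with radius $r'-1$. The number of these nodes is bounded only by the length of the matched stretch (up to $|q|$), so over $k$ levels you get only an $|q|^{O(k)}$-type bound, not $3^k\binom{\log n}{k}$. This is exactly what CGL's group trees over the nodes of a centroid path fix. The paper avoids the problem altogether, because its main structure is not a trie. Each node splits $S$ by a lexicographic-median pivot $p$ and the median $m$ of $LCP(s,p)$ into $S_{<m},S_{<\ell},S_{>\ell},S_{>m}$, each of size at most $|S|/2$ (Lemma~\ref{lem:set_size}). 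The tree therefore has degree $7$ and height $\lceil\log n\rceil$. By Observation~\ref{obs:correctness_core}, besides DFS calls that only enumerate output, a query makes at most one radius-$r'$ call and three radius-$(r'-1)$ calls per node. The visited nodes are then encoded by labels of length at most $\log n$ over $\{\texttt{s},\texttt{1},\texttt{2},\texttt{3}\}$ with at most $k$ non-\texttt{s} letters (Lemma~\ref{lem:fast_query_calls}).

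Several other steps of the query argument are also unsupported:
\begin{itemize}
\item Your fallback of branching into every light child explicitly reintroduces a dependence on node degree, i.e.\ on $|\Sigma|$.
\item A ``report once'' filter removes duplicate outputs but does not pay for reaching the same suffix along many routes. This happens when the substitution trie also holds copies from the light child that $q$ entered exactly. The paper instead proves each suffix is met at most once (Lemma~\ref{lem:no_duplicates}).
\item A radius-$0$ descent inside an error trie of altered strings costs one LCE per heavy path, up to $\log n$. The $\tilde{O}$ here does not absorb a $\log n$ factor; removing it is the whole content of Section~\ref{sec:improving_logs}.
\end{itemize}

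The step you flag as the main obstacle also fails as proposed. The substitution trie at $v$ merges all light children, whose total size can be almost that of $v$'s subtree. The later light edges of a copy are determined by the heavy-path decomposition of the \emph{merged} trie, not by the size of the subtree the copy came from. Hence an error step need not halve anything, later substitution positions are not light levels of the original suffix's path, and the injection into $i$-subsets of those levels does not exist. The correct potential argument is that non-error light edges halve the size and error steps do not increase it. That allows $\log n+i$ light edges, of which $i$ are errors, which is CGL's $\binom{\log n+k}{k}$-type count. The paper obtains $\binom{\log n}{k}$ because its altered subsets are exactly the pivot-altered versions of the four half-size recursive subsets, so \emph{every} edge, altered or not, halves the set. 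Path labels over $\{\texttt{a},\texttt{u}\}$ then have length at most $\lceil\log n\rceil$ with at most $k$ \texttt{a}'s and are prefix-free for each suffix (Observation~\ref{obs:path_distinct_labels}). This gives $O(k\binom{\log n}{k})$ nodes per suffix (Lemma~\ref{lem:space}).
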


We also extend our results to give the first succinct data structure for \problem{}.   
This data structure has sublinear space in addition to the space required to store $T$.  This is particularly motivated in contexts when $T$ is compressed---our data structure can be built to match even a significantly sublinear-sized $T$.  Our succinct results are Monte Carlo, and queries are correct with high probability---that is to say, with probability $\geq 1 - 1/n^c$ for any constant $c$.

\begin{theorem}
\label{thm:result_small_space}
    For any $\sigma \geq \Smab$, 
    there exists a \problem{} data structure that uses $O(n\Smab/\sigma)$ space in addition to the text $T$, that can be constructed in 
    $O(nk^2\Smab + nk\sigma\log \sigma)$ 
    expected time and can answer queries correctly with high probability in time 
    \[
    \tilde{O}\left(|q| + 3^k\sigma^4 \log n + 
    \frac{\sigma^3}{\binom{\log n}{k}} \log n\cdot (\occ)\right).
    \]
\end{theorem}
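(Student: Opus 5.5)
Our plan is to obtain Theorem~\ref{thm:result_small_space} from the same two ingredients as Theorem~\ref{thm:result_large_space}: the new exposition of the CGL tree (Theorem~\ref{thm:result_improved_cgl}) and a Fiat--Naor-style inversion structure. The difference is that we now store no per-position arrays at all. In the CGL tree of Theorem~\ref{thm:result_improved_cgl}, a query reduces to $O(3^k\Smab)$ lookups. Each lookup asks, for a node of one of the $O(k\Smab)$ ``error trees,'' for the leaves, that is the text positions (or modified suffixes), lying in a lexicographic range. We view the whole collection of leaves as a function $f$ on a domain of size $N = \Theta(nk\Smab)$. The function maps a domain element $(i, \text{choice of up to } k \text{ heavy-path mismatch positions})$ to its rank in the corresponding sorted tree. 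A lookup then becomes ``find $f^{-1}$ of the boundary ranks,'' and reporting becomes ``enumerate $f^{-1}$ over a rank interval.'' The key point is that $f$ can be evaluated using only $T$, which we are allowed to read for free: given $i$ and the mismatch positions, we compute the modified suffix's rank by a constant number of LCE queries plus predecessor searches in a sampled index. We then invert $f$ with Fiat--Naor tables of size $\tilde O(N/\sigma')$. Choosing $\sigma'$ so that the extra space is $O(n\Smab/\sigma)$ gives $\sigma' \approx \sigma$.

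The steps, in order, are as follows.
\begin{enumerate}
\item We build a sampled, succinct replacement for the suffix tree and LCE machinery. We keep only every $\sigma/\Smab$-th suffix, together with Karp--Rabin fingerprints of $T$ at sampled positions, which fits in the space budget. An unsampled LCE or rank query is then answered by binary searching over fingerprints, at a cost of $\tilde O(\sigma\log n/\Smab)$ or so per evaluation. Fingerprints make this Monte Carlo, correct with high probability; this is the source of the w.h.p.\ guarantee.
\item We define $f$ and show that it is evaluable in time $t_f = \tilde O(\sigma \log n/\Smab)$, possibly up to a further $\log n$ factor.
\item We apply the (revised) Fiat--Naor inversion: with $\tilde O(N/\sigma)$ space, it inverts in $\tilde O(\sigma^3)$ evaluations of $f$, hence time $\tilde O(\sigma^3 t_f)$. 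Since there are $3^k\Smab$ inversions per query, this gives the $3^k\sigma^4\log n$ term, once the $\Smab$ factors cancel as $\Smab\cdot \sigma^3\cdot \sigma\log n/\Smab$. Each reported occurrence costs one inversion-style walk (a Hellman chain continuation) of $\tilde O(\sigma^2)$ evaluations, amortized. Multiplying by $t_f$ gives the $\sigma^3\log n/\Smab$ per-occurrence term.
\item We account for preprocessing: the CGL construction takes $O(nk^2\Smab)$, and building the chains takes $O(N\cdot)$ evaluations spread over the sorted $\sigma$-blocks, which yields $O(nk\sigma\log\sigma)$.
\end{enumerate}

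We expect the main obstacle to be step 3 combined with step 2. Fiat--Naor inversion is designed for a single target value. Here we need \emph{range} inversion: finding the first leaf with rank at least a boundary, and then enumerating all leaves in the range. Moreover, $f$ is far from injective-random, because many domain elements collapse to equal modified suffixes. We would first try to make $f$ injective by breaking ties by position. We would then store, for each $\sigma$-th rank in each tree, the preimage explicitly in sparse form, so that locating a range boundary needs only a local search. Enumeration would use consecutive inversions of ranks, each costing one Fiat--Naor inversion. We would also need to verify that our modified Fiat--Naor analysis still gives $\tilde O(\sigma^3)$ evaluations when evaluating $f$ is itself randomized and only correct w.h.p.; a union bound over the polynomially many evaluations should suffice.
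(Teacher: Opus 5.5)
There is a genuine gap in step~2, and step~3 inherits it. Your $f$ sends a domain element to its \emph{exact} rank in its sorted tree, and you claim this rank can be computed from $T$ plus a sampled index with $O(1)$ LCE and predecessor queries. A sampled index (every $\tau$-th string, $\tau=\lceil\sigma/\Smab\rceil$) only places a string between two consecutive samples. Fixing its offset requires comparing it with the unsampled strings in that gap. The only way you have to obtain those strings is to invert $f$, and inverting $f$ requires evaluating $f$, so the construction is circular.

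Already for $k=0$ this asks for an inverse-suffix-array oracle over the plain text, using $O(n/\tau)$ extra words and $\tilde O(\tau)$ time. No such oracle is known, which is exactly why function inversion enters the picture. In the heavy-path formulation there is a further problem: the modified suffix is not even determined by $i$ and the mismatch depths. The substituted character is that of a heavy child in a suffix tree you are not storing.

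The per-occurrence bound fails too. With an injective $f$, consecutive ranks are unrelated on the Hellman chains, so each reported element costs a full inversion of $\tilde O(\sigma^3)$ evaluations, not an amortized $\tilde O(\sigma^2)$. The range-inversion fix in your last paragraph is circular in the same way: the local search inside a block needs exact ranks.

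The missing idea is to coarsen $f$ so that it can be computed from data you actually store, and so that one inversion returns a whole bucket. The paper does this as follows.
\begin{itemize}
\item It takes its recursive CGL tree, which has constant degree, every child at most half its parent, and alterations determined by stored pivots. The suffix-tree-based CGL lacks these properties.
\item It truncates the tree at subproblem size $\sigma$ and stores the truncated tree $\calT$ explicitly; it has $O(nk\Smab/\sigma)$ nodes.
\item For each path label $\lambda$, $f_\lambda(i)$ is the label of the leaf of $\calT$ with path label $\lambda$ that contains $s_i$, or $\bot$. It is evaluated by a manual traversal of $\calT$ using the $O(n/\tau)$-space Kosolobov--Sivukin LCE structure, so $E(f_\lambda)=O(k\tau\log n)$.
\item Every non-$\bot$ value has at most $\sigma$ preimages. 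This lets the modified Fiat--Naor structure of Theorem~\ref{thm:function_inversion_all} drop the high-indegree table, so there is no $\log n$ loss in space. It also lets that structure return \emph{all} of $f_\lambda^{-1}(\ell)$ in $\tilde O(\sigma^3E(f_\lambda))$ time.
\item A query traverses $\calT$ as it would the full tree, collecting a set of leaves $L$ with $|L|=O(3^k\Smab+\occ/\sigma)$. It inverts each leaf and verifies the candidates with the Monte Carlo LCE of Bille et al., which is the only source of error.
\end{itemize}
This gives $\tilde O(|q|+3^k\sigma^3\tau\log n\Smab+\sigma^2\tau\log n\cdot\occ)$, and substituting for $\tau$ yields the theorem. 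The $\sigma^3\log n/\Smab$ per-occurrence factor comes from the $\occ/\sigma$ term in $|L|$, and no range inversion is ever needed.
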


Finally, we obtain slightly improved function inversion bounds.  For our results (and for many data structure applications), it is crucial that we obtain \emph{all} elements in the inverse $f^{-1}(j)$, whereas classic function inversion results~\cite{FiatNaor00,Hellman80} give a single element of $f^{-1}(j)$.  We show that if the elements we are inverting\footnote{The theorem as stated assumes one high-indegree element; to generalize, we can store all elements with indegree~$> \sigma$ in $O(n/\sigma)$ space and map them to $\bot$ manually.} have $|f^{-1}(j)|\leq \sigma$, a few minor changes to the original Fiat-Naor data structure are sufficient to find all elements in the inverse with probability $1$.  Furthermore, we can avoid all $\log n$ terms in the space and running time bound, using only $\log \sigma$ terms instead.

\begin{theorem}
\label{thm:function_inversion_all}
For any $f: [n]\rightarrow [n-1]\cup \bot$,  such that: 
\iffull
\begin{enumerate}[topsep=1pt, noitemsep]
\item $f(i)$ can be evaluated in $E(f)$ time for all $i$, and 
\item $|f^{-1}(j)| \leq \sigma$ for all $j\neq \bot$, 
\end{enumerate}
\else
(1) $f(i)$ can be evaluated in $E(f)$ time for all $i$, and 
(2) $|f^{-1}(j)| \leq \sigma$ for all $j\neq \bot$;
\fi
 there exists a data structure 
    using $O(n/\sigma)$ space
    such that for any $j\neq \bot$, we can find $f^{-1}(j)$ in 
    $O(\sigma^3 \cdot (E(f) + \log\sigma) \cdot \log^4 \sigma)$ time.  The data structure requires $O(n \cdot (E(f) + \log \sigma)\cdot\log \sigma)$ expected preprocessing time.
\end{theorem}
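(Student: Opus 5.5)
We adapt the Fiat--Naor construction: Hellman chains with ``rerandomized'' steps, plus a small table that steers chains away from high-indegree targets. We modify it so that every element of $f^{-1}(j)$ is found deterministically, and so that all bookkeeping costs $\log\sigma$ instead of $\log n$.

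\textbf{Setup.} Since $|f^{-1}(j)|\le\sigma$ for every $j\ne\bot$, the domain splits into preimage classes of size at most $\sigma$. The target is to recover a whole class $f^{-1}(j)$ once a single member is known. Our first attempt is the following. Store chains of the form $x_0, g_\ell(x_0), g_\ell(g_\ell(x_0)),\dots$, where $g_\ell(x)=h_\ell(f(x))$ and $h_\ell$ is a cheap pairwise-independent map $[n-1]\to[n]$. Elements mapping to $\bot$ are redirected by $h_\ell(\bot,x)$ so that chains never stall. Each table $\ell$ stores only (endpoint $\to$ start) pairs.

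\textbf{Parameters and query.} We use $\sigma^2$ tables, and in each table about $n/\sigma^3$ chains of length $\sigma$. The total space is $\sigma^2\cdot n/\sigma^3=n/\sigma$ words. A query $j$ proceeds table by table. In table $\ell$ we compute $y=h_\ell(j)$, iterate $g_\ell$ from $y$ for up to $\sigma$ steps, and look up each iterate among the endpoints. On every hit we restart from the stored start and walk forward, collecting each $x$ with $f(x)=j$. This costs $\sigma\cdot\sigma^2=\sigma^3$ evaluations overall. Endpoint lookup would use a hash or predecessor structure keyed by $\log\sigma$-bit fingerprints within buckets, giving the $\log\sigma$ factors.

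\textbf{Completeness.} To make completeness deterministic, each table's chains are built in preprocessing. Coverage is verified, and uncovered or heavily colliding points are handled by rebuilding with fresh hashes, which gives a Las Vegas expected preprocessing bound. Fiat--Naor's mechanism of storing ``bad'' images in an $O(n/\sigma)$ table does this job. Completeness then follows from two facts. First, every domain point appears in some chain. Second, once one member $x$ of $f^{-1}(j)$ lies on a chain through $h_\ell(j)$, all of $f^{-1}(j)$ can be found. For the second fact, either every $x\in f^{-1}(j)$ is itself placed on a chain in some table, or we add a small auxiliary structure, such as a cyclic ``next element of the same class'' pointer sampled every $\sigma$ positions, which costs $O(n/\sigma)$ space. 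Our first try is to require that every point be covered in some table. The query then scans all $\sigma^2$ tables regardless, and it reports every $x$ whose chain passes through $h_\ell(j)$, because the chain predecessor of $h_\ell(j)$ is exactly an element of $f^{-1}(j)$.

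\textbf{Main obstacle.} The hardest step is showing that, with $n/\sigma$ total chain-start space, every point can be covered while the chains stay essentially disjoint. Classic Hellman analysis only covers a constant fraction of points, and merges cause lost coverage. The plan is to follow the Fiat--Naor analysis with pairwise-independent $h_\ell$, bounding merges per table by $O(\sigma^{-1})$ fractions. Uncovered points are then added as new chain starts greedily across rounds, with a geometric decrease. This costs an extra $\log\sigma$ rounds, which contributes $\log\sigma$ factors to space and time. We would absorb the space blowup by shrinking the number of chains by $\log\sigma$ and paying the remaining $\log^4\sigma$ in query time.
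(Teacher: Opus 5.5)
\textbf{Completeness is the gap.} Your argument needs every element of $f^{-1}(j)$ to be recoverable, and none of the mechanisms you list achieves that within the budget.
\begin{itemize}
\item \emph{Rebuild until every point lies on a chain.} Even after boosting, about $n/\sigma$ points are missed per attempt, and that count is concentrated. An attempt with zero misses essentially never occurs, so this gives no useful expected preprocessing bound.
\item \emph{Geometric greedy for $\log\sigma$ rounds.} This leaves $\Theta(n/\sigma)$ points uncovered, not zero. If you finish by making each leftover point its own chain start, a query reaches it only by following \emph{every} endpoint its walk meets. That is exactly what breaks your worst-case $\sigma^3$ bound (see below).
\item \emph{Your ``second fact''.} It is not available: a chain through $h_\ell(j)$ reveals only its own predecessor of $h_\ell(j)$.
\item \emph{Sampled ``next element of the same class'' pointers.} These cannot be followed between samples without inverting $f$ again.
\item \emph{Fiat--Naor's table of high-indegree images.} It addresses a different issue and is what costs them $\log^2 n$ space.
\end{itemize}

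The paper's route has three steps:
\begin{itemize}
\item[(i)] Prove that each \emph{individual} $i\in f^{-1}(j)$ is returned by one cluster with probability $\Omega(1/\sigma^2)$. This is where $|f^{-1}(j)|\le\sigma$ enters: it gives $\Pr[h_c(i)=h_c(i')]=O(\sigma/n)$. The value $\bot$ is handled by hashing $n+i$ instead, so no high-indegree table is needed.
\item[(ii)] Boost so that each $i$ is missed with probability $O(1/\sigma)$.
\item[(iii)] Put the expected $O(n/\sigma)$ missed items into a dictionary keyed by $f(i)$. The query consults it in $O(1+|f^{-1}(j)|)$ time before searching the clusters.
\end{itemize}
That dictionary is what makes queries correct with probability $1$ while only the preprocessing is randomized.

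\textbf{Three further steps also need repair.}
\begin{itemize}
\item \emph{Independence.} Pairwise-independent $h_\ell$ does not support the chain analysis. Bounding looping, and the inclusion--exclusion over pairs of chains, treats the up to $2\sigma$ adaptively chosen hash values along two chains as independent and uniform, which needs $2\sigma$-wise independence. The paper uses Fiat--Naor's $2\sigma$-wise independent family, with $O(1)$ amortized space and $O(\log\sigma)$ amortized evaluation time. That family, not fingerprint buckets, is the source of the additive $\log\sigma$ in $E(f)+\log\sigma$.
\item \emph{Space accounting for the boosting.} An element is covered in expectation about $ML/n$ times, where $M$ is the total number of stored starts and $L$ the chain length. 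Miss probability $O(1/\sigma)$ with $M=O(n/\sigma)$ therefore forces $L=\Omega(\sigma\log\sigma)$, so merely reducing the number of chains cannot absorb the extra $\log\sigma$. The paper sets $\sigma'=\sigma\log\sigma$ and uses $n/(\sigma')^3$ chains of length $\sigma'$ per cluster, with $\Theta(\sigma^2\log^3\sigma)$ clusters. This gives $O(n/\sigma)$ space and query time $\Theta(\sigma^2\log^3\sigma)\cdot O(\sigma\log\sigma\,(E(f)+\log\sigma))$, which is the stated bound.
\item \emph{Query cost.} Restarting at every endpoint hit can cost $\Theta(\sigma^2)$ evaluations per table in the worst case, so $\sigma\cdot\sigma^2$ is not a worst-case bound. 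The paper follows only the first endpoint found in each cluster, so it computes just two chains per cluster.
\end{itemize}
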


Past work on data structures for function inversion~\cite{McCauley24,AronovCaDa24} has found all elements in the preimage with sampling- or binary-search-based techniques on a black-box Fiat-Naor data structure, leading to (ignoring extra $\log n$ terms in~\cite{FiatNaor00}) a query time of $O(E(f) \cdot \sigma^4\log n)$.

\paragraph{A Note on Log Factors.}
We point out that the $\tilde{O}$ notation in our bounds does not suppress $\log n$ factors.
Specifically, in this paper we use $\tilde{O}(f(n))$ as a shorthand for $O(f(n)\cdot \polylog (f(n)))$.
Substituting the query time from Theorems~\ref{thm:result_large_space}, \ref{thm:result_improved_cgl}, or~\ref{thm:result_small_space} for $f(n)$, we are suppressing
terms polynomial in $k$, $\log |q|$, $\log \sigma$, or $\log\log n$.
These terms  are logarithmic compared to the query time,  
but $\log n$ terms are not---we need to track any $\log n$ terms that appear in the analysis, and design our data structures to avoid stray $\log n$ terms.  

\paragraph{Comparing Our Results to Previous Space-Efficient Work.}
We give a full comparison of previous space-efficient results in Table~\ref{table:related}.  
We give 
the first polynomial tradeoff between time and space: a space savings of $\sigma$ results in a $\sigma^3$ increase in query time, compared to the previous state of the art  $\sigma^k$. 
We are also the first space-efficient results without a $|\Sigma|^k$ term---in fact, our results have no dependence on $\Sigma$ so long as each character fits in a machine word.
We point out that $4k\leq k^2 + k$ for $k \geq 3$, so the logarithmic term in the query time alone is also improved for most values of $k$.

The one area where we do not improve on past work is that the final term in our query time is $\sigma^2 \cdot \occ$, whereas past work generally has an additive $\occ$ or $\occ \cdot \log n$ term.  This seems to be a downside of our approach: function inversion has, inherently, a high cost per element inverted (and therefore a high cost per element returned); see also~\cite{BilleGoLe24}.  That said, our gains elsewhere are significant enough that we still achieve improved performance unless the output size is large.  In particular, we give the best known query time for a linear-space data structure if
$\occ \leq |\Sigma|^k k^k 3^{k^2} \log^{k^2 - k} n$.  

We are the first to give succinct (i.e., $o(n)$-space) results, significantly generalizing past research attaining $O(n)$ bits of space.  To our knowledge, a succinct data structure was not known even for the $k=1$ case.

\subsection{Technical Overview}
\label{sec:technical_overview}
\paragraph{The CGL Tree and Function Inversion.}  

The main idea of our results is to use function inversion to reduce the space necessary to store the CGL tree.  The CGL tree, in short, is a combination of a trie and a binary search tree.  Each node in the CGL tree has a suffix of $T$ as a ``pivot''.  As in a binary search tree, the query algorithm guarantees that if a substring $s$ of $T$ has distance at most $k$ from $q$, then $s$ is a prefix of a pivot of a node traversed by $q$.

The CGL tree, which we refer to as $\calT_0$, has $O(n\Smab)$ leaves.  We truncate the CGL tree to obtain a subtree with $O(n\Smab/\sigma)$ leaves, each of which is the root of a subtree of size $O(\sigma)$ in the original tree.  We denote the truncated tree as $\calT$.  

Consider the following strategy that is space- and time-inefficient, but maintains correctness: for each leaf of the truncated tree $\calT$ we can keep a list of its $O(\sigma)$ descendants in $\calT_0$.  On a query, we begin by traversing $\calT$ exactly as we would have traversed $\calT_0$.  When we reach the leaf, we can compare the query to all $\sigma$ descendants.

To use function inversion, consider applying a \emph{label} to each leaf of the truncated tree $\calT$.  We define a set of functions $f_{1}, \ldots, f_{\binom{\log n}{k}}$ such that if the $i$th suffix of $T$ is a descendant of the leaf with label $\ell$, there exists a $\lambda$ such that $f_{\lambda}(i) = \ell$.

With this setup, we no longer need to store a list at each leaf of $\calT$.  Instead, we store a function inversion data structure on each $f_{\lambda}$.  When the query reaches a leaf $\ell$, rather than scanning through a list, we can calculate $f^{-1}_{\lambda}(\ell)$.  This gives exactly the set of strings that were stored in the list.

Performing a function inversion query requires $O(\sigma^3)$ time using the strategy of Fiat and Naor~\cite{FiatNaor00}.  This leads to query time roughly $O(\sigma^3\Smab)$.  The truncated tree $\calT$ and the function inversion data structure each require $O(n/\sigma)$ space.  

\paragraph{Challenges to This Strategy.}
The basic strategy above is to combine~\cite{ColeGoLe04} and~\cite{FiatNaor00}, but their results do not work together immediately.  We need to carefully address the following crucial details.
\begin{itemize}[topsep=1pt, noitemsep]
\item We must find a truncated tree $\calT$ with $O(|\calT_0|/\sigma)$ leaves, each with $O(\sigma)$ descendants.
\item We need to define the sequence of functions $f_1, \ldots, f_{\binom{\log n}{k}}$.
\item We need to ensure that we can use the function-inversion data structure to recover \emph{all} points stored in a given leaf, with probability $1$ (Fiat-Naor inversion only returns one).
\item We need a way to efficiently find the distance between the query $q$ and a substring of $T$.
\item We must avoid losing stray $\log n$ or $2^k$ terms while addressing these challenges.
\end{itemize}

To address these challenges, we ``open the black box'' on both sides.  We give a new exposition of the CGL tree, and an exposition of a simplified Fiat-Naor function inversion data structure.  

\paragraph{Changes to the CGL Tree.}

In the original exposition of~\cite{ColeGoLe04}, the starting point is a compressed trie on $T$, with two crucial changes.  First, a centroid path decomposition is calculated on the tree.  For each node, a new child is created, consisting of the union of all non-centroid path children of the node.  This union allows for more efficient searching when the query has errors at this point.  
Then, the children of each node, and the nodes along each path, are each merged into a BST-like data structure; the ``group tree.''   The group tree ensures that the height of the tree is at most $\log n$.  Stitching these data structures together gives their $O(c^k \smash[b]{\binom{\log n + k}{k}})$ query time.

Our exposition proceeds essentially in reverse.  
We begin with a BST-like structure: at each recursive node, we partition into 4 sets, each of at most half the size.  Then, to handle errors, rather than grouping trie nodes, we recursively alter both the query $q$ and the suffixes of $T$.  Each suffix of $T$ is altered exactly once.  We then store
8 children of each node:\footnote{In fact we will store 7 sets, as one of the altered sets will be unnecessary.} 
4 for the sets of altered suffixes, and 4 for the sets of unaltered suffixes.

The advantage of this change is that we have one recursive structure, rather than a combination of suffix tree nodes and two types of group tree nodes.  This gives us our query time improvements: in short, the original exposition incurred a constant-factor loss in performance at most $k$ times along each root-to-leaf path when the different structures were combined, leading to an extra $c^k$ term in time and space, and a $\smash[b]{\binom{\log n + k}{k}}$ term compared to our $\Smab$. 

The disadvantage of this change comes from difficulties in removing a final $\log n$ term.  Our goal is $\tilde{O}(\log^k n)$ performance: in particular, when the recursive search radius $r = 0$, we want performance $O(k\log\log n)$, so we do not have time to traverse a tree of height $\Omega(\log n)$.  In~\cite{ColeGoLe04} this is handled by combining a compressed trie on $T$ with an efficient predecessor data structure.  The challenge is that we have recursively changed the suffixes and the query---the suffix tree of $T$ might not reflect these recursive alterations.

Therefore, we must generalize the data structure in~\cite{ColeGoLe04} to handle alterations.  Our data structure has the same basic ideas as theirs, but our simplicity in the recursive exposition comes at the cost of significantly increased bookkeeping here.  
This generalization is the topic of Section~\ref{sec:improving_logs}.

\iffull
\paragraph{Black Boxes and Saving Logs.}
 As discussed above, in this paper we ``open the black box'': we give an exposition of the entire data structure to achieve the bounds of the CGL tree, as well as the data structure to achieve the Fiat-Naor function inversion bounds, rather than simply referencing their ideas.  A reader might reasonably wonder if reiterating these well-known results is necessary---or how using them as a black box would affect performance.   Here we discuss why these changes were in fact necessary to achieve the bounds in this paper, and give details of how a black box exposition would reduce performance.

First, we explain why it is necessary to change the CGL tree.  The main difference between our exposition and the original exposition in~\cite{ColeGoLe04} is that we have a single recursive process; their exposition alternates between ``group trees'' (a similar recursive process to ours), and trie-like nodes which may have up to $|\Sigma|$ children.  
We can therefore make two new guarantees: all nodes in our exposition have constant degree, and each child has at most half the descendants of its parent.  This has an immediate impact on combining with function inversion: we can truncate the tree to $O(n/\sigma)$ nodes which each have $O(\sigma)$ descendants in the worst case.   The upper bound on the number of descendants of a leaf is crucial to our analysis.

In addition to addressing the above correctness issues, we obtain somewhat improved bounds for the CGL tree, an improvement which is significantly magnified by the space-time tradeoff. Namely, the CGL tree originally had a $O(6^k\binom{\log n+k}{k})$ term in the query time, and $O(n3^k\binom{\log n+ k}{k})$ space.  This means that to obtain $O(n)$ space, we would need to set $\sigma \geq 3^k \tbinom{\log n + k}{k}$.  The query time with this value of $\sigma$ then has a term 
$6^k\cdot 3^{3k} \tbinom{\log n+k}{k}^4 = 162^k \tbinom{\log n + k}{k}^4$.  
This can be compared to $3^k\tbinom{\log n}{k}^4$ in Corollary~\ref{cor:result_linear_space}.
Saving the $c^k$ terms in the time and space---arguably minor terms in the original exposition---leads to a very noticeable difference in our final bounds, even for relatively small $k$.

Now, we explain why it is necessary to change Fiat-Naor's results, rather than using their results as a black box.
There are two challenges we need to overcome.
    First, Fiat-Naor does not try to save log factors.  Achieving our bounds means improving their results by a polylog factor in a number of places.  Most notably, their bounds lose a $\Theta(\log^2 n)$ factor in space, which would make the query time in Corollary~\ref{cor:result_linear_space} a $\Theta(\log^6 n)$ factor larger.
    Second, we need \emph{all} elements in the inverse of a function, and furthermore, we want to always find all of them with probability 1---in contrast, they find a single one with constant probability.

Surprisingly, simply omitting their method for handling high-indegree elements makes significant progress on addressing these problems.
 First, their $\Theta(\log^2 n)$ loss in space is due to sampling high-indegree elements of $f$.  Because we can prove that each leaf has $O(\sigma)$ descendants, we do not have high-indegree elements,\footnote{We will see that we in fact have one high-indegree element---but since we know it up front we can handle it explicitly.} so we can simply skip that part of their data structure.  
    This means both removing the table of sampled elements they store, and avoiding a subroutine where they repeatedly hash to avoid elements stored in the table.
    Secondly, we show that because there are no high-indegree elements, their result (essentially as-is) finds \emph{each} element in the inverse with constant probability.  
    
    Past work has used either sampling techniques~\cite{McCauley24} or a binary search-like strategy~\cite{AronovCaDa24} to find all elements.  
    Sampling and binary search both lead to an extra $\log n$ term---but on top of this, both techniques require (in short) roughly $O(\sigma^3)$ time \emph{per element} in the inverse, for $O(\sigma^4 \log n)$ time overall.  Our result shows that this bound can be improved in a special case: so long as all elements have indegree at most $\sigma$, a simplified version of the original function inversion algorithm achieves $O(\sigma^3)$ total time.
    
Finally, we want to ensure that we find {all} elements in the inverse with probability $1$ (rather than finding each with constant probability); this does require an addition to their method.
The usual strategy, see e.g.~\cite{FiatNaor00,McCauley24,KopelowitzPorat19,GolovnevGuHo20} is to repeat the Fiat-Naor query strategy $\log n$ times to boost the probability; however, we cannot afford that many repeats.  Instead, we repeat $\Theta(\log \sigma)$ times to guarantee that each element can be found correctly by the function inversion process with probability $1 - 1/\sigma$. Then, we store all elements not found by the function inversion process in a dictionary to guarantee correctness.

\fi

\subsection{Related Work.}

\paragraph{Known Lower Bounds.}
Cohen-Addad et al.~\cite{CohenAddadFeSt19} showed that there is a constant $c > 1$ such that any pointer-machine data structure for \problem{} with $O((\smash[b]{\frac{\log n}{2k})}^k + \occ)$ query time must have space $\Omega(c^k n)$---in particular, such a query time cannot be obtained for linear space if $k = \omega(1)$.  
For sufficiently small $\occ$, our results are the first to give $O(\log^{O(k)} n)$ query time with $O(n)$ space, within a polynomial factor of this lower bound.

However, these bounds are not entirely comparable to ours.  First, our data structure is not in the pointer machine model, as function inversion uses a lookup table.  Second, the lower bounds of~\cite{CohenAddadFeSt19} 
\iffull
(and in particular, the range reporting bound of Afshani~\cite{Afshani12} used as the basis of the bound)
\else
are based on~\cite{Afshani12}, and
\fi
heavily rely on the query time being linear in the output, e.g.\ they rely on query performance of the form $O(Q(n,k) + \occ)$.  The bounds in this paper, on the other hand, have a significant cost per string being output.  This downside seems intrinsic to the function inversion technique, at least using known techniques---space-efficient function inversion seems to require $\omega(1)$ time per element returned (see also~\cite{BilleGoLe24}).

\paragraph{Function Inversion.}
 Function inversion was first studied by Hellman~\cite{Hellman80}, who showed that a \emph{uniform random} function $f:[N]\rightarrow[N]$ can be inverted in $\tilde{O}(N/\sigma)$ space and $\sigma^2$ time.  Fiat-Naor generalized this result to general $f$, in exchange for an increase in cost: any $f:[N]\rightarrow[N]$ can be inverted in $\tilde{O}(N/\sigma)$ space and $\sigma^3$ time.  Recently Golovnev et al.\ improved this result for large $\sigma$ (along with simplifying the previous analysis),  inverting $f$ in $\tilde{O}(N/\sigma)$ space and $O(\min\{\sigma^3, \sigma N^{1/2}\})$ time~\cite{GolovnevGuHo20}.  We note that this final result uses either shared randomness or a non-uniform algorithm; therefore, we do not use their result here.

\paragraph{Using Function Inversion for Data Structures.}
Function inversion was originally applied to achieve time-space tradeoffs for an indexing variant of the classic 3SUM problem~\cite{GolovnevGuHo20,KopelowitzPorat19}.
More recently, these ideas have been extended to the idea of collinearity testing~\cite{AronovEzSh23}.  Aronov et al.~\cite{AronovCaDa24} give a toolbox for using function inversion for a variety of data structure primitives.  

The work of Bille et al.~\cite{BilleGoLe24} also applies function inversion to a string indexing problem.  Specifically, they solve the \emph{gapped string indexing problem}: given two strings $P_1$ and $P_2$, find instances of $P_1$ and $P_2$ in $T$ with distance (in this context: the difference in the index where each starts) within a given range $[\alpha, \beta]$.  In terms of techniques, their result differs significantly from ours: their main technical contribution is a data structure that allows a reduction to a sequence of 3SUM indexing queries, after which it is possible to use the results of~\cite{GolovnevGuHo20} as a black box to solve their problem.

Corrigan-Gibbs and Kogan~\cite{CorriganGibbsKogan19} apply function inversion to obtain new time-space tradeoffs for the ``systematic substring-search problem'', which is essentially a special case of \problem{} with $k=0$, $\Sigma = \{0,1\}$ and $|q| \leq \log n$.   Interestingly, they show that this problem is \emph{equivalent} to function inversion: any improvement on one implies improvement for the other. 

The work of McCauley~\cite{McCauley24} also applies function inversion to a nearest neighbor problem; specifically, high-dimensional approximate similarity search under the $\ell_2$ metric.  At a high level, the techniques of their work are similar: they trim a tree to use less space, and then use function inversion to recover the missing pieces.  
\iffull
However, the different setting leads to significantly different technical challenges---in particular, much of their result focuses on setting internal space-saving parameters, and then combining a tree-like approach with a (purely) hashing-based approach; neither of these is an option in our setting.
\fi

\subsection{Paper Outline}
In Section~\ref{sec:preliminaries} we give a detailed description of our model and notation.
We also give a data structure to help us compare suffixes of $T$ and $q$ more quickly, even suffixes that we have altered during recursive calls.
 In Section~\ref{sec:cgl} we give our exposition of the CGL tree.
Then, in Section~\ref{sec:space_efficient_cgl}, we describe how to achieve the space-efficient results given in Section~\ref{sec:results}, including an exposition of how we can adjust the Fiat-Naor function inversion strategy to obtain Theorem~\ref{thm:function_inversion_all}.
 Finally, in Section~\ref{sec:improving_logs}, we show how to traverse our data structure quickly, avoiding an extra $\log n$ term in Theorems~\ref{thm:result_large_space} and~\ref{thm:result_improved_cgl}.

\iffull
\else
Most proofs have been omitted for space; they can be found in the full version of the paper.  
 \fi

\section{Preliminaries}
\label{sec:preliminaries}

\paragraph{String Notation and Definitions.}
The strings in this paper are assumed to consist of characters from a set $\Sigma$, each of which fits in $\Theta(1)$ machine words.
For a string $s$, and for any $i\in \{1, \ldots, |s|\}$, we use $s[i]$ to denote character $i$ of $s$; we $1$-index strings.

 We define the \defn{Hamming distance} of two strings $s_1$ and $s_2$ to be the number of indices $i\in \{1, \ldots, \min(|s_1|, |s_2|)\}$ such that $s_1[i]\neq s_2[i]$.  We denote this as $d_H(s_1, s_2)$.  
Note that this is the classic definition of Hamming distance if $|s_1| = |s_2|$; if the sizes are different, any further characters in the longer string are ignored.
This generalized notion of Hamming distance will make our exposition slightly easier.  For example, we can rephrase \problem{} as: find all suffixes of $T$ with distance at most $r$ from $q$.

We append to $T$ $2k+1$ copies of a special character $\$$ that does not appear in $q$ or $T$.  When we refer to ``suffixes'' of $T$, we only refer to the $n$ suffixes that do not start with $\$$.  With these special characters appended, no suffix of $T$ is a prefix of another, even if $k$ characters of each suffix are changed.  This has an added benefit that with our definition of distance, all suffixes of $T$ with length less than $|q|$ have distance greater than $k$ from $q$.

The \defn{longest common prefix} between strings $s_1$ and $s_2$, denoted $LCP(s_1,s_2)$,  is  the largest $i$ such that for all $j\leq i$, $s_1[j] = s_2[j]$. If $s_1$ is a prefix of $s_2$ or vice versa then $LCP(s_1, s_2) = \min\{|s_1|,|s_2|\}$.  

\paragraph{Altered Strings.}
Our data structure is recursive.  While recursing, we will make character substitutions on both the query string and suffixes of $T$.  We will keep track of the characters that have been changed using \defn{alterations}, which we define here.

An \defn{alteration} consists of an index and a character.
 A \defn{$k$-altered string} consists of a string $s$ and a sequence of \emph{at most} $k$ alterations $a$; we also call the string \defn{altered} without specifying $k$.  
 If $s$ is the query $q$, we refer to $(s,a)$ as an \defn{altered query}.
The resulting string
can be obtained by starting with $s$, and, for each alteration in $a$ consisting of an index $i$ and character $c$, replacing $s[i]$ with $c$.
We emphasize that $k$ is an upper bound on the number of alterations: much of our data structure, built with maximum search radius $k$, consists of $k$-altered suffixes of $T$. This means that each suffix has at most $k$ alterations, but could have fewer.
We denote an altered string using the notation $(s, a)$.

\paragraph{Other Notation and Definitions.}
For a function $f$, we use $f^{-1}(y)$ to denote the set of all $x$ such that $f(x) = y$.

 All logs in this paper are base $2$.
All bounds in this paper are in the word RAM model with words of size $\Omega(\log n)$, and all space bounds are in the number of required words.

We use $[x]$ to denote the set $\{1, \ldots, x\}$ for any $x$.

We use $n$ to denote the length of $T$, $k$ the maximum search radius of the data structure, and $\sigma$ the space-saving parameter: we want $O(n\Smab/\sigma)$ space.  For Theorem~\ref{thm:result_small_space}, our goal is $o(n)$ space, i.e.\ $\sigma > \Smab$.  In this case, it is often useful to use a variable $\tau = \lceil\sigma/\Smab\rceil$; that way, the space bound can be rewritten $O(n/\tau)$.

Our results assume that any character of $T$ can be accessed in $O(1)$ time.  If $T$ is stored in such a way that a longer time is required (e.g., if it is compressed) the bounds still hold with proportionally increased preprocessing and query time.

\iffull
For Theorems~\ref{thm:result_large_space} and~\ref{thm:result_small_space} the query time is $\Omega(n)$---and can therefore be accomplished with a linear scan---if $\sigma > n^{1/3}$ or $k > (\log n)/2$; with this in mind we assume that $\sigma < n^{1/3}$ and $k < (\log n)/2$ for the remainder of the paper.  Similarly, we assume $k \geq 1$ throughout, as otherwise the problem reduces to a dictionary lookup and can be solved with known techniques.
\fi

\subsection{First \texorpdfstring{$k+1$}{k+1} Mismatches Data Structure}
\label{sec:lcp}

Our results rely on a crucial subroutine, the \defn{first $k+1$ mismatches}: given two strings $\hat{s_1}$ and $\hat{s_2}$, where $\hat{s_2}$ is a substring of $T$ with at most $k+1$ alterations, and $\hat{s_1}$ is either a substring of $q$ with at most $k+1$ alterations or a substring of $T$ with at most $k+1$ alterations, find the first $k+1$ indices $i$ where $\hat{s_1}[i] \neq \hat{s_2}[i]$.    
We fix $\hat{s_1}$ and $\hat{s_2}$ for the remainder of this section.
 Let $\hat{s_1} = (s_1, a_1)$ and $\hat{s_2} = (s_2, a_2)$, where $s_1$ is a substring of $T$ or $q$, and $s_2$ is a substring of $T$.  

This subroutine generalizes two queries we need for our data structure: (1) finding the first place where $\hat{s_1}$ and $\hat{s_2}$ differ (i.e., computing $LCP(\hat{s_1}, \hat{s_2})$), and (2) checking if $d_H(\hat{s_1}, \hat{s_2})\leq k$.

Note that the strings are $(k+1)$-altered rather than $k$-altered; this is for Section~\ref{sec:improving_logs}. While Sections~\ref{sec:cgl} and~\ref{sec:space_efficient_cgl} only use $k$-altered strings, Section~\ref{sec:improving_logs} uses $(k+1)$-altered strings.

Our strategy for this data structure is to use, as a black box, a known data structure to find the longest common prefix between two \emph{unaltered} suffixes: either two suffixes of $T$, or a suffix of $T$ and a suffix of $q$.  We then recursively alter each string to compute the first $k+1$ mismatches between the altered strings.   This strategy is similar to the strategy referred to as ``kangaroo jumps'' in the literature; see~\cite{NicolaeRajasekaran17}. 
 For $\tau = 1$ a similar result (using a similar strategy) was given independently in~\cite[Lemma 3.3]{KociumakaRadoszewski26}.

\paragraph{Finding the Mismatches.}
We first find the LCP without taking alterations into account: that is to say, we find $LCP(s_1, s_2)$ (we will specify how to do this momentarily).
If the LCP is before the first alteration in $a_1$ or $a_2$, we have found a place where $\hat{s_1}$ and $\hat{s_2}$ differ, and we append it to the list of mismatches.  If it is at or after the first alteration to one of the strings, we can check if the strings differ at that location manually, appending to the list if so.  Then, we recurse on the suffix of each string after the first alteration.

Since each string has $\leq k+1$ alterations, and we only list out $\leq k+1$ mismatches, we recurse at most $3k+3 = O(k)$ times.

\paragraph{Longest Common Prefix Data Structures.}
We use one of the following two data structures to find $LCP(s_1, s_2)$.

First, consider the case where the total space used is $\Omega(n)$, as in Theorem~\ref{thm:result_large_space}. This means we can use $O(n)$ space for the longest common prefix data structure.  In this case, Cole, Gottlieb, and Lewenstein~\cite{ColeGoLe04} show that it is possible to combine a suffix array on $T$ with a lowest common ancestor data structure to allow longest common prefix queries in $O(1)$ time.  They further point out that all suffixes of $q$ can be added to the tree in $O(|q|)$ time.

Now, consider the case with $O(n/\tau)$ space for $\tau > 1$, as in Theorem~\ref{thm:result_small_space}.  
We split into two further cases.  

To find the longest common prefix between suffixes of $T$ and suffixes of $q$, 
we use the Monte-Carlo data structure of Bille et al.~\cite{BilleGoKn15}.  During preprocessing, we preprocess $T$ using their data structure in $O(n)$ time.  On each query $q$, we also preprocess $q$ using their data structure in $O(|q|)$ time.  
\iffull
In Appendix~\ref{sec:lcp_discussion}, 
\else
In the full version of the paper,
\fi
we show how 
the data structure for $T$ and $q$ can be merged in $O(|q|)$ time, allowing us to answer longest common prefix queries between suffixes of the two strings---we also discuss why other longest common prefix data structures do not seem to work for our use case.
Each query requires $O(\tau + \log n)$ time and gives a correct answer with high probability.

To find the longest common prefix between two suffixes of $T$, we use the 
\iffull
deterministic 
\fi
data structure of Kosolobov and Sivukin~\cite{KosolobovSivukin24}, which requires $O(n/\tau)$ space and preprocessing, and $O(\tau)$ query time.  

These results immediately imply the following lemma.
 \begin{lemma}
    \label{lem:lcp}
   Let $\hat{s_1}$ be either a $(k+1)$-altered substring of $T$ or a $(k+1)$-altered substring of $q$, and let $\hat{s_2}$ be a $(k+1)$-altered substring of $T$.
   There is an $O(n/\tau)$ space data structure that can be built with $O(n)$ preprocessing on $T$ and $O(|q|)$ preprocessing on $q$ such that:
\begin{itemize}[topsep=0pt, noitemsep]
   \item If $\tau = 1$ or $s_1$ is from $T$, then  the data structure can find the first $k+1$ mismatches between $\hat{s_1}$ and $\hat{s_2}$ in $O(k\tau)$ time.
   \item If $\tau > 1$ and $s_1$ is from $q$, the data structure can find the first $k+1$ mismatches between $\hat{s_1}$ and $\hat{s_2}$ with high probability in $O(k\tau + k\log n)$ time.
   \end{itemize}
   \iffull
   From there, we can find $LCP(\hat{s_1}, \hat{s_2})$ and determine if $d_H(\hat{s_1},\hat{s_2}) \leq k$ in $O(1)$ additional time.
   \fi
\end{lemma}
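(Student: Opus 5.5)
The proof combines the ``kangaroo jump'' procedure described under \emph{Finding the Mismatches} with the two black-box LCP structures listed above. The plan is to show that the procedure is correct, that it makes $O(k)$ calls to an unaltered-LCP oracle, and that each call costs $O(1)$, $O(\tau)$, or $O(\tau+\log n)$ depending on the case.

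\textbf{Step 1: preprocessing.} Merge the alteration lists $a_1$ and $a_2$ into one sorted list of positions. Since each list has at most $k+1$ entries and they can be kept sorted, this takes $O(k)$ time; if they are unsorted, $O(k\log k)$ is absorbed by $\tilde O$ but should be avoided. Then maintain a pointer $p$ into both strings.

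At each iteration, let $e$ be the next alteration position at or after $p$ in either string. Query the unaltered $LCP(s_1[p..], s_2[p..])$ via the oracle.
\begin{itemize}
\item If $p + LCP < e$, then position $p+LCP$ is a genuine mismatch, since neither string is altered there. Record it and set $p \leftarrow p+LCP+1$.
\item Otherwise, the altered strings agree on $[p,e)$. Compare the altered characters at $e$ directly in $O(1)$, record a mismatch if they differ, and set $p\leftarrow e+1$.
\end{itemize}
Correctness follows by induction on $p$: every position below $p$ has been classified correctly, because on unaltered stretches the altered strings coincide with $s_1, s_2$.

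Each iteration either records a mismatch (at most $k+1$ times) or consumes an alteration (at most $2k+2$ times). So there are at most $3k+3$ oracle calls. We also stop when $p$ exceeds the shorter length, which the \$ padding makes well defined.

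\textbf{Step 2: cost per oracle call.} For $\tau=1$, use the suffix tree of $T$ with $q$'s suffixes inserted in $O(|q|)$ time, plus LCA, as in~\cite{ColeGoLe04}. This gives $O(1)$ per call and $O(n)$ space.

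For $\tau>1$:
\begin{itemize}
\item Suffix pairs from $T$ use~\cite{KosolobovSivukin24}: $O(n/\tau)$ space and $O(\tau)$ per call, deterministically.
\item Pairs with $s_1$ from $q$ use the Bille et al.\ structure~\cite{BilleGoKn15} merged with $q$'s structure in $O(|q|)$ time, as deferred to Appendix~\ref{sec:lcp_discussion}. This gives $O(\tau+\log n)$ per call, correct with high probability.
\end{itemize}
Multiplying by $O(k)$ calls gives the two stated bounds. In the randomized case, a union bound over $O(k)$ calls keeps the failure probability polynomially small, provided the constant $c$ in each call's $1-1/n^c$ guarantee is raised slightly.

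\textbf{Step 3: the derived queries.} $LCP(\hat{s_1},\hat{s_2})$ is the first recorded mismatch minus one, or the shorter length if none is recorded. $d_H\le k$ holds iff fewer than $k+1$ mismatches are recorded. Both take $O(1)$ time.

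The only nontrivial ingredient is the merged $T$/$q$ LCP structure for $\tau>1$, which the statement treats as established in the appendix. The rest is bookkeeping: the pointer arithmetic when alterations in $a_1$ and $a_2$ coincide or fall within the current LCP run, and keeping the preprocessing at $O(n)$ and $O(|q|)$ independent of $k$.
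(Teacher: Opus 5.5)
Your proposal is correct and follows the paper's argument. It uses kangaroo jumps that interleave unaltered LCP queries with direct checks at alteration positions, for at most $3k+3$ oracle calls, each answered in $O(1)$ by the suffix tree with LCA ($\tau=1$), in $O(\tau)$ by Kosolobov--Sivukin ($T$ versus $T$), or in $O(\tau+\log n)$ with high probability by the merged Bille et al.\ structure ($T$ versus $q$). Your point that the alteration lists should be kept sorted by index, at $O(k)$ cost per new alteration, is something the paper leaves implicit, and it does not affect any of the stated bounds.
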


With high probability means with probability $\geq 1 - 1/n^c$ for any constant $c$ (increasing $c$ increases the constant in the big-$O$ bounds).
During any query, the number of calls to Lemma~\ref{lem:lcp} can be loosely upper bounded by $O(n)$.  
Therefore, 
\iffull
since each call is correct with high probability, 
\fi
by a standard union bound, \emph{all} calls to Lemma~\ref{lem:lcp} during a given query are correct with probability $\geq 1 - 1/n^{c-1}$, i.e.\ with high probability after incrementing $c$.

\subsection{Dictionary Queries}
\label{sec:dictionary_version}
In this work, we phrase the problem as finding substrings of a text $T$ at Hamming distance $k$ from a query $q$.  However, many applications involve a slightly different problem: we are given a set of strings $s_1, s_2, \ldots, s_x$, and the goal is to find all strings $s_i$ with distance at most $k$ from $q$.  Following~\cite{ColeGoLe04}, we refer to this as the \defn{dictionary queries} variant of the problem, as opposed to the \defn{text indexing} variant used elsewhere in the paper.  In the dictionary queries variant, $n$ is the total length of all strings in the set.

It is well known that the dictionary queries variant reduces to the text indexing variant.  Assume that characters \texttt{1}, \texttt{2}, \ldots are not in $\Sigma$.  We interleave these characters with those of $q$ to create a new string: $q[1]~ \texttt{1}~ q[2]~ \texttt{2}~ q[3]~ \texttt{3}\ldots$ and so on.  We modify $s_1, \ldots, s_x$ the same way, e.g.\ $s_1$ becomes $s_1[1]~\texttt{1}~s_1[2]~\texttt{2}~s_1[3]~\texttt{3}\ldots$ and so on. We then concatenate the resulting strings to obtain $T$.  
The transformed $q$ has distance $\leq k$ from a substring of $T$ exactly when the original query has distance $\leq k$ from a string in $s_1, \ldots, s_x$.

\section{A Recursive Formulation of the CGL Tree}
\label{sec:cgl}

 In this section, we describe RecursiveCGL, our exposition of the CGL tree, and analyze it to prove Theorem~\ref{thm:result_improved_cgl}. 

Before formally defining our data structure, we give intuition behind the data structure, and state the key observations we use in its construction.

\paragraph{The Plan.}  To begin, we describe our strategy using an analogy with a static binary search tree (BST).  In a BST, each node stores a pivot $p$.  To ensure height $\log_2 n$, the pivot $p$ is the median of the node's descendants; the node has one child for the elements $< p$, and another for the elements $> p$.
A query traverses the tree recursively. At each node, $q$ is compared to $p$; the left node is traversed if $q < p$ and the right node is traversed if $q > p$.

We use a similar structure to allow us to find the suffixes of $T$ with Hamming distance at most $k$ from the query $q$.
 We define the data structure as a recursive tree, where each node stores a pivot $p$.

Now, the goal is to replicate a BST-style recursion for Hamming distance.  First, when constructing the data structure, we need to partition the strings $s$ in each node using their relationship with the pivot $p$---we will need to generalize beyond simply storing strings $< p$ and strings $> p$.   Second, when querying the data structure, our plan will be to compare $q$ to $p$, and use the result of this comparison to make recursive queries.

\paragraph{Pivot-Altering Strings.}
One way that our data structure differs from both the static BST and the classic CGL tree is that we recursively change the strings when building our tree, and we recursively change the query string itself during queries.  We define these changes using the pivot string $p$.

For any string $s$, we can \defn{pivot-alter} $s$ to match the pivot string $p$ by one more character: in particular, the \defn{pivot-altered} $s$ can be obtained by calculating $i = LCP(s,p)$, and by appending the alteration $(i+1, p[i+1])$ to $s$.  We will generally denote the result of this process $\hat{s}$ for a fixed $p$.

\paragraph{Recursive Idea of the Tree: Intuition.}
Recall that the goal of RecursiveCGL is to, for each tree node, compare $q$ to $p$, and recursively traverse the tree using the result of this comparison.
The following observation is the heart of this process in RecursiveCGL (and is essentially the main idea behind the CGL Tree~\cite{ColeGoLe04}).  
We first state our observation intuitively, then give a more formal statement.  

For any query $q$ and any string $s \neq q$, let $\ell$ be the first index where $q[\ell] \neq s[\ell]$.  If $p[\ell] = q[\ell]$, then altering $q$ makes it more similar to $s$: $d_H(\hat{q}, s) = d_H(q, s) - 1$.  Alternatively, if $p[\ell] = s[\ell]$, then altering $s$ makes it more similar to $q$: $d_H(q, \hat{s}) = d_H(q,s) - 1$.  If $p[\ell] \neq q[\ell]$ and $p[\ell] \neq s[\ell]$, then altering both $q$ and $s$ makes them more similar: $d_H(\hat{q}, \hat{s}) = d_H(q,s) - 1$.
We can rephrase these conditions using the LCP:  if $LCP(q,p) < LCP(s,p)$, then $p[\ell] = s[\ell]$ but $p[\ell] \neq q[\ell]$.

Let us briefly discuss why this observation is useful.  When creating the tree, we partition all strings into $O(1)$ sets according to their LCP with the pivot string $p$.  On a query $q$, we begin by calculating $LCP(q,p)$.  Let's look at two examples of how we recurse.
\begin{enumerate}[topsep=0pt, noitemsep]
\item if a set contains only strings $s$ with $LCP(s,p) > LCP(q,p)$, we can recursively query that set using $\hat{q}$, reducing the search radius by $1$.  
\item if a set contains only strings $s$ with $LCP(s,p) < LCP(q,p)$ we can recursively query all altered strings in the set using $q$, again reducing the search radius by $1$.  
\end{enumerate}
Since the search radius can only be decreased $k$ times, this process only searches a bounded number of nodes in the tree.

 \paragraph{Pivot-Altered String Structure.}
Now we state our observation more formally.  We begin with the case that $q$ is not a prefix of $p$.

 \begin{restatable}{observation}{correctnesscoreobs}
\label{obs:correctness_core}
Let $q$ and $s$ be arbitrary strings, and 
let $\hat{q}$ and $\hat{s}$ be the pivot-altered $q$ and $s$ respectively for a pivot string $p$. Let $i = LCP(q,p)$ and $j = LCP(s,p)$. 
Then if 
$i < |q|$ and $j < \min\{|s|, |p|\}$:
\footnote{The requirement that $j < \min\{|s|, |p|\}$ will always be satisfied: $s$ and $p$ will be $k$-altered suffixes of $T$; since $T$ ends with $2k+1$ copies of a special character \$, there must be some place where $s$ and $p$ differ.}
\begin{enumerate}[topsep=2pt, noitemsep]
	\item If $i < j$, then $d_H(\hat{q}, s) = d_H(q,s) - 1$;
	\item if $i > j$, then $d_H(q, \hat{s}) = d_H(q,s) - 1$;
	\item if $i = j$ and $q[i+1] \neq s[j+1]$, then $d_H(\hat{q}, \hat{s}) = d_H(q,s) - 1$.
\end{enumerate}
     
\end{restatable}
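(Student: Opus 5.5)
The argument compares the strings position by position; only position $m+1$ matters, where $m=\min\{i,j\}$. Recall that pivot-altering $q$ changes only position $i+1$, setting $\hat{q}[i+1]=p[i+1]$, and pivot-altering $s$ changes only position $j+1$, setting $\hat{s}[j+1]=p[j+1]$. The generalized Hamming distance sums over positions up to $\min(|q|,|s|)$, and neither alteration changes a length. So the change in distance equals the change in the mismatch indicator at the altered position(s), provided those positions lie within the compared range.

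First I would establish the basic facts about position $m+1$. For every index $t\le m$ we have $q[t]=p[t]=s[t]$. Next I would check that the altered positions are compared positions. In case 1 the altered position is $i+1\le j<|s|$, and $i+1\le|q|$ by the hypothesis $i<|q|$. In case 2 it is $j+1$, and $j<i<|q|$ together with $j<|s|$ puts it in range. In case 3 the hypotheses $i<|q|$ and $j<|s|$ put $i+1=j+1$ in range for both strings. I would also note that $i<|p|$ in cases 1 and 3, since $i\le j<|p|$, so $p[i+1]$ is defined. In case 2 only $p[j+1]$ is needed, and $j<|p|$ holds by hypothesis.

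Then I would treat the three cases.

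Case 1 ($i<j$). Since $i+1\le j$, we have $s[i+1]=p[i+1]$. Since $i=LCP(q,p)$ and $i<|q|$, $i<|p|$, we have $q[i+1]\ne p[i+1]$. So position $i+1$ is a mismatch between $q$ and $s$, and after the alteration $\hat{q}[i+1]=p[i+1]=s[i+1]$ matches. All other positions are unchanged, so the distance drops by exactly one.

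Case 2 ($i>j$). This is symmetric: $q[j+1]=p[j+1]\ne s[j+1]$, and altering $s$ at $j+1$ fixes that mismatch.

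Case 3 ($i=j$ and $q[i+1]\ne s[i+1]$). Position $i+1$ is a mismatch by hypothesis, and after both alterations $\hat{q}[i+1]=p[i+1]=\hat{s}[i+1]$. Again no other position changes.

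The main subtlety is the boundary bookkeeping with the generalized Hamming distance: one has to confirm that the altered index is at most $\min(|q|,|s|)$ and that $p[i+1]$ or $p[j+1]$ is defined. That is exactly where the hypotheses $i<|q|$ and $j<\min\{|s|,|p|\}$ are used. I would also remark that if $q$ or $s$ already carries an earlier alteration at the same index, the newly appended alteration overrides it, so the resulting string is as described.
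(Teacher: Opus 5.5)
Your proposal is correct and follows essentially the same argument as the paper. The paper splits $d_H$ into the prefix up to $\min\{i,j\}$ (distance $0$), the single position $\min\{i,j\}+1$ (a mismatch that the alteration or alterations turn into a match), and the untouched remainder. Your explicit range checks, that the altered index is at most $\min(|q|,|s|)$ and that $p[i+1]$ or $p[j+1]$ is defined, are a small piece of rigor the paper leaves implicit.
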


\begin{table}[t]
 \captionsetup{position=below}
 \begin{center}
    \renewcommand{\arraystretch}{1.5}
    \begin{tabular}{p{15ex} p{15ex} p{15ex}}
     $i < j$ & $i > j$ & $i = j$ \\
     \hline 
    $p$: \texttt{BANANA} & $p$: \texttt{BANANA} & $p$: \texttt{BANANA} \\
    $q$: \texttt{\hi{C}ANANC}\newline $s$: \texttt{BA\hi{C}ACA} & $q$: \texttt{BANA\hi{C}C} \newline $s$: \texttt{B\hi{C}NANA} & $q$: \texttt{BA\hi{C}CNA} \newline $s$: \texttt{BA\hi{D}ANA} \\
    $\hat{q}$: {\texttt{\hi{B}ANANC}} \newline $s$: \texttt{BA\hi{C}ACA} & $q$: \texttt{BANA\hi{C}C} \newline $\hat{s}$: \texttt{B\hi{A}NANA} & $\hat{q}$: \texttt{BA\hi{N}CNA} \newline $\hat{s}$: \texttt{BA\hi{N}ANA} 
    \end{tabular}
    \end{center}
\caption{An example for Observation~\ref{obs:correctness_core}. The first character mismatched between $q$ or $s$ and $p$---that is, the $i+1$st character of $q$ and $j+1$st character of $s$, which are altered in $\hat{q}$ and $\hat{s}$---is highlighted in red.}
\end{table}

If $i = |q|$, then the query is a prefix of the pivot string, and the structure changes
\iffull
---in particular, if $s$ first differs from the pivot string after $|q|$ characters, then $d_H(q,s) = 0$, so $s$ is one of the solution strings.  
\else
. Crucially, if $LCP(s,p) > |q|$, then $d_H(q,s) = 0$.
\fi

 \begin{restatable}{observation}{correctnesscorematchobs}
\label{obs:correctness_core_match}
Let $q$ and $s$ be arbitrary strings, and 
let $\hat{q}$ and $\hat{s}$ be the pivot-altered $q$ and $s$ respectively for a pivot string $p$. Let $i = LCP(q,p)$ and $j = LCP(s,p)$.  Then if $i = |q|$  and $j < \min\{|s|, |p|\}$:
\begin{enumerate}[topsep=2pt,noitemsep]
	\item If $j \geq |q|$ then $d_H(q, s) = 0$;
	\item if $j < |q|$ then $d_H(q, \hat{s}) = d_H(q,s) - 1$.
\end{enumerate}
\end{restatable}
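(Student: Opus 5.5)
We argue directly from the definitions of LCP, pivot-alteration, and the generalized Hamming distance $d_H$, which compares only the first $\min(|q|,|s|)$ positions. Throughout, $i = |q|$, so $q$ is a prefix of $p$: $q[t] = p[t]$ for all $t \le |q|$. Since $j = LCP(s,p) < \min\{|s|,|p|\}$, the position $j+1$ exists in both $s$ and $p$, and $s[j+1] \neq p[j+1]$. By definition, $\hat{s}$ is $s$ with position $j+1$ replaced by $p[j+1]$.

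\textbf{Case $j \geq |q|$.} For every $t \le |q| \le j$ we have $s[t] = p[t] = q[t]$. The first chain uses $t \le j = LCP(s,p)$; the second uses $q$ being a prefix of $p$. Hence $q$ and $s$ agree on all compared positions, so $d_H(q,s) = 0$. Note that $|s| > j \ge |q|$, so the comparison range is exactly $[|q|]$ and no position is dropped.

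\textbf{Case $j < |q|$.} Position $j+1 \le |q|$ lies within both $q$ and $s$, and it is in the compared range because $|s| > j$. At this position, $q[j+1] = p[j+1] \neq s[j+1]$, so $q$ and $s$ mismatch there. After the alteration, $\hat{s}[j+1] = p[j+1] = q[j+1]$, so the mismatch becomes a match. Every other position of $\hat{s}$ equals the corresponding position of $s$, and the compared range is unchanged since $|\hat{s}| = |s|$. Therefore $d_H(q,\hat{s}) = d_H(q,s) - 1$.

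The only delicate point is bookkeeping with the generalized Hamming distance: we must check that position $j+1$ actually falls within $\min(|q|,|s|)$, and that the alteration changes neither string length. Both follow from $j < \min\{|s|,|p|\}$ together with the case assumption. Neither step needs any earlier result beyond the definitions.
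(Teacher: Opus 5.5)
Your proof is correct and follows essentially the same route as the paper: in the first case both arguments use that $q$ and $s$ each agree with $p$ on the first $|q|$ positions. In the second case the paper splits $d_H(q,s)$ into the length-$j$ prefix, position $j+1$, and the remainder, which is your observation that only position $j+1$ changes, turning a mismatch into a match; your explicit check that $j+1 \le \min\{|q|,|s|\}$ and that the alteration preserves length is slightly more careful bookkeeping than the paper's write-up.
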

\iffull
We provide a proof of Observations~\ref{obs:correctness_core} and~\ref{obs:correctness_core_match} in Appendix~\ref{sec:omitted}.
\fi

\subsection{The Recursive CGL Tree}
\label{sec:recursive_data_structure}
 In this section we formally describe how to build a recursive data structure, RecursiveCGL, to solve the \problem{} problem.  
The goal of the data structure is to store all suffixes of $T$ to handle queries with distance bound at most $k$.
However, as mentioned above, as we recurse we alter the stored suffixes and change the search radius. 
 Therefore, we describe the data structure recursively as storing a set $S$ of $k$-altered suffixes of $T$, which can handle queries with distance bound at most $k'$ for some $k' \leq k$; we denote this structure RecursiveCGL$(S,k')$. 

To begin, build the First $k+1$ Mismatches Data Structure from Lemma~\ref{lem:lcp} on $T$ with $\tau = 1$.  (In Section~\ref{sec:space_efficient_cgl}, when we make the data structure space-efficient, we may use larger values of $\tau$.)

Our data structure is recursive.  
The initial recursive call is to RecursiveCGL$(S, k)$ where $S$ consists of all suffixes of $T$ (no alterations).  
In the base case 
 $|S| \leq 1$, RecursiveCGL$(S, k')$ stores $S$.

From now on we assume that $|S| > 1$ and describe how to build RecursiveCGL$(S,k')$ recursively.
First, we calculate the median of $S$ in lexicographic order; we call this the \defn{pivot string} $p$.

We will now use $p$ to partition the strings in $S$.
Our partition is defined with two goals 
in mind: (1) the partition should allow us to search for strings using Observation~\ref{obs:correctness_core}, and (2) the size of each partition should be at most $|S|/2$ to ensure tree height $\lceil \log_2 n\rceil$.

  Let $m$ be the median $LCP(s,p)$ for all strings $s \in S$. We partition $S\setminus \{p\}$ into the following four sets
  \iffull
, which we call the \defn{recursive subsets} of $S$.  
  \else
.
\fi

\begin{definition}[Recursive Subsets]
\label{def:recursive_subsets}
For any set of strings $S$ with pivot $p$, where $m$ is the median value of $LCP(s,p)$ for $s\in S$,
\begin{itemize}[noitemsep,topsep=0pt]
	\item $S_{< m}$ consists of all strings $s \in S$ with $LCP(s,p) < m$;
	\item $S_{> m}$ consists of all strings $s \in S$ with $ LCP(s,p)> m$;
	\item $S_{< \ell}$ consists of all strings $s \in S$ with  $LCP(s,p) = m$ that occur before $p$ in lexicographic order; and 
	\item $S_{> \ell}$ consists of all strings $s \in S$ with $LCP(s,p) = m$ that occur after $p$ in lexicographic order.
\end{itemize}
\end{definition}
 \begin{table}[t]
 \captionsetup{position=below}
 \label{table:recursive_subsets}
 \begin{center}
     $S$ = 
     \texttt{AN\$} ,
     \texttt{BACDA\$},
      \texttt{BANAAAA\$} ,
      \texttt{BANANA\$},
      \texttt{BANAZAAAA\$},\\
      \texttt{BANCNAB\$} ,
      \texttt{BANCZZ\$} ,
     \texttt{CAT\$}
     
     \smallskip
 $p = \texttt{BANANA\$}$ \\
 
 \bigskip
    \renewcommand{\arraystretch}{1.2}
    \begin{tabular}{p{12ex} p{12ex} p{12ex} p{12ex}}
    \hline
     $S_{< m}$ & $S_{< \ell}$ & $S_{> \ell} $ & $S_{> m}$ \\
     \texttt{CAT\$} \newline \texttt{AN\$} \newline \texttt{BACDA\$} &  & \texttt{BANCNAB\$} \newline \texttt{BANCZZ\$} & \texttt{BANAAAA\$} \newline \texttt{BANAZAAAA\$}\\
     \hline
    \end{tabular}
    
    \bigskip
    \begin{tabular}{p{12ex} p{12ex} p{12ex} p{12ex}}
     \hline
     $\hat{S}_{< m}$ & $\hat{S}_{< \ell}$ & $\hat{S}_{> \ell} $ & $\hat{S}_{> m}$ \\
     \texttt{BAT\$} \newline \texttt{BN\$} \newline \texttt{BANDA\$} &  & \texttt{BANANAB\$} \newline \texttt{BANAZZ\$} & \texttt{BANANAA\$} \newline \texttt{BANANAAAA\$}\\
     \hline
    \end{tabular}
    \end{center}
\caption{An example of recursive subsets and altered subsets for $p = \texttt{BANANA\$}$.  In this example, $m = 3$ and  $S_{< \ell}$ is empty.  In reality, there are $2k+1$ copies of $\$$ at the end of each string, but these are truncated for simplicity.}
\end{table}
\begin{lemma}
\label{lem:set_size}
\iffull
All recursive subsets  $S_{< m}, S_{> m}, S_{< \ell}, S_{> \ell}$ have size at most $|S|/2$.
\else
$S_{< m}, S_{> m}, S_{< \ell}, S_{> \ell}$ all have size at most $|S|/2$.
\fi
\end{lemma}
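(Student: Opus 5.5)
We bound the two "median-split" sets using the choice of $m$, and the two "lexicographic" sets using the choice of $p$. Throughout, we interpret "median" in the standard way: at most $|S|/2$ elements lie strictly below it and at most $|S|/2$ lie strictly above it. This holds for both the lexicographic median $p$ of $S$ and the median $m$ of the multiset $\{LCP(s,p) : s\in S\}$. The pivot $p$ itself is included in this multiset with $LCP(p,p)=|p|$, which is the largest possible value. It is excluded from every recursive subset, since we partition $S\setminus\{p\}$.

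\textbf{The sets $S_{<m}$ and $S_{>m}$.} These are exactly the strings whose LCP value is strictly below, respectively strictly above, the median $m$. By the median property, each contains at most $|S|/2$ strings. The only care needed is the tie-breaking convention when $|S|$ is even. Choosing either the lower or the upper median still leaves at most $|S|/2$ values strictly on each side, so the bound holds in both cases.

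\textbf{The sets $S_{<\ell}$ and $S_{>\ell}$.} Every string in $S_{<\ell}$ lies strictly before $p$ in lexicographic order. Since $p$ is the lexicographic median of $S$, at most $|S|/2$ strings of $S$ precede $p$, so $|S_{<\ell}|\le |S|/2$. Symmetrically, every string in $S_{>\ell}$ follows $p$, so $|S_{>\ell}|\le |S|/2$. Note that this argument does not use the condition $LCP(s,p)=m$ at all. The strings of $S$ are distinct: the $2k+1$ terminal \$ characters ensure that distinct altered suffixes never coincide. Hence "before $p$" and "after $p$" are strict and well defined.

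\textbf{Main obstacle.} The only real subtlety is pinning down the median conventions for even $|S|$, and checking that $p$, which has the maximal LCP value, does not spoil the count for $S_{>m}$. I would fix, once and for all, that a median is any element with at most $\lfloor |S|/2\rfloor$ elements strictly on each side. Such an element always exists, and under this convention all four bounds follow directly.
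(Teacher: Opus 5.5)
Your proof is correct and follows the paper's argument exactly: $S_{<\ell}$ and $S_{>\ell}$ are contained in the strings strictly before (respectively after) the lexicographic median $p$, and $S_{<m}$, $S_{>m}$ are the strings whose LCP with $p$ lies strictly below (respectively above) the median value $m$. Your extra care about median conventions is fine. The distinctness remark is not needed for the bound, and its justification is slightly off: it holds because each node's set contains at most one altered copy of each suffix and distinct suffixes have distinct lengths, not because of the \$ padding.
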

\iffull
\begin{proof}
There are at most $|S|/2$ strings that are before $p$ in lexicographic order; since $S_{< \ell}$ is a subset of these strings, it has size at most $|S|/2$.  The same argument applies for $S_{> \ell}$.
Similarly, since $m$ is the median $LCP$, $S_{< m}$ and $S_{> m}$ both have size at most $|S|/2$.
\end{proof}
\fi

For each of the four recursive subsets, we create an \defn{altered subset} by pivot-altering every string in the recursive set (recall that to ``pivot-alter'' an altered suffix $s$, we alter it so that it matches $p$ by one more character). 
Thus, 
$\hat{S}_{< m}$ stores the pivot-altered $s$ for all $s\in S_{< m}$; we similarly define $\hat{S}_{< \ell}, \hat{S}_{>\ell}$, and $\hat{S}_{> m}$.

We are ready to define our recursive data structure.
 \begin{itemize}[topsep=0pt,noitemsep]
	\item For each recursive subset, we build the data structure recursively with distance bound $k'$: we build RecursiveCGL$(S_{<m}, k')$, RecursiveCGL$(S_{< \ell}, k')$, RecursiveCGL$(S_{> \ell}, k')$, and 
    
    RecursiveCGL$(S_{> m}, k')$.
    
	\item If $k' > 0$, for each altered subset except $\hat{S}_{> m}$, we build the data structure recursively with distance bound $k'-1$: we build  
    RecursiveCGL$(\hat{S}_{<m}, k'-1)$, RecursiveCGL$(\hat{S}_{< \ell}, k' -1)$, 
	and RecursiveCGL$(\hat{S}_{> \ell}, k'-1)$.  We do not need to build RecursiveCGL$(\hat{S}_{> m},k'-1)$; see the discussion below.  
\end{itemize}

These recursive data structures form a tree structure: if $k' > 0$, RecursiveCGL$(S', k')$ consists of the pivot string, and $7$ pointers to the recursive data structures defined above (four recursive subsets, and three altered subsets).    
The  pivot string consists of a pointer to a suffix of $T$ and at most $k$ alterations, which requires $O(k)$ space.  Combining with the space to store the $O(1)$ pointers, we can store RecursiveCGL$(S',k')$ in $O(k)$ space.

 If $k' = 0$, we recurse on the recursive subsets, but we omit the altered sets; thus RecursiveCGL$(S', 0)$ consists of the pivot string, and $4$ pointers to recursive data structures on the recursive subsets.  Again, we can store the pivot using a pointer and the list of operations, giving $O(k)$ space per node.

\paragraph{The Recursive Data Structure as a Tree.}
 For the remainder of this paper, we will treat this data structure as a degree $7$ tree, which we refer to as%
\footnote{The subscript $0$ is to differentiate it from the space-efficient versions of this data structure we will look at later.}
 $\mathcal{T}_0$.  
The root of $\mathcal{T}_0$ represents RecursiveCGL$(S, k)$, where $S$ consists of all suffixes of $T$, each with an empty set of alterations; the children of each node represent the recursive data structures on the recursive subsets and altered subsets. 
This is essentially the same idea as a standard binary search tree---each child of a node is the root of a recursive tree on a subset of the data.
We refer to a node of a tree, and the recursive data structure it represents, interchangeably.  
The height of $\calT_0$ is at most $\lceil \log n\rceil$ by Lemma~\ref{lem:set_size}.  

We say that a node RecursiveCGL$(S,k)$ \defn{contains} a suffix $s$ of $T$ if $(s,a)\in S$ for some (possibly empty) set of alterations $a$.  We define the \defn{size} of a node to be $|S|$.

\subsubsection{Queries}
\label{sec:cgl_queries}
 We describe how to perform a query $q$ with search radius $r\leq k$.
The query is recursive, traversing the tree $\mathcal{T}_0$; as the tree is traversed, the query is altered 
and the search radius decreases.  In Section~\ref{sec:cgl_queries}, we use the notation $q$ solely to refer to the original query string, and $q'$ to a recursive (possibly altered) query; similarly, $r$ is the original search radius, and $r'$ is the recursive search radius.
The query begins at the root of $\mathcal{T}_0$. 

We now describe how a query is performed.  We split into two cases based on if the recursive search radius $r' > 0$ or $r' = 0$.  For the case where $r' > 0$, we split into two further cases based on whether or not $q'$ is a prefix of the current node.  For the case where $r'=0$, we give two strategies.  First, we explain how to traverse the tree recursively.  Then, we summarize how storing additional data structures allows us to traverse the tree much more quickly;
\iffull
we fully explain this strategy and bound its performance in 
Section~\ref{sec:improving_logs}.
\else
we bound the performance of this strategy in Section~\ref{sec:improving_logs}.
\fi

\paragraph{Positive Search Radius $r' > 0$.}
We describe how to traverse a node $v$ representing RecursiveCGL$(S, k')$ for some $S$; let $r'$ be the current search radius.  
First, we determine whether the distance from $q'$ to the pivot string $p$ of $v$ is $\leq r'$ in $O(k)$ time using Lemma~\ref{lem:lcp}. If so, we return $p$.
In the base case, if $|S| \leq 1$, we are done.

Otherwise, we find $i \coloneq LCP(q',p)$ using Lemma~\ref{lem:lcp}.
Once we find $i$, we can 
pivot-alter $q'$ in $O(1)$ time; denote the result  $\hat{q}$.  We can also determine if $q'$ is before or after $p$ in lexicographic order in $O(1)$ time.

We split into two subcases based on if $i < |q'|$ or $i = |q'|$.

If $i < |q'|$, we split into four cases based on $i$ and whether $q'$ is before or after $p$ in lexicographic order.  In all cases, we perform at most $4$ recursive queries: at most $1$ recursive query with search radius $r$, and at most $3$ with search radius $r-1$.  
These recursive queries can be obtained immediately from Observation~\ref{obs:correctness_core}; 
we list them explicitly in Table~\ref{table:queries}.

\begin{table*}
\centering
 \captionsetup{position=below}
\begin{tabular}{c|c c|c c|c c|c c}
    &  $S_{< m}$ & $\hat{S}_{< m}$  &  $S_{< \ell}$ & $\hat{S}_{< \ell}$  &  $S_{> \ell}$ & $\hat{S}_{> \ell}$  &  $S_{> m}$  & $\hat{S}_{> m}$ \\ \hline
    $i< m$ & $q',r$ & & $\hat{q}, r-1$ & & $\hat{q}, r-1$ & & $\hat{q}, r-1$ \\
     $i=m$, $q' < p$ & & $q',r-1$ & ${q'}, r$ & & & $\hat{q}, r-1$ & $\hat{q}, r-1$ \\
     $i=m$, $q' > p$ & & $q',r-1$ & &$\hat{q}, r-1$  & ${q'}, r$ &  & $\hat{q}, r-1$ \\
    $i> m$ & & $q',r-1$ & &${q'}, r-1$  & & ${q'}, r-1$ & ${q'}, r$ \\
\end{tabular}
\caption{The recursive calls made based on the query if $i < |q'|$ using Observation~\ref{obs:correctness_core}.  We use $q' < p$ to denote that $q'$ is before $p$ in lexicographic order.}
\label{table:queries}

\bigskip
\begin{tabular}{c|c c|c c|c c|c }
    & $S_{< m}$ & $\hat{S}_{< m}$  &  $S_{< \ell}$ & $\hat{S}_{< \ell}$  &  $S_{> \ell}$ & $\hat{S}_{> \ell}$  &  $S_{> m}$ \\ \hline
    $i< m$ & $q',r$ & & DFS & & DFS & & DFS \\
    $i= m$ & & $q',r-1$ & & ${q'}, r-1$ & & ${q'}, r-1$ & DFS \\
    $i> m$ & & $q',r-1$ & &${q'}, r-1$  & & ${q'}, r-1$ & ${q'}, r$ \\
\end{tabular}
\caption{The recursive calls made based on the query if $i = |q'|$ using Observation~\ref{obs:correctness_core_match}.}
\label{table:match_queries}
\end{table*}


If $i = |q'|$, then $q'$ is a prefix of the pivot string $p$.  This changes our recursive calls since \emph{any} string $s\in S$ with $LCP(s,p) > i$ has $d_H(q',s) = 0$.  
We handle these items by performing a depth-first search (DFS) on the tree to recover $S$: we recursively search all unaltered children of each node; the pivots of these nodes are exactly the elements of $S$.
(We chose DFS arbitrarily; this could be handled equally well by BFS or any other kind of tree traversal.)

\iffull
\paragraph{Query Method Discussion.}
To briefly justify these queries, consider the case $i = m$ and $q' > p$ (the third row of Table~\ref{table:queries}).  We describe this case in detail:
we describe each recursive call one by one, with a brief justification for each.   
The other cases are similar.

First, we recurse in $S_{< m}$.  Any $s\in S_{< m}$ satisifies $LCP(s,p) < LCP(q,p)$. Then by Observation~\ref{obs:correctness_core}, any $s\in S_{< m}$ has $d_H(s,q) \leq r$ if and only if $d_H(\hat{s}, q') \leq r-1$.  Therefore, our first recursive query is to RecursiveCGL$(\hat{S}_{< m}, k-1)$ with query $q'$ and search radius $r-1$.

Then, we recurse on $S_{> m}$.  Any $s\in S_{> m}$ satisfies $LCP(s,p) > LCP(q',p)$. Then by Observation~\ref{obs:correctness_core}, any $s\in S_{> m}$ has $d_H(s,q') \leq r$ if and only if $d_H(s, \hat{q}) \leq r-1$.  Therefore, our second recursive query is to RecursiveCGL$({S}_{> m}, k-1)$ with query $\hat{q}$ and search radius $r-1$.

Next, we recurse on $S_{< \ell}$.  Any string $s\in S_{< \ell}$ must have $LCP(s,p) = LCP(q',p)$ but (since $q' > p$) $s[m] \neq q'[m]$.
By Observation~\ref{obs:correctness_core}, $d_H(s,q') \leq r$ if and only if $d_H(\hat{s}, \hat{q})\leq r-1$.
Thus, we make 
a recursive query with $\hat{q}$ and search radius $r-1$ to 
RecursiveCGL$(\hat{S}_{< \ell}, k-1)$.

Finally, we make a recursive query with $q'$ and search radius $r$ to RecursiveCGL$(S_{> \ell}, k)$.  

\fi

\paragraph{Traversing the Tree with $r'=0$.}
 If $r'=0$, we can simply ignore all recursive calls with search radius $r-1$.  
There is only one non-DFS recursion for each node
\iffull
; we can determine which applies using Lemma~\ref{lem:lcp}.
\else 
.
\fi

For the remainder of this paragraph, we define traversing the nodes in more detail, giving some structural definitions that will be useful throughout the rest of the paper.  Then, we give a corollary that significantly speeds up our algorithm: rather than traversing the tree one node at a time for $r' = 0$, we can quickly ``jump'' to the final result of the traversal.

Let us look at the recursive calls with search radius $r$.  If $i < m$ we recurse in $S_{< m}$, if $i = m$ and $\hat{q} < p$ we recurse on $S_{< \ell}$, and so on.  In general, we traverse to the node whose recursive set would contain the query $\hat{q}$.

This traversal process is currently most important for ${q'}$, but in fact it is well-defined for any string $s$: starting at any node $v\in \calT_0$, we traverse to the descendant of $v$ whose recursive set would contain $s$, continuing until reaching a leaf.  
We call this process a \defn{manual traversal}; we call the last node reached in a manual traversal
the \defn{traversal destination}.  
\iffull
We define this process for any string $s$, and any subtree of $\calT_0$, as we will need these more  general definitions in Sections~\ref{sec:space_efficient_cgl} and~\ref{sec:improving_logs}.
\fi

\begin{definition}[Traversal Destination; Manual Traversal]
\label{def:traversal_destination}
Let $\calT'$ be a subtree of $\calT_0$ and let $s$ be an arbitrary string.
We define a path through $\calT'$ 
which we call the \defn{manual traversal}.  
We define one recursive subset to recurse on.
If $LCP(s,p) < m$ we recurse on $S_{< m}$; if $LCP(s,p) > m$ we recurse on $S_{> m}$;  if $LCP(s,p) = m$ and $s < p$ we recurse on $S_{< \ell}$; if $LCP(s,p) = m$ and $s > p$ we recurse on $S_{> \ell}$.
We define the \defn{traversal destination} to be the leaf of $\calT'$ reached by this process.  
\end{definition}

Our goal is to avoid the manual traversal, which may traverse $\Omega(\log n)$ nodes.  In 
Section~\ref{sec:improving_logs} 
we show that we can indeed find the traversal destination in $\text{poly}(k, \log\log n)$ time.

 However, this result is not sufficient to answer a query---we cannot just jump to the final node traversed, as we may skip matching strings found along the way.  
Specifically, as we traverse the tree, we compare the query $q$ to the pivot of each node along the traversal path.  If they match---$q$ is a prefix of $p$---then we must return $p$ as a solution to the problem.  Then, using Table~\ref{table:match_queries}, we may call one or more DFS subroutines to return further matching pivots.

The following definition address this gap: we define the set of nodes which we must traverse to correctly answer a query with radius $r' = 0$.  

\begin{definition}
Let $\calT'$ be a subtree of $\calT_0$, let $v$ be a node in $\calT'$, and let $q'$ be a $k$-altered query.  The \defn{Matching Nodes for $v$ and $q'$}, denoted $P(v,q', \calT')$, are the set of nodes $v'\in \calT'$ such that:
\iffull
\begin{enumerate}[noitemsep,topsep=1pt]
\item  $v'$ is a descendant of $v$ in $\calT'$, and 
\item $q'$ is a prefix of $p$.
\end{enumerate}
\else
(1)  $v'$ is a descendant of $v$ in $\calT'$, and 
(2) $q'$ is a prefix of $p$.
\fi
\end{definition}

 The following corollary shows that we can find the Matching Nodes of a query quickly, without manually traversing the tree.
\iffull
This corollary is a special case 
of Lemma~\ref{lem:fast_query}, stated and proven in Section~\ref{sec:improving_logs}.
\else
This corollary is a special case 
of Lemma~\ref{lem:fast_query}, stated in
Section~\ref{sec:improving_logs}.
\fi

\begin{corollary}
\label{cor:fast_query_original_cgl}
There is a data structure using $O(n\Smab)$ space that, for any $k$-altered query $q'$ can find $P(v,q', \calT_0)$ in time
 \[
 O\left(k^2 + k\log\log n + k|P(v,q', \calT_0)|\right).
 \]
\end{corollary}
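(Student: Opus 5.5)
The plan is to treat $\calT_0$ as a collection of \emph{unaltered components}. A component is a maximal connected subtree that uses only edges to recursive subsets, never edges to altered subsets. Every altered string $(s,a)$ is created exactly once, when some node pivot-alters it. Hence the sets $S$ stored at component roots partition all altered copies of suffixes. A suffix acquires one alteration per altered edge on a root-to-leaf path of height at most $\lceil\log n\rceil$, so it has $O(\Smab)$ altered copies in total. The total size of all component-root sets is therefore $O(n\Smab)$, which is the space budget. Since $P(v,q',\calT_0)$ only concerns descendants of $v$ reached by the radius-$0$ traversal, it suffices to work inside the component containing $v$.

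\textbf{Step 1: structural claim.} For a node $u$ with set $S_u$, the strings $s\in S_u$ having $q'$ as a prefix form a contiguous interval of $S_u$ in lexicographic order, because they are exactly the strings with $LCP(s,q')\ge |q'|$. A node $v'$ in the subtree of $v$ is matching exactly when its pivot lies in this interval. For each component root $c$ I store its set $S_c$ sorted lexicographically. Each entry carries a pointer to the unique node of the component whose pivot it is, and each such node carries its Euler-tour (preorder) interval. The query then proceeds in three stages:
\begin{enumerate}[noitemsep,topsep=1pt]
\item locate the interval of $S_c$ prefixed by $q'$;
\item walk through that interval;
\item report the entries whose pivot node has preorder number inside the preorder interval of $v$.
\end{enumerate}
Each check in stage 3 is $O(1)$, and each reported pivot is verified with Lemma~\ref{lem:lcp} in $O(k)$ time, which gives the $k|P|$ term.

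\textbf{Step 2: fast location.} Locating the interval by plain binary search would cost $O(k\log n)$, using the comparisons of Lemma~\ref{lem:lcp}. To avoid the $\log n$, I would build over each $S_c$ a compacted trie of the altered strings, following~\cite{ColeGoLe04}. Between consecutive alterations, an altered string agrees with an unaltered suffix of $T$ (or of $q$). So descending the trie with $q'$ splits into at most $O(k)$ segments, one per alteration in $q'$ or in the strings being matched. Within a segment, the branch point is found with an LCP query against the suffix tree of $T$ from Lemma~\ref{lem:lcp}. It is then mapped into the trie by a weighted-ancestor / predecessor lookup on a \predds, costing $O(\log\log n)$. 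Each segment boundary also costs $O(k)$ bookkeeping to merge the alteration lists. Summed over the segments this gives $O(k^2+k\log\log n)$. The tries and predecessor structures are linear in $\sum_c |S_c| = O(n\Smab)$.

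\textbf{Main obstacle.} The hard step is making the filtering in Step 1 output-sensitive. The prefix interval of $S_c$ may contain pivots outside the subtree of $v$, and scanning them would be wasted time. Recursive subsets split by LCP with the median pivot, so $S_v$ is not in general a contiguous range of $S_c$. What I would try first is to show the following. Every pivot in the prefix interval that lies outside the subtree of $v$ sits at a node on the root-to-$v$ path inside the component, i.e.\ at an ancestor of $v$. The reason is that any string $s$ with $q'$ as a prefix satisfies $LCP(s,p_u)\ge\min\{LCP(q',p_u),|q'|\}$ at every ancestor $u$. Using Observation~\ref{obs:correctness_core_match} together with the fact that the query reached $v$, this should force $s$ to follow the same branch as $q'$ at $u$, unless $s$ is the pivot $p_u$ itself. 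If that holds, the number of wasted entries is at most the number of ancestors whose pivot $q'$ prefixes, and we can charge them to the caller. If instead the claim fails, the fallback is to store a sorted array $S_v$ at every node but truncate it to the strings whose LCP with $p_v$ is large. Either way, the heavy lifting is deferred to the general version, Lemma~\ref{lem:fast_query}, and this corollary follows as the special case $\calT'=\calT_0$.
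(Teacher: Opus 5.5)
Your closing sentence is exactly the paper's proof. The corollary is Lemma~\ref{lem:fast_query} with $\sigma=1$, so that $\calT=\calT_0$ and the space $O(n\Smab/\sigma+n)$ becomes $O(n\Smab)$; citing that lemma is enough. If your component construction is instead meant to supply the ``heavy lifting'', it has a genuine gap at the step you flag.

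You claim that every entry of the $q'$-prefix interval of $S_c$ outside $v$'s subtree lies on the path from $c$ to $v$. This is false, because a radius-$0$ call need not reach $v$ along $q'$'s own traversal. Suppose a node $u$ is searched with radius $1$ and $i=LCP(q',p_u)<m$. Then Table~\ref{table:queries} sends $(\hat q,0)$ along \emph{unaltered} edges to the children $S_{<\ell}$, $S_{>\ell}$ and $S_{>m}$, all in $u$'s component. If moreover $q'$ differs from $p_u$ only at position $i+1$ and $|q'|\le m$, then $\hat q$ is a prefix of every string in all three children. Take $v$ to be the $S_{<\ell}$ child. Your scan walks over all of $S_{>\ell}\cup S_{>m}$, while $|P(v,\hat q,\calT_0)|=|S_{<\ell}|$ may be tiny. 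So the per-call bound fails. Similar leakage occurs at higher ancestors whose branch was chosen by a less-altered query. The paper never filters. At each node it stores the pivots of that node's own descendants in sorted order, with a $y$-fast trie on their global indices. The matches are then exactly the contiguous run right after the predecessor of $q'$ in $v$'s own list.

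That per-node storage costs a $\log$ factor in space, and so does your Step~2, whose linear-space claim is unsupported. Your strings carry alterations at arbitrary depths, so their order relative to a suffix of $T$ or $q$ is not inherited from the suffix array. To locate where an unaltered segment branches off below a trie node, you need the rank of every suffix \emph{contained} in that node. That means predecessor structures over all (node, contained suffix) pairs. This is the paper's trie index in Lemma~\ref{lem:pivot_search}, which has size $O(|P|\log|P|)$, not $O(\sum_c|S_c|)$.

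The paper can afford this only because everything is built on a waypoint tree $\calT^*$, truncated at node size $\sigma\log^2 n$. It then completes $P(v,q',\calT)$ in two ways: by DFS below the matching waypoint leaves, and by an $O(\log\log n)$-step manual traversal from the traversal destination. Your sketch has no such truncation. As written, nothing brings your location and reporting structures down to $O(n\Smab)$ space.
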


With this data structure, we can finish our queries.  To query a node $v$ of the tree with query $q'$ and search radius $r' = 0$, we use Corollary~\ref{cor:fast_query_original_cgl} to find $P(v,q', \calT_0)$.  
Each pivot $(s_i, a)$ of a node in $P(v,q', \calT_0)$ satisfies $d_H(s_i, q)\leq r$; we return all such $s_i$.

\subsection{Analysis}
\label{sec:analysis}
 We analyze the query time and space of RecursiveCGL to prove Theorem~\ref{thm:result_improved_cgl}.  Both can be analyzed by substituting into the recurrence implied by the data structure's definition (as was done in~\cite{ColeGoLe04}).  However, it will be helpful for our later analysis to have additional structure. 
  In particular, we give an analysis which labels the edges of $\calT_0$; then we use these labels to obtain our bounds.

\subsubsection{Query Time.}

We begin by bounding the number of times we invoke Corollary~\ref{cor:fast_query_original_cgl}.   This lemma forms the core of our running time proof.
Let $\calT_q$ be the subtree of $\calT_0$ traversed using recursive calls that (1) have radius $r > 0$, and (2) are not a DFS call; thus we invoke Corollary~\ref{cor:fast_query_original_cgl} once for each leaf of $\calT_q$.

\begin{lemma}
    \label{lem:fast_query_calls}
     $|\calT_q|  = O\left(3^k\binom{\log n}{k}\right)$.
\end{lemma}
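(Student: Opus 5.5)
We bound $|\calT_q|$ by encoding each node of $\calT_q$ as a root-to-node path and then counting paths. Each edge of $\calT_q$ from a node to one of its children goes either to a recursive call with the same search radius (a ``straight'' edge) or to one with radius reduced by one (a ``bend'' edge). Tables~\ref{table:queries} and~\ref{table:match_queries} show that at every node $v$ with current radius $r'>0$ there is at most one straight recursive call, and at most $3$ bend calls. The DFS entries are excluded from $\calT_q$, and so are calls with radius $0$, which by definition are leaves where Corollary~\ref{cor:fast_query_original_cgl} is invoked. Every node of $\calT_q$ is therefore determined by its depth $d\le \lceil\log n\rceil$ (by Lemma~\ref{lem:set_size}), by the set of depths at which bends occur along its root path, and by which of at most $3$ bend children was taken at each bend. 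Since the radius drops at each bend and starts at $r\le k$, there are at most $k$ bends.

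Label each child edge of a node with a symbol in $\{0,1,2,3\}$: $0$ for the unique straight child and $1,2,3$ for the bend children. A node of $\calT_q$ then corresponds injectively to a word over this alphabet of length at most $\lceil \log n\rceil$ with at most $k$ nonzero letters. Counting such words gives
\[
|\calT_q| \le \sum_{d\le \lceil\log n\rceil}\ \sum_{j\le k} \binom{d}{j}3^j \le 3^k \sum_{d\le \lceil\log n\rceil}\sum_{j\le k}\binom{d}{j}.
\]
The naive total here is $O(3^k\log n\binom{\log n}{k})$, which is a $\log n$ factor too large, so this is the step needing care.

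I would tighten the count as follows. A word that ends at a leaf of $\calT_q$ either has exactly $k$ bends, or ends because the node size reached $1$ (depth $\lceil \log n\rceil$). The case of fewer bends ending early needs attention, since a leaf can also be a radius-$0$ node reached after exactly $r$ bends. Once radius $0$ is reached the node is a leaf of $\calT_q$, so the final letter of each word is a bend. Hence the words are exactly the sequences of at most $k$ bend positions within $\lceil\log n\rceil$ levels, with the last letter being the $k$-th bend or the word reaching full height. The number of subsets of size at most $k$ of $[\lceil\log n\rceil]$ is $\sum_{j\le k}\binom{\lceil\log n\rceil}{j}$. Because $k<(\log n)/2$, this sum is dominated geometrically by its last term, giving $O\bigl(\binom{\log n}{k}\bigr)$; the ceiling changes the bound only by a factor $O(1+k/\log n)=O(1)$.

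Internal nodes are charged to leaves: every internal node of $\calT_q$ has a straight child until height is exhausted, so the number of internal nodes is at most the depth times the number of maximal straight chains. To avoid this $\log n$ factor, I would instead count internal nodes as prefixes of leaf words and observe that each node extends by a straight path to a full-height leaf word with the same bend set. Thus nodes inject into pairs (bend set, depth), and I would bound the sum over depths by $\sum_{d}\sum_{j\le k}\binom{d}{j}3^j$. This becomes $O(3^k\binom{\log n}{k})$ via the hockey-stick identity $\sum_{d\le L}\binom{d}{j}=\binom{L+1}{j+1}$, together with $\binom{L+1}{k+1}\approx \frac{L}{k}\binom{L}{k}$. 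The residual $\log n/k$ factor is the main obstacle; I expect it is resolved by only counting nodes where a bend or a query invocation actually occurs, since pure straight stretches cost $O(1)$ amortized via the fast traversal.
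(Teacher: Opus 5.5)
Your encoding of nodes by depth, bend positions, and the choice among at most $3$ bend children is the same injection the paper uses. The proposal still has a genuine gap: it ends with an unresolved $\log n/k$ factor.

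That factor comes entirely from letting words with exactly $j=k$ bends appear at every depth with arbitrary trailing straight segments, which gives $3^k\binom{L+1}{k+1}$ with $L=\lceil\log n\rceil$. The missing observation is that $\calT_q$ contains only nodes visited with radius $r'>0$. Since the radius starts at $r\le k$ and drops by one at each bend, every node of $\calT_q$ has at most $k-1$ bends on its root path. Keeping $3^j$ inside the sum,
\[
|\calT_q| \le \sum_{j=0}^{k-1} 3^j \sum_{d=j}^{L} \binom{d}{j} = \sum_{j=0}^{k-1} 3^j \binom{L+1}{j+1} \le \binom{L+1}{k} \sum_{j=0}^{k-1} 3^j \le 3^k \binom{L+1}{k} \le 2\cdot 3^k \binom{L}{k}.
\]
This uses $\binom{L+1}{j+1}\le\binom{L+1}{k}$ for $j+1\le k<(L+1)/2$, and $\binom{L+1}{k}=\binom{L}{k}\frac{L+1}{L+1-k}\le 2\binom{L}{k}$. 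This is exactly the paper's proof.

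You actually wrote down the equivalent fact yourself: once radius $0$ is reached, the word ends in a bend. If you also want to count the radius-$0$ entry points, words with exactly $k$ bends must end in a bend. There are only $3^k\sum_{d\le L}\binom{d-1}{k-1}=3^k\binom{L}{k}$ of them, again with no extra factor. You never fed this constraint back into the sum.

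Your proposed remedy also fails. The fast traversal of Corollary~\ref{cor:fast_query_original_cgl} is used only at radius $0$. Straight stretches at positive radius are walked node by node by the query algorithm and genuinely belong to $\calT_q$, so they cannot be dropped from the count.

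Several intermediate claims are false, though they become unnecessary once the injection is in place:
\begin{itemize}
\item A leaf of $\calT_q$ with fewer than $k$ bends need not lie at depth $\lceil\log n\rceil$. Lemma~\ref{lem:set_size} only upper-bounds subset sizes, and the $i=m$ row of Table~\ref{table:match_queries} has no straight child at all. So a node need not extend by a straight path to a full-height leaf.
\item Pulling $3^k$ out of the sum at the start forces you to bound $\sum_{j\le k}\binom{L}{j}$ by $O\bigl(\binom{L}{k}\bigr)$. This does not hold uniformly for $k<(\log n)/2$: near the middle the ratios $(L-j+1)/j$ tend to $1$ (compare the $O\bigl(k\binom{\log n}{k}\bigr)$ bound in Lemma~\ref{lem:space}). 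The paper avoids this by letting the geometric factor $3^j$ absorb the sum over $j$.
\end{itemize}
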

\iffull
\begin{proof}
Let us define a \defn{query label}; we assign the query labels to the edges in $\calT_q$.  
Consider a node $v\in \calT_q$ that $q$ searches with radius $r'$.  
Then $q$ searches at most one child of $v$ with radius $r'$; we label the edge to this child \texttt{s}.
Any children searched by $q$ with search radius $r'-1$ have edges labelled \texttt{1}, \texttt{2}, \texttt{3} from left to right.  
(Recall that by definition $v$ has at most four children, and at most one has search radius $r'$---see Table~\ref{table:queries}.)

For any node $v$ of $\calT_q$, let $\ell$ be the string obtained by concatenating the labels of all edges on the path from the root to the $v$; call this the \defn{query label} of $v$.  Since all edges from a node have distinct labels, $\ell$ is unique for each node. 

Let us look at the properties of any path label $\ell$.  Since there are $\leq \lceil \log_2 n\rceil $ edges on the path from the root to any node in $\calT_0$ by Lemma~\ref{lem:set_size}, $\ell$ has length at most $\lceil \log_2 n\rceil$.  
The number of occurrences of \texttt{1}, \texttt{2}, or \texttt{3} in $\ell$ is at most $k$, since $k'$ starts at $k$, is never less than $0$, and is decremented every time one of these characters is added.  

 We bound the size of $\calT_q$
 by counting the number of distinct path labels.  
 
Consider a path $\lambda$ with length $i$, and $j$ characters that are \texttt{1}, \texttt{2}, or \texttt{3}.  There are $3^j$ ways to choose the values of \texttt{1}, \texttt{2}, or \texttt{3}, and $\binom{i}{j}$ ways to place these characters in $\lambda$.
    We can upper bound that there are at most 
      \[
\sum_{i = 0}^{\lceil \log_2 n\rceil } \sum_{j = 0}^{k-1} 3^j\binom{i}{j} \leq 
\sum_{j = 0}^{k-1} \sum_{i = j}^{\lceil \log_2 n\rceil } 3^j \binom{i}{j}
\leq 
\sum_{j = 0}^{k-1}  3^j \binom{\log n + 1}{j + 1}
\]
path labels.
This last step is well-known; see e.g.~\cite[Identity 5.10]{GrahamKnPa94}.

 Then $\binom{\log n}{i} \leq \binom{\log n}{i+1}$ for $i < k$, so (recall that we assume $k < (\log n)/2$):
\[
|\calT_q|\leq \sum_{j = 0}^{k-1}  3^j \binom{\log n + 1}{j + 1} \leq 
\sum_{j = 0}^{k-1}  3^j \binom{\log n + 1}{k} \leq 
3^{k}\binom{\log n + 1}{k}.
\]
We have that $\binom{\log n + 1}{k} = \binom{\log n}{k}\cdot\frac{\log n + 1}{\log n + 1 - k} \leq 2\binom{\log n}{k}$ since $k < (\log n)/2$.
\end{proof}
\fi

To begin bounding the other terms, 
we show that each string is only considered once by the query---this is crucial for the query time to be linear in $(\occ)$.
\begin{lemma}
\label{lem:no_duplicates}
    Let $s_i$ be a suffix of $T$, and let $Q$ be the nodes of $\calT_0$ traversed during a query.  Then there is at most $1$ node in $Q$ with pivot $(s_i, a)$ for some set of alterations $a$.
\end{lemma}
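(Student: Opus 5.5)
We will prove the statement by showing that the nodes of $\calT_0$ whose pivots involve a fixed suffix $s_i$ are arranged along a single root-to-leaf structure, and that the query never visits two of them. First we record a structural fact that follows from the construction. Each node RecursiveCGL$(S,k')$ contains each suffix of $T$ at most once, as a single altered string $(s_i,a)$. Moreover, the children of a node are built on pairwise disjoint sets of suffixes: the four recursive subsets partition $S\setminus\{p\}$, and each altered subset $\hat{S}_{\ast}$ contains exactly the suffixes of the corresponding $S_{\ast}$. Consequently $s_i$ is contained in a child built from $S_{\ast}$ or from $\hat{S}_{\ast}$ only, never in a child built from a different subset. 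If $s_i$ is the pivot of $v$, it is contained in no child of $v$ at all.

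Next we induct on the query recursion, that is, on the height of the traversed subtree. The claim is that for every node $v$ visited by a query, at most one node in the traversed subtree rooted at $v$ has a pivot equal to $(s_i,a)$ for some $a$. If $s_i$ is the pivot of $v$, it appears in no descendant, so we are done. Otherwise, suppose $s_i\in S_{\ast}$. Then the only children of $v$ that contain $s_i$ are the two children built on $S_{\ast}$ and on $\hat{S}_{\ast}$. By induction it suffices to show that the query recurses into at most one of these two children.

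This follows by inspecting Tables~\ref{table:queries} and~\ref{table:match_queries}. In each row, for each column pair ($S_{\ast}$, $\hat{S}_{\ast}$), at most one entry is filled. In the DFS case, the DFS explores only unaltered children, and DFS subtrees again respect the disjointness above. A simple induction therefore shows that the DFS visits each pivot exactly once and never enters a sibling altered subtree. The $r'=0$ traversal of Corollary~\ref{cor:fast_query_original_cgl} likewise follows a single manual-traversal path plus DFS calls, which stay within the subtree of one child.

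The main obstacle is the DFS and $r'=0$ portions, since they are not listed in the tables. There we must argue that a DFS launched at a child never overlaps with another branch containing $s_i$. Disjointness of the children's suffix sets gives this directly, once we note that each DFS or jump call is made at a distinct child of $v$.
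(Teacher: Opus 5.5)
Your proof is correct and follows the paper's argument. A non-pivot $s_i$ lies in exactly one recursive subset $S'$ and its altered counterpart $\hat{S'}$, and Tables~\ref{table:queries} and~\ref{table:match_queries} show the query enters at most one of the two, while a pivot is stored in no child. Your explicit induction over the traversed subtree and your remarks on the DFS and $r'=0$ portions just spell out details the paper leaves implicit; note that ``pairwise disjoint'' should read ``disjoint across different recursive subsets,'' since $S_{\ast}$ and $\hat{S}_{\ast}$ contain the same suffixes.
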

\iffull
\begin{proof}
Consider a node $v$ with set $S$ with $(s_i, a)\in S$ for some $a$.  If $(s_i, a)$ is not the pivot of $S$, then $(s_i, a)$ is in exactly one recursive subset $S'$, and $(s_i, a')$ is in exactly one altered subset $\hat{S'}$.  By Tables~\ref{table:queries} and~\ref{table:match_queries}, at most one of $S'$ and $\hat{S'}$ is traversed by $q$.  If $(s_i,a)$ is the pivot, then it is not stored recursively.
\end{proof}
\else
Now, we can bound the query time.
\fi

\begin{lemma}
\label{lem:cgl_query_time}
    The total time to perform a query $q$ with output size $\occ$ is
    \[
    O\left(|q| +  3^k\binom{\log n}{k} (k^2  + k\log\log n) + k\cdot \occ \right).
    \]
\end{lemma}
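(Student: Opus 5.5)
I will split the query cost into four parts and bound each one separately: preprocessing $q$, the work at nodes of $\calT_q$ (positive-radius, non-DFS traversal), the calls to Corollary~\ref{cor:fast_query_original_cgl} for radius-$0$ subqueries, and the DFS calls made when $q'$ is a prefix of a pivot.

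\textbf{Preprocessing and positive-radius nodes.} Preprocessing $q$ for Lemma~\ref{lem:lcp} with $\tau=1$ costs $O(|q|)$. At each node of $\calT_q$ the query does three things. It checks whether $d_H(q',p)\le r'$, it computes $i=LCP(q',p)$, and it pivot-alters $q'$ and determines which entries of Table~\ref{table:queries} or Table~\ref{table:match_queries} apply. Each check is a first-$(k+1)$-mismatches call, so Lemma~\ref{lem:lcp} gives $O(k)$ time per node. Lemma~\ref{lem:fast_query_calls} bounds $|\calT_q|$ by $O(3^k\binom{\log n}{k})$, so this part totals $O(3^k\binom{\log n}{k}k)$.

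\textbf{Radius-$0$ calls.} Each radius-$0$ recursive call is issued from a node of $\calT_q$, and each node issues $O(1)$ such calls. Hence there are $O(3^k\binom{\log n}{k})$ invocations of Corollary~\ref{cor:fast_query_original_cgl}. Each invocation costs $O(k^2+k\log\log n)$ plus $k$ per matching node returned, which accounts for the middle term. Every node returned in $P(v,q',\calT_0)$ has a pivot within distance $r$ of $q$, so it is reported.

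\textbf{DFS calls.} A DFS started at a node costs time linear in the size of the explored subtree. Every unaltered node visited in the DFS has a pivot that is an output string, and a tree of constant degree has size linear in its number of internal pivots, so the cost is $O(1)$ per reported pivot, plus $O(1)$ per DFS start. The DFS starts are charged to the nodes of $\calT_q$ that issue them.

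\textbf{Charging output to $\occ$.} Every reported pivot is an occurrence. Lemma~\ref{lem:no_duplicates} guarantees that each suffix $s_i$ appears as a traversed pivot at most once, so the total number of reported pivots across all calls is at most $\occ$, with no duplicate reports. The per-output costs therefore sum to $O(k\cdot\occ)$, and adding the four parts gives the stated bound.

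The main obstacle is accounting rigorously for the radius-$0$ and DFS subtrees, which lie outside $\calT_q$. Two points need checking here. First, the number of invocations must be charged correctly to the nodes of $\calT_q$, since each such node spawns a constant number of them. Second, the per-output work must be charged via Lemma~\ref{lem:no_duplicates} without double counting across different invocations.
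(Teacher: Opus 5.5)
Your proposal is correct and follows essentially the same route as the paper: $O(|q|)$ preprocessing, $O(k)$ per node of $\calT_q$ bounded via Lemma~\ref{lem:fast_query_calls}, $O(|\calT_q|)$ invocations of Corollary~\ref{cor:fast_query_original_cgl} at $O(k^2+k\log\log n)$ each, and all $|P(\cdot,\cdot,\cdot)|$ and DFS costs charged to $\occ$ via Lemma~\ref{lem:no_duplicates}. Your accounting of empty DFS nodes and DFS starts is more careful than the paper's, and the only detail to add is that the $O(|q|)$ term should also cover initializing the structure behind Corollary~\ref{cor:fast_query_original_cgl} for $q$ (a special case of Lemma~\ref{lem:fast_query}), which does not change the bound.
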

\iffull
\begin{proof}
We must perform $O(|q|)$ time processing $q$ for Lemma~\ref{lem:lcp} and Corollary~\ref{cor:fast_query_original_cgl}.  This processing is done once per query.

Each node of $\calT_q$ can be traversed in $O(k)$ time using Lemma~\ref{lem:lcp}. 
Corollary~\ref{cor:fast_query_original_cgl} is called at most once per node of $\calT_q$; each call takes
$O(k^2 + k\log\log n + k|P(v, q', \calT_0)|)$
time. 
Putting the $|P(\cdot,\cdot)|$ term aside for a moment and 
combining with Lemma~\ref{lem:fast_query_calls} gives 
\[
O(3^k\binom{\log n}{k} (k^2 + k\log\log n))
\]
time.

 Now, let us consider all remaining costs: the $|P(\cdot, \cdot)|$ terms in all calls to Corollary~\ref{cor:fast_query_original_cgl}, and the cost of all calls to DFS.  Each pivot of a node in $P(v,q', \calT_0)$ for any $v$ and $q'$ is returned as a solution string, as is any pivot in any DFS call.  By Lemma~\ref{lem:no_duplicates}, each solution string only occurs once; thus, the total sum of these costs is $O(k\cdot \occ)$.
 \end{proof}
 \fi

\subsubsection{Space.}
To bound the space, we assign labels to each edge of the tree, and bound the number of possible labels combinatorially.
For now, this is an analysis tool. However, this definition will be important for our space-efficient approach in Section~\ref{sec:space_efficient_cgl}, where we will maintain path labels explicitly.
\begin{definition} 
\label{def:path_label}
The \defn{path label} of any node $v\in \calT_0$ is a string consisting of characters \texttt{a} and \texttt{u}.  It is defined recursively as follows. The path label of the root is the empty string.
Let $\ell$ be the path label of a node $v$.  For any child $c$ of $v$, the path label of $c$ can be obtained by adding \texttt{u} to $\ell$ if $c$ is an (unaltered) recursive subset, and by adding \texttt{a} to $\ell$ if $c$ is an altered subset.
\end{definition}

\begin{observation}
\label{obs:path_label_basics}
The length of a path label is at most $\lceil \log_2 n\rceil$, and
each path label contains at most $k$ \textup{\texttt{a}}'s.
\end{observation}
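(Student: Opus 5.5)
Both claims follow directly from the recursive construction, and I would prove them separately: the length bound from Lemma~\ref{lem:set_size}, and the bound on \texttt{a}'s by tracking the distance bound $k'$ along a root-to-node path.

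\textbf{Length bound.} The path label of a node $v$ has one character per edge on the root-to-$v$ path, so its length equals the depth of $v$. The root holds all $n$ suffixes. By Lemma~\ref{lem:set_size}, every recursive subset has size at most half its parent's set. Each altered subset $\hat{S}_{< m}$, $\hat{S}_{< \ell}$, $\hat{S}_{> \ell}$ is in bijection with the corresponding recursive subset, since pivot-altering is applied elementwise and alters distinct suffixes. Hence every child, altered or not, has size at most half its parent's size. A node at depth $d$ therefore has size at most $n/2^d$. Recursion stops once $|S| \leq 1$, and internal nodes satisfy $|S| > 1$. So every internal node has $d < \log_2 n$, and every node has $d \leq \lceil \log_2 n\rceil$.

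\textbf{Bound on \texttt{a}'s.} I would induct along the path using the distance bound $k'$ in RecursiveCGL$(S,k')$.
\begin{itemize}
\item The root has $k' = k$ and the empty label, which contains zero \texttt{a}'s.
\item An edge labelled \texttt{u} leads to RecursiveCGL$(S', k')$, so it keeps the same $k'$.
\item An edge labelled \texttt{a} leads to RecursiveCGL$(\hat{S'}, k'-1)$, and such children are built only when $k' > 0$.
\end{itemize}
The invariant is therefore that the number of \texttt{a}'s in a node's label equals $k - k'$, and that $k' \geq 0$ throughout. It follows that there are at most $k$ \texttt{a}'s.

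The only step needing any care is confirming that the altered children also shrink by half, which is the bijection noted above. Everything else is bookkeeping.
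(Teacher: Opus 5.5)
Your proof is correct and follows essentially the same reasoning the paper uses. The paper gives no standalone proof, but it gets the length bound from the $\lceil \log n\rceil$ height of $\calT_0$ via Lemma~\ref{lem:set_size}, and the bound on \texttt{a}'s from $k'$ starting at $k$, being decremented on each altered edge, and never dropping below $0$; your note that altered subsets are no larger than the corresponding recursive subsets fills in a step the paper leaves implicit.
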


The path labels are useful for bounding the space (and will be useful for defining the functions in Section~\ref{sec:space_efficient_cgl}) since they are a way to track, for each suffix $s_i$ of $T$, where altered suffixes $(s_i, a)$ are stored in $\calT_0$.  
\iffull
We prove the following observation in the appendix; the proof is essentially the same as that of Lemma~\ref{lem:no_duplicates}.
\fi

 \begin{restatable}{observation}{pathdistinctlabels}
\label{obs:path_distinct_labels}
    For some suffix $s_i$ of $T$, consider two nodes $v_1$ and $v_2$ in $\calT_0$ (and alterations $a_1$ and $a_2$) such that $(s_i,a_1)$ is a pivot of $v_1$ and $(s_i, a_2)$ is a pivot of $v_2$.  Then the path labels of $v_1$ and $v_2$ are distinct; furthermore, the path label of one is not a prefix of the path label of the other.
\end{restatable}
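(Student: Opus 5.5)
We argue by looking at the lowest common ancestor of $v_1$ and $v_2$ in $\calT_0$, much as in the proof of Lemma~\ref{lem:no_duplicates}. The key invariant is the following. For any node $v$ with set $S$, and any suffix $s_i$, at most one element of $S$ has the form $(s_i,a)$. Moreover, the nodes containing $s_i$ form a subtree of $\calT_0$ in which $s_i$ ``splits'' at most once per node, and every split ends in either being the pivot or going to exactly one child.

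We first prove by induction down the tree that each node's set contains at most one altered copy of $s_i$. This holds at the root, since every suffix appears exactly once there. For the inductive step, suppose $(s_i,a)\in S$ and is not the pivot. Then it lies in exactly one recursive subset $S'$, because the four recursive subsets partition $S\setminus\{p\}$. Its pivot-altered version lies in exactly the corresponding altered subset $\hat{S'}$. So among the (up to) seven children, $s_i$ appears in at most two: one unaltered child and its altered counterpart. Each child's set is obtained from a single parent element, so each child contains at most one copy. If $(s_i,a)$ is the pivot, it is stored at $v$ and placed in no child.

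Now take $v_1\neq v_2$ as in the statement. (If $v_1=v_2$, then only one copy exists, so there is nothing to prove.) Both nodes contain $s_i$, so by the invariant every ancestor of either node contains $s_i$. We first rule out that one is an ancestor of the other. Suppose $v_1$ is a proper ancestor of $v_2$. Then $(s_i,a_1)$ is the pivot of $v_1$, and $v_1$ holds only one copy of $s_i$, so no child of $v_1$ contains $s_i$. This contradicts the fact that $v_2$ contains $s_i$.

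So neither node is an ancestor of the other. Let $w$ be their lowest common ancestor, and let $c_1\neq c_2$ be the children of $w$ on the paths toward $v_1$ and $v_2$. Both $c_1$ and $c_2$ contain $s_i$, so by the invariant they must be the pair $S'$ and $\hat{S'}$ for the same recursive subset. These two edges are labeled \texttt{u} and \texttt{a} respectively. Hence the path labels of $v_1$ and $v_2$ agree up to the label of $w$ and then differ at the next character. Since the labels differ at some position that lies within both strings, neither label is a prefix of the other, and in particular the labels are distinct.

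The main subtlety is the invariant that each node contains at most one copy of $s_i$. This invariant is what guarantees that the two branches from $w$ are exactly $S'$ and $\hat{S'}$, rather than two unaltered children. Everything else follows from the edge labelling.
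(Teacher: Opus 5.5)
Your proof is correct and rests on the same key fact as the paper's proof. A non-pivot copy of $s_i$ passes to exactly one recursive subset (a \texttt{u}-edge) and to its altered counterpart (an \texttt{a}-edge), while the pivot copy passes to no child. The paper turns this into a depth-by-depth induction, showing that nodes containing $s_i$ at equal depth have distinct labels, and then adds a brief remark for the prefix claim. Your lowest-common-ancestor argument is the same idea organized differently, and it makes the non-prefix part more explicit than the paper does.
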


With this observation in hand 
\iffull
we can prove our space bound.
\else
the space bound can be obtained, in short, by multiplying the number of suffixes of $T$ by the number of possible path labels.
\fi

\begin{lemma}
\label{lem:space}
Any suffix $s_i$ of $T$ is the pivot of $O(k\Smab)$ nodes of $\calT_0$; therefore,
RecursiveCGL requires $O(nk^2\Smab)$ total words of space.
\end{lemma}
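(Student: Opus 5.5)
The plan is to bound, for a fixed suffix $s_i$ of $T$, the number of nodes of $\calT_0$ whose pivot is $(s_i,a)$ for some alteration set $a$. We do this by injecting those nodes into the set of possible path labels and then counting the labels combinatorially. Observation~\ref{obs:path_distinct_labels} says that the path labels of any two such nodes are distinct, and moreover neither is a prefix of the other. Thus the set $L_i$ of path labels of nodes with pivot $(s_i,\cdot)$ is a prefix-free set of strings over $\{\texttt{a},\texttt{u}\}$. By Observation~\ref{obs:path_label_basics}, each such string has length at most $\lceil\log_2 n\rceil$ and contains at most $k$ copies of \texttt{a}.

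Counting all labels naively gives $\sum_{j\le k}\binom{\lceil \log n\rceil}{j}$ possibilities, summed over lengths; that over-counts by a $\log n$ factor. The prefix-freeness is what removes that factor. I would map each label $\ell\in L_i$ to a canonical string of length exactly $L=\lceil\log_2 n\rceil$ by padding it with \texttt{u}'s. I would first check that this map is injective on a prefix-free set, which holds for a simple reason: if two labels padded to the same string, then one would be a prefix of the other (the shorter one agrees with the longer one on its length), contradicting prefix-freeness. The padded strings have length $L$ and at most $k$ \texttt{a}'s. Hence
\[
|L_i|\le \sum_{j=0}^{k}\binom{L}{j}\le (k+1)\binom{\lceil\log_2 n\rceil}{k},
\]
where we use $k<(\log n)/2$ so that the binomials increase in $j$. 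As in the proof of Lemma~\ref{lem:fast_query_calls}, $\binom{\lceil\log n\rceil}{k}\le\binom{\log n+1}{k}\le 2\binom{\log n}{k}$, which gives $O(k\Smab)$ nodes per suffix.

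For the space bound, every node of $\calT_0$ has some pivot $(s_i,a)$ (base-case leaves with $|S|\le 1$ store at most one string, which we treat the same way; empty leaves can be omitted or charged to their parent). Summing over the $n$ suffixes gives $O(nk\Smab)$ nodes. Each node stores its pivot (a pointer plus at most $k$ alterations) and at most $7$ child pointers, which is $O(k)$ words. The total is therefore $O(nk^2\Smab)$ words.

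The main obstacle is the injectivity step. Without it, summing over all lengths loses a $\log n$ factor. The padding argument handles this as long as Observation~\ref{obs:path_distinct_labels} really provides non-prefix pairs, and I would double-check that leaves and the accounting of empty children do not break the correspondence between nodes and pivots.
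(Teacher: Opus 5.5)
Your proposal is correct and follows the same route as the paper: prefix-freeness of the path labels from Observation~\ref{obs:path_distinct_labels}, extending every label to length $\log n$, counting $\sum_{j\le k}\binom{\log n}{j}=O(k\Smab)$ labels per suffix, and charging $O(k)$ words to each node (with empty nodes charged to their parents). Your explicit padding with \texttt{u}'s, which keeps at most $k$ \texttt{a}'s and is injective precisely because the set is prefix-free, is a cleaner justification of the paper's ``without loss of generality all have length $\log n$'' step; the only thing you leave out is the paper's remark that the auxiliary structures of Lemma~\ref{lem:lcp} and Corollary~\ref{cor:fast_query_original_cgl} use $O(n\Smab)$ space and so do not affect the total.
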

\iffull
\begin{proof}
    First, let's bound the space to store $\calT_0$.  
    We charge the space required by each node to its pivot---any node built on an empty set must have a nonempty parent, so the number of empty nodes is at most $7$ times the number of nonempty nodes.
    
    By Observation~\ref{obs:path_distinct_labels}, for each suffix $s_i$ of $T$, the path label of all nodes that have $(s_i, a)$ as a pivot for some $a$ are distinct, and none is the prefix of another.  
    Consider the set of all path labels of nodes that have $s_i$ as a pivot.  Without loss of generality all have length $\log n$ (we ignore the ceiling for simplicity as it does not affect the analysis)---otherwise, we could add characters to one of them while retaining the number of path labels since no label is a prefix of another.  
    Then the number of path labels satisfying Observation~\ref{obs:path_label_basics} is at most (recalling that we assume $k < (\log n)/2$)
    \[
    \sum_{j = 0}^k \binom{\log n}{j} = O\left(k\binom{\log n}{k}\right).
    \]

    Summing over all $s_i$, there are at most $O(nk\binom{\log n}{k})$ nodes in $T_0$.  Each node requires $O(k)$ space ($O(k)$ space to store the pivot since it is an altered suffix of $T$, plus $O(1)$ pointers to further recursive data structures),
    for $O(nk^2\binom{\log n}{k})$ space overall.

    The LCP data structure from Lemma~\ref{lem:lcp} and the data structure for $r'=0$ from Corollary~\ref{cor:fast_query_original_cgl} both require $O(n\binom{\log n}{k})$ space and do not increase this bound.
\end{proof}
\fi

\subsubsection{Preprocessing Time.}

 The data structures for Lemma~\ref{lem:lcp} and Corollary~\ref{cor:fast_query_original_cgl} can be built in $O(nk^2\Smab\log n)$ time.  All that remains to prove Theorem~\ref{thm:result_improved_cgl} is to bound the time to build $\calT_0$.  

\iffull
In this section, we have $\tau = 1$;  
 in the following lemma we also consider the $\tau > 1$ case so we can use it in Section~\ref{sec:analysis}.
 \fi

 \begin{lemma}
 \label{lem:cgl_preprocessing}
    $\calT_0$ can be built in time 
    $O(nk^2 \tau \Smab \log n)$.
\end{lemma}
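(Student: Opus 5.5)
We build $\calT_0$ top-down, one level at a time, and charge the work at each node to the strings in its set $S$. The key quantity is the total size of all sets over all nodes, $\sum_{v\in\calT_0}|S_v|$. Every altered suffix $(s_i,a)$ stored in some node's set lies on the root-to-node path of exactly the nodes containing it. Each such path has length at most $\lceil\log_2 n\rceil$ by Lemma~\ref{lem:set_size}. So we count, for each suffix $s_i$, the nodes whose set contains some $(s_i,a)$. Following the argument of Observation~\ref{obs:path_distinct_labels} and Lemma~\ref{lem:space}, these are the ancestors (along a bounded number of root-to-leaf paths) of the $O(k\Smab)$ nodes where $s_i$ eventually becomes a pivot, and every such node lies on one of these paths. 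Hence $s_i$ appears in at most $O(k\Smab\log n)$ node sets, and $\sum_v |S_v| = O(nk\Smab\log n)$.

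Next we bound the work per node as $O(|S|\cdot k\tau)$, up to the cost of sorting. At a node with set $S$ we do the following.
\begin{enumerate}
\item Find the lexicographic median $p$. Comparing two $k$-altered suffixes of $T$ takes $O(k\tau)$ time by Lemma~\ref{lem:lcp}, since it is one LCP computation followed by one character comparison. Linear-time selection then uses $O(|S|)$ comparisons.
\item Compute $LCP(s,p)$ for every $s\in S$, in $O(k\tau)$ time each, again by Lemma~\ref{lem:lcp}, and select the median $m$ in linear time.
\item Form the four recursive subsets and, when $k'>0$, the three altered subsets. Each pivot-alteration appends one alteration. Copying an alteration list costs $O(k)$ per string, which is within the $O(k\tau)$ budget.
\end{enumerate}
Summing $O(k\tau|S_v|)$ over all nodes gives $O(nk^2\tau\Smab\log n)$, which is the claimed bound.

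The main obstacle is the counting step, in particular justifying that the number of nodes containing a given suffix is $O(k\Smab\log n)$. The pivot count from Lemma~\ref{lem:space} only counts where $s_i$ ends, not every node it passes through. I would first try charging each node containing $(s_i,a)$ to the unique descendant at which that copy of $s_i$ becomes a pivot, or is stored in a leaf. This uses the fact that within a node's set the copies of $s_i$ descend into disjoint subtrees, except that each unaltered copy spawns at most one altered copy. Each endpoint then receives at most $\lceil\log_2 n\rceil$ charges, one per ancestor.

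If this charging is delicate, I would fall back on counting path labels directly. A node containing $s_i$ corresponds to a label of length at most $\log n$ with at most $k$ \texttt{a}'s. The number of such labels is $\sum_{i\le\log n}\sum_{j\le k}\binom{i}{j}=O(k\binom{\log n+1}{k+1})$, and since $k<(\log n)/2$ this is $O(k\Smab\log n/k)$, which is $O(\Smab\log n)$. That count is within the required bound.
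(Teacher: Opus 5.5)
Your proof is correct and follows the paper's argument. You charge $O(k\tau|S|)$ per node via Lemma~\ref{lem:lcp}, and you observe that every node whose set contains a copy of $s_i$ is an ancestor of one of the $O(k\binom{\log n}{k})$ nodes with $s_i$ as pivot, so $s_i$ lies in $O(k\binom{\log n}{k}\log n)$ sets. Your fallback of counting path labels directly is also valid, because nodes containing $s_i$ have distinct labels by the same induction as in Observation~\ref{obs:path_distinct_labels}; it gives $O(\binom{\log n}{k}\log n)$ sets per suffix, a factor $k$ better than needed.
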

\iffull
\begin{proof}
  First, if $\tau = 1$, we show that a node with a set $S$ can be processed in time $O(k|S|)$. 
   Using Lemma~\ref{lem:lcp}, we can find the pivot in $O(k|S|)$ time using the standard median-finding algorithm; then, we can find the 
   LCP of all strings in $O(k|S|)$ time, and we can find the median LCP in $O(|S|)$ time.    Then, we can find the recursive subset of all strings using Lemma~\ref{lem:lcp} in $O(k|S|)$ time and recurse.
 If $\tau > 1$, the time to process each set using Lemma~\ref{lem:lcp} is $O(k\tau |S|))$.
   
   Since each suffix $s_i$ is the pivot of at most $k\binom{\log n}{k}$ nodes by Lemma~\ref{lem:space}, 
   $s_i \in S$ only if $s_i$ is a pivot of the descendant of the node with set $S$, and $\calT_0$ has height $O(\log n)$,
   each suffix $s_i$ contributes to $O(k\Smab\log n)$ sets, giving the lemma.
\end{proof}
\fi

\section{The Space-Efficient Data Structure}
\label{sec:space_efficient_cgl}

In this section we describe the data structure satisfying 
Theorems~\ref{thm:result_large_space} and~\ref{thm:result_small_space}.  We first describe how to truncate the tree to save space, then we describe how function inversion can help us perform queries on the truncated tree. Then we describe how to perform function inversion efficiently, proving  Theorem~\ref{thm:function_inversion_all}, and finally we put it all together to achieve the final bounds.

\subsection{The Truncated Tree}
\label{sec:truncating_cgl}

We define the \defn{truncated tree} similarly to RecursiveCGL, but with a key difference in the base case.  
If $|S| \leq \sigma$, then the node consists only of an integer \defn{label}; nothing else is stored, and we do not recurse further.  
 We use $\mathcal{T}$ to denote the truncated tree constructed on all suffixes of $T$. 
 
Since the nodes of $\mathcal{T}$ are also nodes of $\calT_0$, we can define path labels 
using Definition~\ref{def:path_label}
; we will use path labels both for space analysis and to define the functions for function inversion.

Consider all leaves in $\mathcal{T}$ with a given path label $\lambda$. These leaves can be found by traversing $\calT$ recursively.  We label all leaves in $\mathcal{T}$ with path label $\lambda$ using increasing integers starting at $1$.  The following lemma implies that the maximum label of any node is $O(n/\sigma)$.

\begin{lemma}
\label{lem:truncated_num_nodes}
For any path label $\lambda$, $\mathcal{T}$ has $O(n/\sigma)$ leaves with path label $\lambda$. $\mathcal{T}$ has 
$O(n k \Smab/\sigma)$ nodes in total.
\end{lemma}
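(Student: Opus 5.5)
Fix a path label $\lambda$. The plan is to show that the leaves of $\mathcal{T}$ with path label $\lambda$ have pairwise disjoint underlying suffix sets, and that each such leaf, apart from a few, contains $\Omega(\sigma)$ suffixes.

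The key structural fact is a strengthening of Observation~\ref{obs:path_distinct_labels}. For a fixed suffix $s_i$, each node of $\calT_0$ containing some $(s_i,a)$ passes it to exactly one recursive subset, unless $s_i$ is the pivot. If the node has $k'>0$ (and the string is not in $S_{>m}$), it also passes it to exactly one altered subset. So the nodes containing $s_i$ form a tree in which each node has at most one \texttt{u}-child and at most one \texttt{a}-child containing $s_i$. Distinct nodes containing $s_i$ therefore have distinct path labels. In particular, any two distinct nodes with the same label $\lambda$ cannot both contain $s_i$, so the sets of nodes with path label $\lambda$ (leaves in particular) are pairwise disjoint as sets of suffixes, and the total size of all leaves with label $\lambda$ is at most $n$.

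Next, bound the number of small leaves. A leaf $v$ of $\mathcal{T}$ has a parent $w$ with $|S_w|>\sigma$. If $|S_v|\ge \sigma/c$ for a constant $c$, charging gives at most $cn/\sigma$ such leaves with label $\lambda$. Otherwise, charge $v$ to its parent $w$. The parent $w$ has label $\lambda$ minus its last character, so it is one of the internal nodes with a fixed label $\lambda'$. Those nodes are also pairwise disjoint and each has size $>\sigma$, so there are at most $n/\sigma$ of them. Each parent has at most $7$ children, so there are at most $7n/\sigma$ charged leaves. Together these give $O(n/\sigma)$ leaves with label $\lambda$. The root (empty label) is the trivial case.

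For the total node count, repeat the same parent-charging argument for every label. Every internal node of $\mathcal{T}$ has size $>\sigma$, and internal nodes with a fixed label are disjoint, so each label has $O(n/\sigma)$ internal nodes. Leaves are at most $7$ times the internal nodes. The number of labels is the number of $\{\texttt{a},\texttt{u}\}$ strings of length at most $\lceil\log n\rceil$ with at most $k$ \texttt{a}'s, which is $\sum_{j\le k}\binom{\log n+1}{j+1}$. This naively gives an extra $\log n$ factor, so the count must be done more carefully.

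\textbf{Main obstacle.} Instead of summing over all labels, charge each internal node to suffixes. Each internal node $w$ has $>\sigma$ suffixes, and each suffix $s_i$ lies in the nodes of a ``containment tree'' along which it descends. Sum $|S_w|$ over internal nodes $w$ and divide by $\sigma$. The obstacle is that a suffix appears in $\Theta(\log n)$ nested nodes along each path, which would reintroduce a $\log n$ factor.

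So I would instead count only internal nodes whose children are all leaves of $\mathcal{T}$, or more generally nodes of size between $\sigma$ and $2\sigma$ ("minimal" internal nodes). Each minimal node with label $\lambda$ is disjoint from the others with that label. Every suffix is contained in at most $O(k\Smab)$ minimal nodes, by the distinct non-prefix label argument of Lemma~\ref{lem:space}, since minimal nodes containing $s_i$ are never ancestors of each other. Each minimal node has size $\ge\sigma$, so there are $O(nk\Smab/\sigma)$ minimal nodes. Because sizes at least halve along every path (Lemma~\ref{lem:set_size}), the remaining internal nodes form a tree above the minimal nodes whose size is bounded by a geometric-type charging: each node of size in $(2^t\sigma,2^{t+1}\sigma]$ is counted with weight $2^{-t}$ relative to minimal nodes. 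Leaves are at most $7$ times the internal nodes, giving $O(nk\Smab/\sigma)$ in total.
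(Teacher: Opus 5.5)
Your argument is correct in substance and takes a different route from the paper, though the final step needs to be made precise.

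\textbf{Per-label claim.} You argue more directly than the paper. The nodes containing a fixed $s_i$ form an ancestor-closed tree with at most one \texttt{u}-child and one \texttt{a}-child at each node, so nodes with the same label are disjoint. Charging every leaf of $\calT$ to its parent (label $\lambda$ minus its last character, size $>\sigma$, at most $7$ children) then gives at most $7n/\sigma$ leaves. The split into large and small leaves is unnecessary.

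\textbf{Total count.} The paper bounds only the leaves of $\calT_l$ (internal nodes all of whose children are leaves of $\calT$), using essentially the antichain count you apply to minimal nodes. It then bounds the branching part $\calT_d$ by the same order. Unary chains are handled separately: a node with a single internal child has $|S_v|\le 12\sigma$, hence only $O(7^4)$ descendants in $\calT$. Your dyadic idea avoids that unary-chain analysis and its $7^5$ constant, which is a genuine simplification.

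\textbf{The final step.} ``A geometric-type charging \ldots\ counted with weight $2^{-t}$ relative to minimal nodes'' is not yet a proof. If you mean charging each large node to its minimal descendants, you would need that a node of size about $2^t\sigma$ has $\Omega(2^t)$ minimal descendants. That holds only because of the unary-chain fact, which you have not proved. The clean version is as follows.
\begin{itemize}
\item Let $C_t$ be the set of nodes of size in $(2^t\sigma,2^{t+1}\sigma]$. Every child has at most half its parent's size (Lemma~\ref{lem:set_size}, and altered subsets have the same sizes as the corresponding recursive subsets). Hence no node of $C_t$ is an ancestor of another.
\item Your minimal-node argument then applies verbatim: each $s_i$ lies in $O(k\Smab)$ nodes of $C_t$. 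Since each node of $C_t$ has more than $2^t\sigma$ elements, $|C_t| = O(nk\Smab/(2^t\sigma))$.
\item The internal nodes of $\calT$ are exactly the nodes of size $>\sigma$, i.e.\ $\bigcup_{t\ge 0} C_t$. Summing the geometric series over $t$ and multiplying by $8$ to include leaves finishes the proof.
\end{itemize}
A minor point: nodes of size in $(\sigma,2\sigma]$ are a special case of nodes with all-leaf children, not a generalization, since the latter can have size up to about $4\sigma$. With the classes $C_t$ this distinction no longer matters.
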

\iffull
\begin{proof}
Let $\mathcal{T}_l$ consist of the internal nodes of $\mathcal{T}$. By definition, these are exactly the nodes of $\calT_0$ built on a set $S$ with $|S| > \sigma$.  Since each node has at most $7$ children, we can bound the number of nodes in $\mathcal{T}$ up to constants by bounding the number of nodes in $\mathcal{T}_l$.

Consider the leaves of $\mathcal{T}_l$.  
By Lemma~\ref{lem:space}, any suffix $s$ of $T$ satisfies that $(s,a)$ for 
some $a$ is the pivot of $O(k\binom{\log n}{k})$ leaves of $\calT_0$.  
A leaf of $\calT_l$ contains $s$ only if $(s,a)$ is the pivot of a descendant of the leaf in $\calT_0$; thus, the sum of the sizes of the leaves of $\mathcal{T}_l$ is at most $O(nk\binom{\log n}{k})$.   
Furthermore, by Observation~\ref{obs:path_distinct_labels}, $s$ can only contribute to one leaf of $\calT_l$ with label $\lambda$.
Each has size at least $\sigma$; thus there are $O(n/\sigma)$ leaves in $\calT_l$ with label $\lambda$, and $O(n\binom{\log n}{k}/\sigma)$ leaves in $\calT_l$.

Let $\mathcal{T}_d$ be the nodes of $\mathcal{T}_l$ whose ancestors all have at least $2$ children in $\mathcal{T}_l$.  Immediately, $\mathcal{T}_d$ has $O(n\binom{\log n}{k}/\sigma)$ nodes; we are left to bound the size of $\mathcal{T}_l\setminus\mathcal{T}_d$: the number of nodes with an ancestor of degree $1$.

Consider a node $v$ in $\mathcal{T}_l$ with degree $1$ built on a set $S_v$.  At most one child of $v$ has size more than $\sigma$; it has size at most $|S_v|/2$ by Lemma~\ref{lem:set_size}.  The remaining children of $v$ have size at most $\sigma$.  Since the size of a node is at most the sum of the sizes of its children, we have that $|S_v| \leq |S_v|/2 + 6\sigma$. 
Solving, $|S_v| \leq 12\sigma$.

Therefore, by Lemma~\ref{lem:set_size}, all children of $v$ have size at most $6\sigma$; extending we obtain that all children at depth $4$ in the subtree rooted at $v$ have size $\leq \sigma$.  Since the tree has degree $7$, the subtree rooted at $v$ has $7^4$ nodes in $\calT$.

Therefore, if $v'\in \mathcal{T}_l$, each child of $v'$ that is not in $\mathcal{T}_l$ can have at most $7^4$ descendants.  
Therefore, $v'$ can have in total at most $7^5 = O(1)$ descendants that are not in $\mathcal{T}_d$. 
If two nodes in $\calT$ have the same path label, they must have an ancestor in $\calT_l$ with the same path label.  
Thus, we obtain the lemma.

We point out that while this proof has large constants, they can likely be avoided.  First, the above analysis can likely be tightened.  Second, rather than incurring the large $7^5$ constant, we can trim all nodes in $\mathcal{T}_l\setminus\mathcal{T}_d$ from $\mathcal{T}$.  This increases the largest size of a node to $12\sigma$, increasing the query time in our results by a factor $12$, but removing this large constant in the space bound. 
\end{proof}
\fi

 \subsection{Defining the Functions We Will Invert}
\label{sec:functions}

We define a function $f_{\lambda}$ for each path label $\lambda$.  For any $i\in [n]$, $f_{\lambda}(i)$ is the label of the leaf $\ell \in \mathcal{T}$ such that $\ell$ has path label $\lambda$ and $\ell$ contains $s_i$.  
There is at most one such $\ell$ by 
Observation~\ref{obs:path_distinct_labels} since each leaf $\ell$ of $\calT$ contains exactly the suffixes that are pivots of the descendants of $\ell$ in $\calT_0$;
if no such $\ell$ exists then we define $f_{\lambda}(i) = \bot$.  

Thus, $f_{\lambda}(i): [n] \rightarrow [O(n/\sigma)] \cup \{\bot\}$.  
In the following discussion, we will pad the codomain of $f_{\lambda}(i)$ to have size exactly $n$; in other words, $f_{\lambda}(i): [n] \rightarrow [n-1] \cup \{\bot\}$.

From this definition, we immediately observe that all elements of the codomain except $\bot$ have a small preimage:  since every leaf in $\calT$ contains $O(\sigma)$ elements, we have $|f_{\lambda}^{-1}(j)| = O(\sigma)$ for all $j\in [n-1]$.  This observation will be important in the subsequent analysis.

\paragraph{Evaluating the Functions.}
Let $E(f_{\lambda})$ be the worst-case time to evaluate $f_{\lambda}(i)$ for any $\lambda, i$.
We can evaluate $f_{\lambda}(i)$ in $O(k\tau\log n)$ time by traversing $\mathcal{T}$ (the height of the tree is $O(\log n)$; using Lemma~\ref{lem:lcp} we can determine which subset contains $s_i$ in $O(k\tau)$ time since $s_i$ and the pivot $p$ are altered suffixes of $T$).  

However, the key idea behind function inversion~\cite{FiatNaor00} is that we repeatedly calculate $f_{\lambda}(\cdot)$ (in the forward direction) in order to find an element's inverse $f_{\lambda}^{-1}(j)$.  Therefore, the time spent calculating $f_{\lambda}(\cdot)$ has a significant impact on our bounds---we would like to avoid spending $\Omega(\log n)$ time traversing $\calT$.  We point out that this very similar to the goal of Corollary~\ref{cor:fast_query_original_cgl}: we want to ``jump'' right to the result of the traversal.

 In Section~\ref{sec:improving_logs}, 
 we improve the time to evaluate $f_{\lambda}(\cdot)$ to $O(k^3 + k^2\log\log n)$
 ; see Lemma~\ref{lem:fast_function}.
The space bound of 
Lemma~\ref{lem:fast_function} 
has an additive $\Theta(n)$ term; therefore, we only use this lemma if $\tau = 1$.  
Our succinct data structure ($\tau > 1$) evaluates $f_{\lambda}(i)$ by traversing $\calT$
in $O(k\tau\log n)$ time.

\subsection{Proof of Theorem~\ref{thm:function_inversion_all}}
In this section, we prove Theorem~\ref{thm:function_inversion_all}, giving a data structure to efficiently invert an arbitrary function $f$ satisfying the requirements of Theorem~\ref{thm:function_inversion_all}.  Later, we will use this result with $f = f_{\lambda}$ for the functions $f_{\lambda}$ defined above to achieve our final bounds.

This proof is in two parts: first, we describe how the classic Fiat-Naor data structure extends to our setting; then, we extend their analysis to find all points in the inverse to Theorem~\ref{thm:function_inversion_all}.

\subsubsection{Recalling the Fiat-Naor Data Structure}
\label{sec:fiat_naor_review}
 Let's begin with a review of the data structure from~\cite{FiatNaor00}.  For the most part, our exposition here is identical to theirs; however, we tweak their parameters slightly to fit our analysis, and we omit the data structure they use to store high-indegree elements.

We briefly describe the idea behind their data structure before giving a detailed exposition.  Consider an element $i\in [n]$; apply $f$ repeatedly to $i$ to obtain $f(i), f(f(i))$, and so on for $\sigma$ operations; call this a ``chain.''  If we store a mapping from the last element to the first element in the chain, we can invert all elements in the chain as follows. To invert an element $j$, we start applying $f$ to $j$ until reaching the last element in the chain; following the pointer to $i$, we continue applying $f$ until reaching $j$ again.  The penultimate element in this strategy is in $f^{-1}(j)$.

 The challenge is ensuring that all elements are in some chain, and ensuring this works for any $f$.  This is done by composing $f$ with a random hash function, and creating a large number of chains (called a ``cluster'') for the composed function.  Repeatedly creating clusters, with a new hash each time, is sufficient to lower bound the probability that $j$ is in some chain.

\paragraph{Creating the Chains.}
We create a sequence of \defn{clusters}.  For each cluster $c$, we select a $2\sigma$-independent hash function
$g_c: [2n]\rightarrow [n]$ from the 
family given 
in~\cite[Section 4.4]{FiatNaor00}.  This hash family can be stored in $O(1)$ amortized space, and can be evaluated in $O(\log\sigma)$ amortized time.  (The amortization is over all hashes for all clusters; since each query is always repeated for each cluster, we can treat these bounds as worst-case on a given cluster.)
We define a function $h_c$ that composes $g_c$ and $f$, avoiding the high-indegree element $\bot$:
\[
h_c(i) = 
\begin{cases}
g_c(f(i)) & \text{ if } f(i) \neq \bot; \\
g_c(n + i) & \text{ otherwise.}
\end{cases}
\]

This function differs from the one in~\cite{FiatNaor00}: they avoid high-indegree elements by repeatedly hashing to avoid elements sampled in a lookup table. Since we know that $\bot$ is our only high-indegree element we can instead handle it explicitly.  This hash function is defined so that $\Pr_{i\neq j}[h_c(i) = h_c(j)] = O(\sigma/n)$, where the probability is taken over the choice of $g_c$.

For each cluster, we define a set of \defn{chains}.
For any $x\in [n]$, the chain of elements starting at $x$ is the sequence of $\sigma$ elements starting with $x$, where each element is obtained by applying $h_c$ to the previous element: 
\[
x, h_c(x), h_c(h_c(x)), \ldots, \underbrace{h_c(h_c(h_c(\ldots h_c}_{\sigma \text{ iterations}}(x)))).
\]
For any $x,i\in [n]$, we say that $x$ \defn{finds} $i$ if $i$ is in the chain of elements starting at $x$.

For each cluster, we sample $n/\sigma^3$ elements $x$ uniformly at random from $[n]$. For each such $x$, we store a key-value pair in a dictionary, where the key is the last element in the chain starting at $x$, and the value is $x$.  We use a linear-space dictionary which achieves constant-time insert in expectation and constant-time worst-case lookup, e.g., a cuckoo hash table suffices, as does~\cite{ArbitmanNaSe10}.  Thus, in $O(1)$ time, for any element $\ell$, we can: (1) determine if $\ell$ is the last element in the chain starting at some sampled element $x$, and (2) if so, find $x$.
A cluster consists of the hash function $g_c$ and this dictionary.

Now, we will describe how to use a cluster to (attempt to) find $f^{-1}(j)$ for some $j\in [n-1]$.  First, we calculate the chain starting at $j$ using $h_c$, exactly as we did when creating the cluster.  For each $x_c$ in the chain starting at $j$, we look up $x_c$ in the dictionary.  If none are found, the query fails.  If we find some $x_c$ in the dictionary, it is the last element in the chain starting at some element $x$; $x$ is stored as the key in the dictionary.  We then calculate the chain starting at $x$. 
If it contains $j$, then the element preceding $j$ in the chain starting at $x$ is $f^{-1}(j)$; we return that element.
If it does not contain $j$, the query fails.  Note that we only perform the above for one $x_c$, so we only calculate two chains.

\begin{lemma}
    \label{lem:function_inversion_cluster_performance}
    Each cluster $c$ can be stored in $O(n/\sigma^3)$ space and created in $O(n (E(f) + \log\sigma)/\sigma^2)$ time.  
    We can query $c$ to attempt to find $f^{-1}(j)$ for some $j\in [n]$ 
    in  $O(\sigma (E(f) + \log\sigma))$ time.
\end{lemma}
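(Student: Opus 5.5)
The three claims (space, creation time, query time) follow by direct accounting over the description of a cluster. The only subtlety is making sure each step costs $O(E(f)+\log\sigma)$ time and $O(1)$ space per chain element or stored item.

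\emph{Space.} A cluster consists of the hash function $g_c$ and the dictionary. By~\cite[Section 4.4]{FiatNaor00}, $g_c$ takes $O(1)$ amortized space, where the amortization is over all clusters and is absorbed since we treat it per cluster. The dictionary stores one key-value pair for each of the $n/\sigma^3$ sampled starting points. A linear-space dictionary (cuckoo hashing, or~\cite{ArbitmanNaSe10}) therefore uses $O(n/\sigma^3)$ words. The total is $O(n/\sigma^3)$; we assume $n/\sigma^3\ge 1$, which holds since $\sigma<n^{1/3}$.

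\emph{Evaluating $h_c$ and building chains.} One evaluation of $h_c(i)$ costs one evaluation of $f$ ($E(f)$ time), a test against $\bot$, and one evaluation of $g_c$ ($O(\log\sigma)$ amortized time). This is $O(E(f)+\log\sigma)$ in total. A chain has $\sigma$ applications of $h_c$, so computing it takes $O(\sigma(E(f)+\log\sigma))$ time.

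\emph{Creation time.} We sample $n/\sigma^3$ starting points, compute each chain's last element, and insert it into the dictionary in $O(1)$ expected time. The total is $(n/\sigma^3)\cdot O(\sigma(E(f)+\log\sigma)) = O(n(E(f)+\log\sigma)/\sigma^2)$, in expectation because of the dictionary inserts. Choosing $g_c$ adds only lower-order cost under the same amortization.

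\emph{Query time.} A query computes the chain starting at $j$ and does $O(1)$ worst-case dictionary lookups for each of its $\sigma+1$ elements. It stops at the first hit. If there is a hit, it computes exactly one more chain, from the retrieved $x$, while tracking the predecessor of $j$. That is two chains and $O(\sigma)$ lookups, for $O(\sigma(E(f)+\log\sigma))$ time.

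\emph{Main obstacle.} The only delicate point is justifying that the $O(\log\sigma)$ amortized evaluation time and $O(1)$ amortized space of the~\cite{FiatNaor00} hash family can be charged per cluster as worst-case bounds. I would argue this exactly as the paper remarks: every query and every construction touches every cluster uniformly, so the amortization over all clusters distributes evenly. Everything else is routine multiplication.
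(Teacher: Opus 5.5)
Your proposal is correct and matches the paper's proof. Both do direct accounting: $O(1)$ amortized space for $g_c$ plus an $O(n/\sigma^3)$ dictionary; $n/\sigma^3$ chains of $\sigma$ evaluations of $h_c$, each costing $O(E(f)+\log\sigma)$; and two chains with $O(\sigma)$ worst-case lookups per query. Your remark that the creation bound holds in expectation, because of the dictionary inserts, is a small precision the paper's statement leaves implicit.
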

\iffull
\begin{proof}
    Storing $g$ requires $O(1)$ amortized space~\cite{FiatNaor00}, and storing the dictionary requires $O(n/\sigma^3)$ space.

    Each time we query, we calculate two chains; this requires $O(\sigma)$ hash table lookups, and requires us to evaluate $h$ $O(\sigma)$ times.
    A table lookup requires $O(1)$ worst-case time, 
    $g_c$ can be evaluated in $O(\log \sigma)$ time, 
    and $f$ can be evaluated in $E(f)$ time.
    Creating a cluster requires creating $O(n/\sigma^3)$ chains.
    \end{proof}
    \fi

The following is the main correctness lemma for function inversion.  
    Its proof is closely adapted from the proof of~\cite[Lemma 4.2]{FiatNaor00}---it is essentially the same proof, but we do not need to condition on having sampled high-indegree elements since we handle them explicitly in $h_c$.  
    \iffull
    We defer this proof to Appendix~\ref{sec:omitted}.
    \fi

 \begin{restatable}{lemma}{functioninversionchainslemma}
\label{lem:function_inversion_chains}
 Let $i$ and $j$ satisfy $f(i) = j$.  Then the probability that $i$ is returned by querying a given cluster $c$ is $\Omega(1/\sigma^2)$.
\end{restatable}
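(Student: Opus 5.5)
We need the probability that $i$ is returned to be $\Omega(1/\sigma^2)$, where $f(i)=j$. Querying $c$ with $j$ returns $i$ exactly when two things happen. First, some sampled start $x$ has a chain that contains $j$ with predecessor $i$; call such a chain \emph{good for $i$}. Second, no other sampled chain's endpoint is hit first when walking forward from $j$. Note that $h_c(i)=g_c(j)$, so a chain passes from $i$ to the element $g_c(j)$.

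The plan follows the argument of \cite[Lemma 4.2]{FiatNaor00}. We restate the event as: $i$ appears in a chain started at a sampled $x$ at some position $t \le \sigma-1$, and the query walk from $j$ reaches that chain's endpoint before any colliding endpoint. The analysis has three steps.

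\textbf{Step 1 (coverage).} Fix the sampled starts. We show that with constant probability over $g_c$ and the samples, the $n/\sigma^3$ chains of length $\sigma$ cover $\Omega(n/\sigma^2)$ distinct elements, with few merges. Two chains merge only through a collision of $h_c$. Collisions come in two kinds. Those through $f$ are limited because every non-$\bot$ value has preimage size at most $\sigma$. Those through $g_c$ have probability $O(1/n)$ per pair, using $2\sigma$-independence on the $O(\sigma)$ evaluations along a chain, and because $\bot$ is routed to the distinct inputs $n+i$. Counting pairs as in Fiat--Naor, the expected number of merging pairs among the $n/\sigma^2$ elements touched is $O((n/\sigma^2)\cdot \sigma\cdot (n/\sigma^2)/n)=O(n/\sigma^3)$. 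This is a constant fraction of the chains, and Markov's inequality gives that a constant fraction of chains are collision-free.

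\textbf{Step 2 (symmetry).} We need that $i$ lies on a good collision-free chain with probability $\Omega(1/\sigma^2)$ for the specific $i$, not just on average. Here the indegree bound enters. The chain enters $i$'s successor $g_c(j)$ from $i$ or from some other $i'$ with $f(i')=j$. Hash values are close to uniform and pairwise independent across the $O(\sigma)$ relevant inputs. So the probability that a given sampled chain passes through $i$ at a step $t<\sigma$ is about $\sigma\cdot(1/n)$, since each step lands on $i$ with probability about $1/n$ given no earlier collision. Summing over $n/\sigma^3$ samples gives $\Omega(1/\sigma^2)$. We then subtract the probability that the chain passing through $i$ had a collision before or shortly after reaching $i$, which Step 1's pairwise bounds make a constant fraction.

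\textbf{Step 3 (query succeeds).} Given that $i$ is on a collision-free sampled chain ending at $e$, the walk from $j$ follows the same sequence (it starts from $h_c(i)=g_c(j)$'s predecessor value $j$, i.e.\ the walk applies $h_c$ from $j$, which agrees after one step with the chain's continuation after $i$). It therefore reaches $e$ within $\sigma$ steps. We must rule out first hitting a different stored endpoint $e'$. That would require the chain's suffix to intersect the endpoint of another sampled chain. This is another collision event among $O(\sigma)$ elements against $n/\sigma^3$ endpoints, with probability $O(\sigma\cdot (n/\sigma^3)\cdot\sigma/n)=O(1/\sigma)$, which is negligible. Re-walking from $x$ then finds $j$ with predecessor $i$; since $\sigma$-step chains with no collision visit $j$ only via $i$, we return $i$.

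The main obstacle is Step 2: getting a per-element guarantee rather than an averaged one. This is exactly where Fiat--Naor condition on sampled high-indegree elements. I would handle it by using $2\sigma$-independence to treat the $O(\sigma)$ hash evaluations on a chain as independent and uniform, and the indegree bound $|f^{-1}(j)|\le\sigma$ to control how much probability mass other preimages of $j$ can steal. If needed, the constants would absorb a factor of $\sigma$ from $|f^{-1}(j)|$ via the $O(\sigma/n)$ collision bound on $h_c$.
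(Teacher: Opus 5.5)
\textbf{Gap in Step~2.} Your per-chain estimate matches the paper: a sampled chain contains $i$ with probability about $\sigma/n$, using $2\sigma$-independence and the indegree bound. The step that turns this into the lemma is missing, though. Summing over the $n/\sigma^3$ chains gives the \emph{expected number} of chains through $i$, which is $\Theta(1/\sigma^2)$. To get a probability you must subtract overlaps, and here they are the same order as the main term.

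The paper supplies exactly this second-order bound. Write $E_x$ for the event that chain $x$ contains $i$. Chain $x'$ can contain $i$ only if one of its $\sigma$ steps lands on one of the $\sigma$ elements of chain $x$, which has probability $O(\sigma^2/n)$ per step. Hence $\Pr[E_x\cap E_{x'}]\le\Pr[E_x]\cdot\sigma^3/n$. Bonferroni then gives
$\frac{n}{\sigma^3}\Pr[E_x]-\binom{n/\sigma^3}{2}\Pr[E_x]\frac{\sigma^3}{n}\ge\frac{n}{2\sigma^3}\Pr[E_x]=\Omega(1/\sigma^2)$.
The factor $\frac12$, from counting each unordered pair once, is what keeps the bound positive.

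Your substitute is to claim the chain through $i$ is collision-free with constant probability, via Step~1. That does not follow. Conditioned on $E_x$, the expected number of other chains meeting chain $x$ is $\frac{n}{\sigma^3}\cdot\Theta(\sigma^3/n)=\Theta(1)$. When every nonempty preimage has size $\sigma$, that constant is about $1$, so a union bound leaves nothing. Likewise, your Step~1 count of $\Theta(n/\sigma^3)$ merging pairs is the same order as the number of chains, so Markov does not leave a constant fraction collision-free. You need the pairwise inclusion--exclusion, or else sample only $\epsilon n/\sigma^3$ starts for a small constant $\epsilon$.

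\textbf{Step~3.} This goes beyond the paper, whose proof only shows that $i$ lies on some sampled chain. That is the form used in Lemma~\ref{lem:function_inversion_all_blocks} and for the missing-items dictionary.

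However, your false-alarm bound $O(1/\sigma)$ treats the other endpoints as independent uniform points. Any chain that merges into $i$'s chain at or after $g_c(j)$ carries its endpoint onto the same path. Such a merge can happen at any of that chain's $\sigma$ steps. So the per-chain probability is $\Theta(\sigma^3/n)$ when $i$ is early in its chain, and the total is $\Theta(1)$. This is not negligible and needs its own constant-level accounting. Genuine collision-freeness would make Step~3 automatic, but that is exactly what Step~2 did not establish.

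Finally, the query walk must start at $g_c(j)=h_c(i)$, not at $j$. The value $h_c(j)=g_c(f(j))$ is unrelated to $i$'s chain.
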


\paragraph{Creating Multiple Clusters.}  To motivate how we set parameters, let us begin with a method that does not quite work.  Consider creating $\Theta(\sigma^2\log \sigma)$ clusters.  For each cluster, we sample a new hash function $g_c$ to construct a new $h_c$, and randomly sample new elements to start the chains.  
The extra $\log \sigma$ term is for Lemma~\ref{lem:function_inversion_all_blocks} below: in short, Fiat-Naor created $\Theta(\sigma^2)$ clusters to ensure each element is returned with constant probability, so 
repeating an extra $\sigma$ times ensures that each element is returned with probability $1 - O(1/\sigma)$.
Unfortunately, this strategy has space usage $\Theta(n(\log \sigma)/\sigma)$ rather than $O(n/\sigma)$.  
Therefore, we adjust parameters to retain the final Lemma~\ref{lem:function_inversion_all_blocks} result while still meeting the $O(n/\sigma)$ space bound; in exchange, the preprocessing and query times will be slightly higher.

We create $\Theta(\sigma^2 \log^3\sigma)$ clusters with parameter $\sigma' := \sigma\log \sigma$.  By Lemma~\ref{lem:function_inversion_cluster_performance}, each cluster requires $O(n/(\sigma\log \sigma)^3)$ space, answers queries in $O(\sigma \log\sigma ( E(f) +\log \sigma))$ time, and can be created in $O(n (E(f)+\log\sigma)/ (\sigma \log \sigma)^2)$ time. By Lemma~\ref{lem:function_inversion_chains}, a given $i\in f^{-1}(j)$ is returned by querying a given cluster with probability $\Omega(1/(\sigma \log \sigma)^2)$.

With this adjustment, we have $O(n/\sigma^3)$ total space over all clusters, and we can give our lemma showing that a given element is likely to be inverted by some cluster.

\begin{lemma}
\label{lem:function_inversion_all_blocks}
   For any $j$, if $i\in f^{-1}(j)$, with probability $1 - O(1/\sigma)$ there exists a cluster $c$ such that there exists a sampled $x$ in cluster $c$ where $x$ finds $i$.
\end{lemma}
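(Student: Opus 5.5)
The plan is to combine Lemma~\ref{lem:function_inversion_chains}, applied with parameter $\sigma' = \sigma\log\sigma$, with independence across clusters. Fix $j$ and $i\in f^{-1}(j)$.

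The first step is to check that the hypotheses of Lemma~\ref{lem:function_inversion_chains} still hold when $\sigma'$ replaces $\sigma$. Each cluster is built with chain length $\sigma'$, $n/\sigma'^3$ sampled starting points, and a $2\sigma'$-independent hash $g_c$. The indegree bound $|f^{-1}(j)|\le\sigma\le\sigma'$ remains valid, since enlarging the parameter only weakens that requirement. The lemma then says that, for a single cluster $c$, querying $c$ on $j$ returns $i$ with probability at least $\alpha/\sigma'^2 = \alpha/(\sigma\log\sigma)^2$ for some constant $\alpha>0$. The lemma's proof bounds the event that some sampled $x$ finds $i$, and the event we need here is that same event; it is implied by (and in the lemma's proof essentially equal to) the event that $i$ is returned. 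So I will use the lemma's bound directly as a lower bound on $\Pr[\exists\,\text{sampled } x \text{ in } c \text{ that finds } i]$.

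The second step is to use independence. Each cluster draws a fresh hash function $g_c$ and fresh uniform samples, so the events ``some sampled $x$ in cluster $c$ finds $i$'' are independent across the $C = \beta\sigma^2\log^3\sigma$ clusters, where $\beta$ is the constant in the $\Theta$. The probability that no cluster succeeds is therefore at most
\[
\left(1-\frac{\alpha}{\sigma^2\log^2\sigma}\right)^{\beta\sigma^2\log^3\sigma} \le \exp(-\alpha\beta\log\sigma) = \sigma^{-\alpha\beta}.
\]
Choosing $\beta \ge 1/\alpha$ makes this at most $1/\sigma$, which gives the claimed probability $1-O(1/\sigma)$.

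The main obstacle is the first step. I need to confirm that Lemma~\ref{lem:function_inversion_chains} really bounds the probability of being \emph{found}, not only of being \emph{returned}. I also need to confirm that its $\Omega(1/\sigma^2)$ constant is uniform when the inversion parameter $\sigma'$ exceeds the indegree bound $\sigma$. That uniformity should hold: the Fiat--Naor argument only requires that each chain of length $\sigma'$ has, with constant probability, $\Omega(\sigma')$ distinct new elements, which uses that $h_c$ has few collisions ($\Pr[h_c(a)=h_c(b)] = O(\sigma/n) \le O(\sigma'/n)$). Once both points are confirmed, the rest is the routine amplification calculation above.
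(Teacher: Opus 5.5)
Your proposal is correct and follows the paper's proof. Apply Lemma~\ref{lem:function_inversion_chains} with $\sigma' = \sigma\log\sigma$ to get per-cluster success probability $\Omega(1/(\sigma\log\sigma)^2)$, then use independence across the $\Theta(\sigma^2\log^3\sigma)$ clusters and $(1-1/y)^y \le 1/e$, with a large enough constant, to make the failure probability $O(1/\sigma)$. The two points you flag (found versus returned, and whether the lemma still applies with $\sigma'$ in place of $\sigma$) are glossed over in the paper, and your resolutions are sound: the proof of Lemma~\ref{lem:function_inversion_chains} in fact lower-bounds the probability that some chain contains $i$ directly, and its collision bound $O(\sigma/n) \le O(\sigma'/n)$ is all that is needed when $\sigma'$ replaces $\sigma$.
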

\iffull
\begin{proof}
    By Lemma~\ref{lem:function_inversion_chains}, $i$ is returned for a given cluster with probability $\Omega(1/(\sigma\log \sigma)^2)$.  
    Since the clusters are independent, the probability that $i$ is not returned by any cluster is 
    \[
    (1 - \Omega(1/(\sigma\log \sigma)^2))^{\Theta(\sigma^2\log^3 \sigma)}.
    \]
    With a sufficiently large number of clusters 
    (setting the constant in the $\Theta(\sigma^2\log^3 \sigma)$ term to be sufficiently large, 
    and using $(1 - 1/y)^{y} \leq 1/e$ for $y > 1$), 
    this is $O(1/\sigma)$.
\end{proof}
\fi

\subsubsection{Inverting All Points.}
\label{sec:inverting_all_points}
Now, we extend the data structure to obtain \emph{all} points in a given element's preimage.

Lemma~\ref{lem:function_inversion_all_blocks} leaves us with $O(n/\sigma)$ expected $i\in [n]$ that are not inverted by the data structure: 
\iffull
in other words, 
\fi
$O(n/\sigma)$ elements $i$ such that $f(i) = j$ for some $j\in [n]$, but $i$ is not returned when finding $f^{-1}(j)$ using the above strategy.  We call these the \defn{missing items}.

We want to be sure that the missing items will also be returned during a function inversion query.  To this purpose, we store another linear-space dictionary with constant-time lookup and expected linear preprocessing time.  The keys will consist of all $j$ such that there is a missing item in $f^{-1}(j)$.  The value for each such $j$ will be all missing $i$ in $f^{-1}(j)$.  (Note that we only store the missing elements as values, not all elements in $f^{-1}(j)$.)

Now, let's describe how to calculate these values.
The items not in any chain can be found 
while creating the chains, 
by 
keeping an array on $[n]$ and marking all elements found by each chain.  (The preprocessing space can be reduced to $O(n/\sigma)$ in exchange for increasing the preprocessing time by $\sigma$, by repeating this process for each set of $n/\sigma$ elements.)
 Then, we iterate through each $i$ not found in any chain, and calculate $f(i)$ in $O(E(f))$ time.  If the key $f(i)\neq \bot$ is not already a stored in the dictionary, we add it (initialized with the empty set as its value). Then we add $i$ to the set stored as the value for key $f(i)$.

With this extra dictionary for elements that are not inverted by a chain, we add an extra step to the query.
 On a query $j$, we first query the dictionary using $j$ to find any missing items in $f^{-1}(j)$ (this takes time $O(1 + |f^{-1}(j)|)$), and then search for $j$ in the chains.

 We have now proven Theorem~\ref{thm:function_inversion_all}.

\subsection{Putting It All Together}
\label{sec:space_efficient_data_structure_definition}

\paragraph{What to Store.}
Our data structure consists of the following parts.  
First, we store the First $k+1$ Mismatches Data Structure from Lemma~\ref{lem:lcp} built on $T$.
Then, we store the truncated tree $\mathcal{T}$.  
Then, we store the function inversion data structure from Theorem~\ref{thm:function_inversion_all} for each $f_{\lambda}$.

\iffull
In addition, if $\tau = 1$ (i.e., our desired space bound is $\Omega(n)$), 
we store the data structure from Lemma~\ref{lem:fast_query} to improve the search time when $r' = 0$, and we store the data structure from Lemma~\ref{lem:fast_function} to improve the time to evaluate $f_{\lambda}(i)$.
\fi

\paragraph{Queries.}
We describe the query as constructing a list $L$ consisting of leaves of $\calT$; our plan is to find all suffixes contained in $L$ using function inversion, and return all that are close to $q$.  
Note that we do not need to actually store $L$ explicitly as the leaves can be handled one by one during a query
\iffull
---we define the algorithm this way to simplify exposition and analysis.  
\else 
.
\fi

We construct $L$ using the following recursive process:
\begin{itemize}
\item to recurse with $r' > 0$, we use the process outlined in Section~\ref{sec:cgl_queries}, outputting any matching pivots found, and adding any leaf of $\calT$ reached to $L$.
\item  The base case is when $r'=0$ for some altered query $q'$ and node $v\in \calT$. When $r'=0$ we first find the traversal destination and the matching nodes of $v\in \calT$ and $q'$.
If $\tau > 1$, this is done using a manual traversal (Definition~\ref{def:traversal_destination}); if $\tau = 1$ this 
is done using Lemma~\ref{lem:fast_query} (see Section~\ref{sec:improving_logs}).

Once we have found the traversal destination and matching nodes, we process them.
For  each node $v'\in P(v, q', \calT)$, if $v'$ is a leaf of $\calT$, we add $v'$ to $L$; if $v'$ is an internal node of $\calT$, we add its pivot to the output.
\end{itemize}

\iffull
After the above process has completed, we have finished constructing $L$.
\fi
Now, for each leaf in $L$, let $\ell$ be the label of the leaf and $\lambda$ be its path label.  
We use the function inversion data structure from Theorem~\ref{thm:function_inversion_all} to find $f^{-1}_{\lambda}(\ell)$.  
Then, for each $i\in f^{-1}_{\lambda}(\ell)$, we use Lemma~\ref{lem:lcp} to find $d_H(q,s_i)$; if $d_H(q,s_i)\leq r$ then we add $s_i$ to the output.

\iffull
\paragraph{Analysis.}
 To begin, we explain why our query is correct---in particular, why our query on $\calT$ will return all strings returned by the same query on $\calT_0$.
 Note that when an internal node of $\calT$ is traversed, the query is compared to the pivot to see if their distance is at most $r$; when a leaf node $\ell$ is traversed, the query is compared to all elements contained in $\ell$.    
 Let $\calT_f$ be the nodes $v\in \calT_0$ that are
 traversed by $q$ in $\calT_0$.
 It is sufficient to show that the traversal of $\calT$ visits all nodes in $\calT_f\cap \calT$.

\begin{lemma}
\label{lem:truncated_query_correct}
All nodes in $\calT_f\cap \calT$ are traversed during a query. Furthermore, $L$ consists exactly of the leaves of $\calT$ that are in $\calT_f$.
\end{lemma}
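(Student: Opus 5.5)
The plan is to argue by induction on depth in $\calT$, comparing the traversal of $q$ in $\calT_0$ with the traversal in $\calT$. First I will pin down that $\calT$ is exactly the subtree of $\calT_0$ obtained by cutting every node whose set has size at most $\sigma$. The internal nodes of $\calT$ are precisely the nodes of $\calT_0$ with $|S|>\sigma$. Each such node has the same pivot, the same median LCP $m$, the same recursive and altered subsets, and the same children as in $\calT_0$. The leaves of $\calT$ are the first nodes on each root-to-leaf path of $\calT_0$ with $|S|\le\sigma$. Moreover, the query on a node of $\calT$ is determined only by $(q',r')$ and that node's pivot and subsets. So for any internal node $v\in\calT$ visited with state $(q',r')$, the set of (child, altered query, radius) triples spawned is identical in the two trees, since both are read off Tables~\ref{table:queries} and~\ref{table:match_queries}.

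The induction claim is that every node $v\in\calT_f\cap\calT$ is reached in $\calT$ with the same altered query and radius as in $\calT_0$. I will be careful here that a node might be reached with several states; by Lemma~\ref{lem:no_duplicates}-style reasoning, however, each node is reached via a unique parent edge. The base case is the root. For the inductive step with $r'>0$ at an internal node, the children agree by the observation above. For a DFS call, descendants visited by DFS in $\calT_0$ that lie in $\calT$ are visited by a DFS in $\calT$. When the DFS reaches a leaf of $\calT$, that leaf is added to $L$, which accounts for the remaining descendants of that leaf in $\calT_0$.

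The main obstacle is the $r'=0$ case, because $\calT_0$ processes it via $P(v,q',\calT_0)$ plus DFS while $\calT$ uses $P(v,q',\calT)$ and the traversal destination. Here I will show that the nodes of $\calT_f$ below $v$ are exactly the manual traversal path of $q'$, together with the nodes of $P(v,q',\calT_0)$ and their DFS subtrees. Intersecting with $\calT$, the traversal path in $\calT$ is the prefix of the $\calT_0$ path ending at the traversal destination in $\calT$, which is a leaf of $\calT$ and is put into $L$. The matching nodes in $\calT$ are likewise the matching nodes in $\calT_0$ that survive in $\calT$. For a matching node of $\calT_0$ that is cut off, I must check that its nearest ancestor in $\calT$ is a leaf of $\calT$ that lies on the path or in $P(v,q',\calT)$, and hence is added to $L$. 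I expect this to follow from the matching condition being inherited along the manual traversal, and this is the delicate bookkeeping step.

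Finally, for the second claim that $L$ is exactly the leaves of $\calT$ in $\calT_f$: the forward inclusion comes from the inductive statement, since every leaf of $\calT$ in $\calT_f$ is reached. For the reverse inclusion, every leaf added to $L$ is added only through a recursive call, DFS, traversal destination, or matching node, and each of these mirrors a visit in $\calT_0$. So $L\subseteq\calT_f$.
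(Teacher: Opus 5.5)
Your proposal is correct and follows the same route as the paper. The paper's proof only observes two things: for $r'>0$ the traversals of $\calT$ and $\calT_0$ coincide until a leaf of $\calT$ is reached and put into $L$, and for $r'=0$ the traversal destination and the matching nodes $P(v,q',\calT)$ recover exactly the nodes of $\calT$ that the manual traversal plus DFS would visit in $\calT_0$; your induction is a more detailed version of this. The step you defer does go through: a string with prefix $q'$ can leave the manual traversal of $q'$ at a node with pivot $p$ only when $q'$ is a prefix of $p$ and $|q'|\le m$, and then only into one of $S_{<\ell}, S_{>\ell}, S_{>m}$, all of whose strings agree with $p$ on the first $m\ge|q'|$ characters, so every node below it through unaltered subsets is itself a matching node, including the leaf of $\calT$ above any cut-off matching node.
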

\begin{proof}
 This follows immediately from the traversal method.
 Consider traversing $\calT_0$ using the method described in Section~\ref{sec:cgl_queries}.  For recursive calls with $r > 1$, the traversals are identical until we reach a leaf of $\calT$ (which is added to $L$).  For a recursive call on a node $v\in \calT_0$ with $r = 0$, we will by definition find all descendants of $v$ in $\calT$ traversed using a DFS call.  
 \end{proof}

 Let's bound the space and preprocessing time for $\calT$.  Using Lemma~\ref{lem:truncated_num_nodes} and that each node can be stored in  $O(k)$ space, we have that $\calT$ requires  $O(kn\binom{\log n}{k}/\sigma)$ space.  The time to build $\calT$ is upper bounded by the time to build $\calT_0$: 
 $O(n\binom{\log n}{k}k^2\log n)$ time if $\tau = 1$, and 
 $O(n\binom{\log n}{k}k^2\tau\log n)$ time if $\tau > 1$
 (Lemma~\ref{lem:cgl_preprocessing}).
 
The First $k+1$ Mismatches Data Structure requires $O(n/\tau)$ space and $O(n)$ preprocessing time (Lemma~\ref{lem:lcp}).

The function inversion data structure requires $O(n/\sigma)$ space and $O(n (E(f) + \log \sigma) \log \sigma)$ expected preprocessing time for each $\lambda$ (Theorem~\ref{thm:function_inversion_all}).  We multiply by the number of possible $\lambda$, namely, $\binom{\log n}{k}$ (from Observation~\ref{obs:path_label_basics}).  This gives $O(n\binom{\log n}{k} /\sigma)$ total space for all function inversion data structures.
To simplify preprocessing time, we loosely upper bound $E(f)$ by assuming we use a manual traversal: $E(f) = O(k\tau\log n)$; this upper bounds the preprocessing time by $O(n\binom{\log n}{k} k\tau \log n\log \sigma)$. 

If $\tau  = 1$, we also store the data structures from Lemmas~\ref{lem:fast_function} and~\ref{lem:fast_query}.  This requires $O(n\binom{\log n}{k}/\sigma + n)$ space and $O(kn\binom{\log n}{k}/\sigma + kn)$ expected preprocessing time.

 Summing, the expected processing time is
\[
O\left(n\binom{\log n}{k} (k^2\log n + k\tau\log n \log \sigma)\right).
\]

Finally, we bound our query time.  First, we bound the size of $L$.

\begin{lemma}
\label{lem:num_leaves}
    $|L| = O\left( 3^k\binom{\log n}{k} + ({\occ})/\sigma\right)$
\end{lemma}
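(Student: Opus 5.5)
We split the leaves of $L$ according to how they are added. A leaf enters $L$ in one of two ways: either it is reached by a recursive call with $r'>0$ (a leaf of $\calT$ reached while following Table~\ref{table:queries} or Table~\ref{table:match_queries}, including DFS calls), or it is a matching node in $P(v,q',\calT)$ found in an $r'=0$ base case, or a leaf reached by a DFS. We bound the first kind by $O(3^k\Smab)$ and the second kind by $O(\occ/\sigma)$.

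For the first kind, note that the non-DFS recursive calls with $r'>0$ traverse exactly the nodes of $\calT_q\cap\calT$. This holds because the traversal of $\calT$ coincides with that of $\calT_0$ until a leaf of $\calT$ is reached (Lemma~\ref{lem:truncated_query_correct}). Each such leaf is a node of $\calT_q$, and the same is true of each node $v$ at which an $r'=0$ call is initiated. Hence by Lemma~\ref{lem:fast_query_calls} there are $O(3^k\Smab)$ such leaves. Moreover, for each $r'=0$ base case, the traversal destination contributes at most one leaf that need not itself be a matching node. This gives at most one extra leaf per call, again $O(3^k\Smab)$ in total.

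For the second kind, consider a leaf $\ell\in L$ that lies in $P(v,q',\calT)$, or is reached inside a DFS call. Every suffix contained in $\ell$ (i.e.\ every pivot of a descendant of $\ell$ in $\calT_0$) is a solution string. For matching nodes, $q'$ is a prefix of the pivot of $\ell$. The plan is to argue that, in $\calT_0$, the query from $v$ with radius $0$ would DFS the entire subtree below $\ell$ or report all of its pivots; the same holds for leaves under DFS calls. Since $\ell$ is a leaf of $\calT$ at which we truncated, it contains more than $\sigma/7$ suffixes: its parent has size greater than $\sigma$, and by Lemma~\ref{lem:set_size} each child has size at least roughly a constant fraction of that. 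By Lemma~\ref{lem:no_duplicates}, each solution string is charged to at most one leaf, so the number of such leaves is $O(\occ/\sigma)$.

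The main obstacle is the lower bound on the size of leaves of this second kind. A child of a node of size greater than $\sigma$ can be tiny, so a per-leaf bound fails in general. Instead I would charge these leaves to their parents. Each such leaf has a parent in $\calT$ of size greater than $\sigma$, all of whose relevant unaltered descendants are output; in particular, DFS outputs the entire set $S$ of the parent, and a matching node's parent, reached via the same DFS, also has its whole set reported. Each parent has at most $7$ children, and distinct parents charged this way have disjoint output sets by Lemma~\ref{lem:no_duplicates}. This gives $O(\occ/\sigma)$ leaves of the second kind. Combining the two bounds yields the lemma.
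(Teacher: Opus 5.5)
Your bound on the leaves reached by non-DFS calls matches the paper's (via $\calT_q$ and Lemma~\ref{lem:fast_query_calls}). Your diagnosis of the other case is also right: by construction a leaf of $\calT$ has $|S|\le\sigma$ and may even be empty, so no per-leaf lower bound exists. The paper's proof in fact asserts that every leaf of $\calT$ has at least $\sigma$ descendants in $\calT_0$, which is exactly the step you reject. You should also delete your own sentence claiming more than $\sigma/7$ suffixes per leaf, since Lemma~\ref{lem:set_size} only gives upper bounds.

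The gap is in your replacement argument, which rests on two claims that are false as stated. First, a DFS'd or matching leaf need not have a parent whose whole set is reported. Suppose the leaf hangs directly off a node $w$ of the query's own traversal at which $q'$ is a prefix of $p_w$. Then only some children of $w$ are DFS'd; in the row $i=m$ of Table~\ref{table:match_queries} this is only $S_{>m}$, which can be empty while the rest of $S_w$ is not reported wholesale. Second, the charged parents need not have disjoint sets. Inside one DFS, an internal node $w_1$ of $\calT$ can have both leaf children and an internal child $w_2$ that again has leaf children, so $S_{w_2}\subset S_{w_1}$. Lemma~\ref{lem:no_duplicates} makes the reported pivots distinct, but it does not make the sets of nested parents disjoint. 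So summing $|S_w|>\sigma$ over charged parents does not bound their number by $\occ/\sigma$.

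Both holes can be closed, but each needs an ingredient you do not supply.

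\emph{Leaves hanging directly off the traversal} must be charged to the traversal itself.
\begin{itemize}
\item There are at most three per node of $\calT_q$.
\item On a radius-$0$ traversal, a node with $i=m$ makes no further radius-$r$ call, so it contributes $O(1)$ per call.
\item At a node $w$ with $i<m$, the DFS'd children $S_{<\ell},S_{>\ell},S_{>m}$ together contain all of $S_w$ outside $S_{<m}\cup\{p_w\}$. That is at least $|S_w|/2-1=\Omega(\sigma)$ reported suffixes, which pays for them. Such nodes can occur many times along one call, so this charge really is needed.
\end{itemize}

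\emph{Nested DFS} needs the halving argument from the proof of Lemma~\ref{lem:truncated_num_nodes}. Below a DFS root $u$ that is internal in $\calT$:
\begin{itemize}
\item the bottommost internal nodes have disjoint sets of size $>\sigma$;
\item internal nodes with at least two internal children are fewer than these;
\item an internal node $w$ with exactly one internal child has, by Lemma~\ref{lem:set_size}, leaf children of total size at least $|S_w|/2-1$, and these are disjoint across such $w$.
\end{itemize}
Hence the DFS from $u$ reaches $O(|S_u|/\sigma)$ leaves. For constant $\sigma$, use instead that every internal node reached reports a distinct pivot. Since the sets of distinct DFS roots are disjoint by Lemma~\ref{lem:no_duplicates}, summing gives $O(\occ/\sigma)$.
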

\begin{proof}
Consider querying $\calT_0$ with $q$.
Recall that $\calT_q$ consists of the nodes of $\calT_0$ that are traversed during the query $q$ with radius $r > 0$, and not during a DFS call.
By Lemma~\ref{lem:truncated_query_correct} and definition of $\calT_q$, each leaf $\ell \in L$  either has $\ell \in \calT_q$; or DFS is called on $\ell$ when querying $\calT_0$ with $q$.  Since $|\calT_q| = O(3^k\binom{\log n}{k})$ by Lemma~\ref{lem:fast_query_calls}, we must only bound nodes we called DFS on.

All $\ell\in L$ are leaves of $\calT$, so each has $\geq \sigma$ descendants in $\calT_0$ by definition of $\calT$; if we called DFS on $\ell$, all of its $\geq \sigma$ descendants are a part of the output (and contribute to $\occ$). 
By Lemma~\ref{lem:no_duplicates}, each descendant contributes exactly once to $\occ$ across all leaves of $\calT$.  Therefore, there can be $\occ/\sigma$ such leaves.
\end{proof}

\begin{lemma}
\label{lem:final_query_time}
If $\tau = 1$, 
    a query $q$ can be answered in time
    \[
    \tilde{O}\left(|q| + 3^k \sigma^3 \binom{\log n}{k} + \sigma^2 \cdot (\occ)\right).
    \]
     If $\tau > 1$, 
    a query $q$ can be answered in time
    \[
    \tilde{O}\left(|q| + 3^k \sigma^3\tau \log n\binom{\log n}{k} + \sigma^2\tau \log n\cdot (\occ)\right).
    \]
\end{lemma}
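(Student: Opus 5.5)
The plan is to split the query cost into three parts and bound each separately: (i) building $L$ by traversing the internal nodes of $\calT$; (ii) running function inversion on every leaf in $L$; and (iii) checking each recovered suffix against $q$.

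\textbf{Part (i): building $L$.} We first pay $O(|q|)$ to preprocess $q$ for Lemma~\ref{lem:lcp}.

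The internal traversal with $r'>0$ visits nodes of $\calT_q\cap\calT$, so there are $O(3^k\binom{\log n}{k})$ of them by Lemma~\ref{lem:fast_query_calls}. Each costs one first-$(k+1)$-mismatches call: $O(k)$ if $\tau=1$, and $O(k\tau+k\log n)$ w.h.p.\ if $\tau>1$.

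The $r'=0$ base cases are also at most $|\calT_q|$ in number.
\begin{itemize}
\item If $\tau=1$, each is handled by Lemma~\ref{lem:fast_query} in $\mathrm{poly}(k,\log\log n)$ time plus $O(k)$ per matching node.
\item If $\tau>1$, we do a manual traversal of height $O(\log n)$ at $O(k\tau+k\log n)$ per node, which is $\tilde O(\tau\log n)$ per node.
\end{itemize}
Matching nodes that are internal pivots are output, so they are charged to $\occ$ via Lemma~\ref{lem:no_duplicates}. Matching nodes that are leaves land in $L$. Altogether this part costs $\tilde O(|q|+3^k\binom{\log n}{k}\tau\log n+\occ)$, which is dominated by the final bound.

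\textbf{Part (ii): function inversion on leaves.} By Lemma~\ref{lem:num_leaves}, $|L|=O(3^k\binom{\log n}{k}+\occ/\sigma)$. For each leaf we invoke Theorem~\ref{thm:function_inversion_all} on $f_\lambda$. The preimage bound holds because every leaf of $\calT$ contains $O(\sigma)$ suffixes; if needed, replace $\sigma$ by $c\sigma$, which changes only constants.

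Each inversion costs $O(\sigma^3(E(f)+\log\sigma)\log^4\sigma)$. The inputs for $E(f)$ are:
\begin{itemize}
\item $\tau=1$: Lemma~\ref{lem:fast_function} gives $E(f)=O(k^3+k^2\log\log n)$, i.e.\ $\tilde O(1)$, so each inversion is $\tilde O(\sigma^3)$.
\item $\tau>1$: the manual traversal gives $E(f)=O(k\tau\log n)$, so each inversion is $\tilde O(\sigma^3\tau\log n)$.
\end{itemize}
Multiplying by $|L|$:
\begin{itemize}
\item for $\tau=1$, $\tilde O(3^k\sigma^3\binom{\log n}{k}+\sigma^2\cdot\occ)$;
\item for $\tau>1$, $\tilde O(3^k\sigma^3\tau\log n\binom{\log n}{k}+\sigma^2\tau\log n\cdot\occ)$.
\end{itemize}

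\textbf{Part (iii): checking recovered suffixes.} Each inverted leaf yields $O(\sigma)$ suffixes $s_i$. Testing $d_H(q,s_i)\le r$ with Lemma~\ref{lem:lcp} costs $O(k)$, or $O(k\tau+k\log n)$ when $\tau>1$. This totals $O(\sigma|L|)$ tests, which is dominated by part (ii). Correctness follows from Lemma~\ref{lem:truncated_query_correct} together with the w.h.p.\ union bound already discussed.

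\textbf{Main obstacle.} The step needing care is the $\occ$ term. We must confirm that leaves in $L$ outside $\calT_q$ arise only from DFS or matching-node calls, and that each such leaf has at least $\sigma$ contained suffixes, all of which are genuine outputs. Lemma~\ref{lem:num_leaves} packages exactly this. Using it, the per-leaf cost $\sigma^3$ times $\occ/\sigma$ leaves gives the claimed $\sigma^2\cdot\occ$ term.

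The other point to verify is that all suppressed factors are polynomial in $k$, $\log\sigma$ and $\log\log n$, with no stray $\log n$ when $\tau=1$. This holds because Lemmas~\ref{lem:fast_query} and~\ref{lem:fast_function} avoid $\Omega(\log n)$ traversals.
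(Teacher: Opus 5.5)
Your proposal is correct and follows essentially the same route as the paper. Like the paper, you bound the cost of building $L$, invert each leaf with Theorem~\ref{thm:function_inversion_all} (using $E(f_\lambda)$ from Lemma~\ref{lem:fast_function} when $\tau=1$ and a manual traversal when $\tau>1$), check the at most $\sigma$ recovered suffixes with Lemma~\ref{lem:lcp}, and substitute $|L|$ from Lemma~\ref{lem:num_leaves}. The differences are minor: you count the $r'>0$ traversal nodes via Lemma~\ref{lem:fast_query_calls} rather than the paper's $O(|L|)$ branching argument, and for $\tau>1$ each manual traversal actually costs $O(k(\tau+\log n)\log n)$, not $\tilde O(\tau\log n)$, which is harmless because $\sigma>\binom{\log n}{k}\ge\log n$ makes it dominated by the inversion term.
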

\begin{proof}
    Recall that $q$ is initialized in $O(|q|)$ time for Lemmas~\ref{lem:lcp} and~\ref{lem:fast_query} if $\tau = 1$, and $O(|q|\log |q|)$ time if $\tau > 1$.

  Let us bound the time to construct $L$.  
  We can traverse a node of $\calT$ in time $O(k)$ if $\tau = 1$, or $O(k(\tau + \log n))$ if $\tau > 1$. 
  While constructing $L$, each node traversed with $r' > 0$ has at least two children; the number of such nodes traversed is bounded by $O(|L|)$.
  For each node with $r' = 0$, if $\tau = 1$ we invoke Lemma~\ref{lem:fast_query} to quickly find the leaf to add to $L$, and if $\tau > 1$ we do a manual traversal of $O(\log n)$ nodes.  
  
   Summing,  if $\tau > 1$, the time to find $L$ is bounded by
   \[
   O\left(|L|\cdot k(\tau + \log n)\log n\right).
   \]
   
   If $\tau = 1$, the time to find $L$ 
   is upper bounded by 
   \[
    O\left(|L|\cdot (k^2 + k\log\log n)\right).
    \]
  
  In each case, we also find all pivots of nodes along the path that have distance at most $k$ from $q$; in $O(k\cdot \occ)$ additional time if $\tau = 1$ and $O(k(\tau + \log n)\cdot\occ)$ if $\tau > 1$.
   
   Each $\ell\in L$ can be inverted in  $O(\sigma^3 k\tau \log n \log^4 \sigma)$ time if $\tau > 1$ or $O(\sigma^3 (k^3 + k^2 \log\log n + \log \sigma)\log^4 \sigma)$ time if $\tau = 1$ using Lemma~\ref{lem:fast_function} and Theorem~\ref{thm:function_inversion_all}.  
   Each time a leaf in $L$ is inverted, each of the $\leq \sigma$ elements must be checked if it is close to $q$ using Lemma~\ref{lem:lcp}, each in time 
   $O(k(\tau + \log n))$ if $\tau > 1$, or
   $O(k)$ if $\tau = 1$.
   
   Summing, the total time required for a query with
   $\tau > 1$ is 
   \[
   O\left(|q|\log |q| + |L|
   \cdot 
   \sigma^3 k\tau\log n \log^4 \sigma + k(\tau + \log n)\cdot \occ\right).
   \]
   
   For $\tau = 1$,
   \[
   O\left(|q|+ |L|\cdot \sigma^3 (k^3 + k^2\log\log n + \log \sigma)\log^4 \sigma  + k\cdot \occ\right).
   \]
   Substituting for $|L|$ using Lemma~\ref{lem:num_leaves}, we obtain the lemma.
\end{proof}
\fi

\section{Improving Final Log Factors}
\label{sec:improving_logs}

\iffull
In this section, 
\else 
In the full version of the paper,
\fi
we prove the following two lemmas, which improve our query performance by a $\log n$ factor if $\tau = 1$.  

We show how to evaluate $f_{\lambda}(i)$ in $\poly(k,\log\log n)$ time.

\begin{restatable}{lemma}{fastfunctionlem}
\label{lem:fast_function}
    There exists a data structure using $O(n\Smab/\sigma + n)$ space that can be built in $O(kn\binom{\log n}{k}/\sigma + kn)$ expected time such that for any given $\lambda$ and $i$, the data structure can find $f_{\lambda}(i)$ in $E(f_{\lambda}) = O(k^3 + k^2\log\log n)$ time.
\end{restatable}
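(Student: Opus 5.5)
The idea is to replace the $O(\log n)$-node manual traversal of $\calT$ that computes $f_{\lambda}(i)$ by a short sequence of ``jumps.'' Each jump handles one maximal run of \texttt{u}-edges in $\lambda$, and there are at most $k+1$ such runs. Write $\lambda = u^{e_0}\,\texttt{a}\,u^{e_1}\,\texttt{a}\cdots\texttt{a}\,u^{e_t}$ with $t\le k$. Evaluating $f_\lambda(i)$ means following the unique path in $\calT$ that is consistent with $\lambda$ and contains $s_i$. At an \texttt{a}-edge, the child to follow is the altered subset containing the pivot-altered copy of $s_i$, and we update the alteration list of $s_i$ accordingly. Along a run of \texttt{u}-edges, the path is exactly the manual traversal (Definition~\ref{def:traversal_destination}) of the current altered string $(s_i,a)$, truncated after $e_j$ steps. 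So the task reduces to: given a node $v$ and a $k$-altered suffix $(s_i,a)$ contained in $v$, find the node reached after $e$ steps of manual traversal, or the leaf of $\calT$ if we hit it first, in $O(k^2+k\log\log n)$ time. Summing over the $k+1$ runs gives the claimed $O(k^3+k^2\log\log n)$. At each \texttt{a}-edge we also need the LCP with the pivot to decide which altered child to take; this costs $O(k)$ by Lemma~\ref{lem:lcp}.

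For the jump itself, I would proceed as the CGL paper does for exact matching, generalized to altered strings. Consider a maximal \texttt{u}-only subtree hanging from a node $v$ (the nodes reachable from $v$ by \texttt{u}-edges only). All strings in it share the same alteration history up to $v$, and the recursive-subset splits are BST-like. Concretely, the split at each node is determined by lexicographic order and LCP with the pivot, and both are determined by the string's position in the sorted order of that node's set. So the manual traversal of $s$ from $v$ is determined by the lexicographic rank of $s$ among the altered strings of $v$, together with $LCP$ comparisons against the neighboring strings. I would store, for each such \texttt{u}-subtree root (only those that are internal nodes of $\calT$, hence $O(n\Smab/\sigma)$ of them by Lemma~\ref{lem:truncated_num_nodes}), a \predds{} over the sorted altered strings keyed by position. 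The traversal destination of $s$ is then found by predecessor search over the leaf intervals in $O(\log\log n)$ time. The rank of $s$ itself is found by comparing $s$ to boundary strings with Lemma~\ref{lem:lcp} in $O(k)$ per comparison. The depth cap $e$ is handled by storing depths and following level-ancestor pointers.

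For space, charge each leaf interval to leaves of $\calT$; this gives $O(n\Smab/\sigma)$. Add $O(n)$ for the LCP structure of Lemma~\ref{lem:lcp}. Preprocessing is a traversal of $\calT$ plus the \predds{} builds, done in expected linear time in the stored size.

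I expect the main obstacle to be computing the rank of an altered string within the sorted order of a \texttt{u}-subtree without comparing against $\Omega(\log n)$ strings. The suffix array of $T$ does not reflect the alterations. My first approach would be to note that $(s_i,a)$ agrees with the unaltered suffix $s_i$ beyond its last alteration position. So I would locate it via the predecessor structure keyed on (unaltered suffix rank, alteration signature), and then fix up the answer with $O(k)$ kangaroo-jump LCP calls. Each fixup step costs $O(k)$ by Lemma~\ref{lem:lcp}, for $O(k^2)$ per jump, which matches the target bound.
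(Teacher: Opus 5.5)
Your outer reduction is the same as the paper's. You use one jump per maximal run of \texttt{u}-edges of $\lambda$: find the traversal destination of the current altered suffix, back up with a level-ancestor query to the depth of the next \texttt{a}, pivot-alter, and step to the altered child. That gives $O(k)$ jumps of $O(k^2+k\log\log n)$ each. The gap is inside the jump, which is where essentially all of the paper's work lies.

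A predecessor query in a \predds{} needs an integer key, namely the rank of the $k$-altered suffix $(s_i,a)$ among the stored boundary strings, and you give no way to compute it within budget. Comparing $(s_i,a)$ against boundary strings with Lemma~\ref{lem:lcp} is binary search: $\Theta(\log n)$ comparisons, which is exactly the factor the lemma has to remove. Your proposed fix does not work either. The lexicographic position of an altered string is not determined by the suffix-array rank of $s_i$ plus an ``alteration signature'': an alteration at position $t$ can place the string anywhere among the strings sharing its first $t-1$ characters. Moreover, the boundary strings are themselves altered, so even the unaltered prefix of $(s_i,a)$ cannot be positioned among them using the suffix array of $T$. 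Kangaroo-jump LCP calls only compare two \emph{given} strings, so they cannot by themselves locate a string inside a sorted set. Separately, your boundaries must be chosen so that every lexicographic run lies in a single leaf, and $S_{<m}$ occupies two runs. The paper guarantees this with signpost substrings (Lemma~\ref{lem:signpost_substrings}).

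The missing ingredient is the paper's Pivot Search Data Structure (Lemma~\ref{lem:pivot_search}). It is a compressed trie over all pivots and signposts, rebalanced by median splits to height $O(\log|P|)$. Every trie node stores a \predds{} over the ``trie index'' of the suffixes it contains. The trie index is the position in a precomputed sorted merge of all such contained suffixes with all suffixes of $T$, and an array gives the trie index of any unaltered suffix of $T$ in $O(1)$ time. The search then repeats the following round:
\begin{enumerate}
\item At the current trie node, find the contained suffix with maximum LCP with $s_i$ in $O(k+\log\log n)$ time.
\item Use a measured-ancestor query to jump to the depth of the next alteration.
\item Take one explicit child step using the altered character.
\end{enumerate}
This takes $O(k)$ rounds of $O(k+\log\log n)$ each.

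This structure contains $O(|P|\log|P|)$ suffixes and costs $O(k|P|\log^2|P|)$ to build. That is why the paper builds it only on a waypoint tree $\calT^*$ truncated at size $\sigma\log^2 n$, and finishes each evaluation with an $O(\log\log n)$-step manual walk in $\calT$. Your space and preprocessing accounting stores everything at the granularity of $\calT$ and assumes linear-time construction. It does not anticipate this extra logarithmic overhead, so even with a location structure added, the claimed $O(n\Smab/\sigma+n)$ bounds would not follow as you stated them.
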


We show how to answer queries with search radius $r' = 0$ in $\poly(k,\log\log n)$ time.

\begin{restatable}{lemma}{fastquerylem}
\label{lem:fast_query}
There is a data structure using $O(n\Smab/\sigma +n)$ space that can be built using $\calT$ in $O(n\Smab/\sigma +n)$ expected time and can be initialized for $q$ in $O(|q|)$ time such that, for any $k$-altered query $q'$, can find the matching nodes $P(v,q', \calT)$ and the traversal destination $v_T$ of $q'$ and $v$ in 
\iffull
 time 
 \[
 O\left(k^2 + k\log\log n +  k|P(v,q',\calT)|\right).
 \]
 \else
 $
 O\left(k^2 + k\log\log n +  k|P(v,q',\calT)|\right)
 $
 time.
 \fi
\end{restatable}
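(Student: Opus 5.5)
The plan is to emulate the CGL approach: pair a compressed trie with predecessor structures, and generalize it to handle alterations. Fix a node $v$ of $\calT$ and a $k$-altered query $q'$. Along the manual traversal from $v$, the recursive subset entered at each node is determined by $LCP(q',p)$ relative to $m$, and by the lexicographic comparison. The key structural claim I would prove first concerns a maximal chain of unaltered (\texttt{u}) edges below $v$. All nodes on such a chain share the same path label prefix and hence the same alteration pattern inherited from $v$. So the pivots along the chain are $k$-altered suffixes of $T$ whose alterations all occur at positions fixed by the path label, and the manual traversal along the chain is a binary-search-like descent keyed on a single quantity, namely the LCP of $q'$ with the pivots. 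I would then show the traversal of $q'$ from $v$ to $v_T$ is a path of \texttt{u}-edges only, since a radius-$0$ manual traversal never takes an altered child.

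Next I would precompute, for every node $u$ of $\calT$ that is the target of an \texttt{a}-edge or is the root (the heads of \texttt{u}-chains), a compressed trie over the altered strings contained in $u$. Its size is $O(|S_u|)$, but I would store only a truncated version: leaves of $\calT$ are collapsed, so the total size over all heads is $O(n\Smab/\sigma + n)$ by Lemma~\ref{lem:truncated_num_nodes}. Each trie edge is annotated with the \calT-node at which strings through that edge get separated. Locating $q'$ in this trie is done the way CGL does it. The trie of an altered set agrees with the suffix tree of $T$ except at the at most $k$ altered positions, so I would split $q'$ into at most $O(k)$ maximal unaltered segments (using the mismatches of Lemma~\ref{lem:lcp}). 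For each segment I would descend with one ``jump'' using a weighted-ancestor/predecessor query ($y$-fast trie over string depths). This costs $O(k)$ segments times $O(\log\log n)$ plus $O(k)$ for alteration bookkeeping per segment, giving $O(k^2+k\log\log n)$. This identifies the deepest trie locus of $q'$, and hence via the annotations the traversal destination $v_T$.

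For the matching nodes $P(v,q',\calT)$, I would use the fact that the pivots of which $q'$ is a prefix are exactly the pivots lying in the trie subtree below the locus of $q'$. Each is reported in $O(k)$ time, by storing at each trie locus a pointer list in DFS order. To handle queries starting at an arbitrary $v$ rather than at a chain head, I would restrict to descendants of $v$ via DFS intervals. Initializing $q$ in $O(|q|)$ time just means building the LCP structure of Lemma~\ref{lem:lcp} for $q$.

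The main obstacle is the second step. We must show that the trie over \emph{altered} strings can be navigated using only the unaltered suffix structure of $T$ plus $O(k)$ corrections, without storing $\Theta(\log n)$ data per string. The difficulty is that different strings in the same set carry alterations at different positions, and this would break the suffix-tree correspondence. I would first try to prove that within one chain head all strings share identical alteration positions up to the current depth (since the alterations were introduced at common ancestors relative to the same pivots). I would also try to prove that each pivot-alteration makes the string agree with a pivot, which itself is a suffix of $T$ with alterations. I would then handle the disagreement by an extra branching character per alteration, costing the extra factor $k$.
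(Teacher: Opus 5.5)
\textbf{There is a genuine gap.} It sits exactly at the step you flag as the main obstacle, and the fix you propose rests on a false claim.

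A path label records the \emph{tree depths} at which alterations were applied, not the \emph{string positions}. Pivot-altering $s$ places the alteration at $LCP(s,p)+1$, which depends on $s$. In the paper's own example, $\hat{S}_{<m}$ contains \texttt{BAT\$} (altered at position 1) and \texttt{BANDA\$} (altered at position 3). Deeper in the tree, each stored string carries up to $k$ alterations at its own positions. So the trie over a chain head does not ``agree with the suffix tree of $T$ except at $k$ positions.'' Your one weighted-ancestor jump per unaltered segment of $q'$ is therefore unsupported: such a jump presupposes a stored string with maximal LCP with the segment, and the unaltered suffix structure of $T$ does not give you one.

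The paper's Pivot Search Data Structure (Lemma~\ref{lem:pivot_search}) supplies what is missing:
\begin{itemize}
\item a single compressed trie of height $O(\log|P|)$ over the set $P$ of pivots and signpost substrings;
\item a \emph{trie index} sorting all $O(|P|\log|P|)$ suffixes contained in trie nodes together with all suffixes of $T$, so an unaltered suffix can be located below any trie node by a $y$-fast-trie predecessor search, one LCP computation, and a measured-ancestor jump;
\item insertion of the suffixes of $q$ into $T$'s suffix tree at initialization, giving the suffix $s_j$ of $T$ maximizing $LCP(s_j,q_i)$;
\item Lemma~\ref{lem:few_pivot_recursions}, which shows that whenever $q_i$ departs from $s_j$ inside the trie, every stored string below that edge carries an alteration on it, so such departures happen at most $k+1$ times.
\end{itemize}
That charging argument, not a shared-alteration-position property, is what bounds the number of kangaroo jumps by $O(k)$.

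Two further steps do not go through as stated.

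\emph{Locus to $v_T$.} Mapping the locus of $q'$ to $v_T$ via ``annotations'' skips the real difficulty. Recursive subsets are not lexicographic intervals ($S_{<m}$ is a union of two), and each node's decision depends on its own pivot and its own median $m$, not on a single LCP quantity. The paper needs the signpost substrings and Lemma~\ref{lem:signpost_substrings} so that a predecessor search among signposts of descendants of $v$ returns the destination. That structure is per start node $v$, which DFS intervals over a head's trie do not provide. The same issue affects reporting $P(v,q',\mathcal{T})$, where the paper uses per-node sorted pivot lists and $y$-fast tries.

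\emph{Space and preprocessing.} Tries over $S_u$ for all chain heads have total size of order $n$ per path label, not $n/\sigma$. The appeal to Lemma~\ref{lem:truncated_num_nodes} after ``collapsing leaves of $\mathcal{T}$'' is unjustified, since a leaf's set need not be a contiguous block of the trie. Even the paper's leaner structures cost a $\log$ factor in space and a $\log^2$ factor in preprocessing. The paper absorbs these by building everything on a waypoint tree truncated at node size $\sigma\log^2 n$. It then finishes with DFS from the matched waypoint leaves and an $O(\log\log n)$-step manual walk in $\mathcal{T}$ below $v_T$. Your plan has no such device, so it does not meet the $O(n\binom{\log n}{k}/\sigma+n)$ space and preprocessing bounds.
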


\iffull
\subsection{Summary of Techniques}
\label{sec:fast_prelims}

We begin with a reduction: we show that the key is to be able to quickly find the traversal destination of an altered query or an altered suffix.

Let us explain this reduction for $f_{\lambda}(i)$, that is to say, for Lemma~\ref{lem:fast_function}.  
In short, the challenge is that while $\lambda$ tells us at what depth in the tree the alterations take place, it does not tell us what the alterations are---the alterations themselves depend on the pivot of the respective nodes in the tree.  We use kangaroo jumps (see~\cite{NicolaeRajasekaran17}) to recover the alterations one at a time.

Here is how the kangaroo jumps work.
First, we find the traversal destination of $s_i$, ignoring $\lambda$.  Then, we look at the depth of the first alteration in $\lambda$ (i.e., the index of first character \texttt{a}).  We use a data structure that allows us to to jump to the ancestor node at that depth in constant time.  We then traverse one step in the tree to the \emph{altered} child $c$, and alter $s_i$ to obtain $\hat{s_i}$.  After that, we recurse: we find the traversal destination of $\hat{s_i}$ for node $c$, using the remaining alterations in $\lambda$.  Each time we recurse we handle a new alteration in $\lambda$; after $k+1$ total recursions we are done.

This technique shows that we can build an efficient data structure for Lemma~\ref{lem:fast_function} if we can quickly calculate the traversal destination for an altered string.  In Section~\ref{sec:fast_lemmas_proof} we give this reduction in more detail for both Lemmas~\ref{lem:fast_function} and~\ref{lem:fast_query}.  Then, in Section~\ref{sec:traversal_destination}, we show how to quickly calculate the traversal destination.  Our data structure for calculating the traversal destination, in turn, relies on another subroutine, the ``pivot search data structure,'' which we give in Section~\ref{sec:pivot_search_proof}.

\subsection{Reducing to Finding the Traversal Destination}
\label{sec:fast_lemmas_proof}

In this subsection, we give a proof of Lemmas~\ref{lem:fast_function} and~\ref{lem:fast_query}.  For both lemmas, we assume that we are able to quickly find the traversal destination of any $k$-altered suffix or $k$-altered query; we give the data structure to complete these queries in Section~\ref{sec:traversal_destination} (see Lemma~\ref{lem:traversal_destination_fast} for precise bounds).

\subsubsection{Proof of Lemma~\ref{lem:fast_function}.}
To begin, we point out that (looking ahead) the bounds of Lemma~\ref{lem:traversal_destination_fast} have an extra log factor space, and an extra $\log^2$ factor in preprocessing time---we cannot use $\calT^* = \calT$ as the subtree and still retain our desired bounds.

Therefore, we use the method in Section~\ref{sec:truncating_cgl} (see the discussion before Lemma~\ref{lem:truncated_num_nodes}) to truncate the tree $\calT_0$ to nodes of size $O(\sigma\log^2 n)$.
    We call the resulting tree the \defn{waypoint tree} $\calT^*$.
By Lemma~\ref{lem:truncated_num_nodes}, $\calT^*$ has $O(n/(\sigma/\log^2 n))$ nodes.  
We immediately have that $\calT^*$ is a subtree of $\calT$.
This means we can use $\calT^*$ in Lemma~\ref{lem:traversal_destination_fast} while retaining $O(n\binom{\log n}{k}/\sigma + n)$  space and $O(n\binom{\log n}{k}/\sigma + n)$ preprocessing time.

We store a pointer from each node of the waypoint tree $\calT^*$ to the corresponding node of $\calT$.  We also build a measured-ancestor data structure on the nodes of $\calT^*$; this data structure takes linear space, and given any node in $\calT^*$ and a target depth, allows us to jump to the ancestor of the node at that depth in constant time~\cite{AmirLaLe07}.

Let us briefly describe how our algorithm works; we will give a more detailed description below.
Our goal is to calculate $f_{\lambda}(s_i)$: that is to say, find the node in $\calT$ with path label $\lambda$ that contains $s_i$.
 Our algorithm proceeds in two steps.  First, we go as far as we can in $\calT^{*}$: we find the leaf of $\calT^{*}$ whose path label is a prefix of $\lambda$ and that stores $s_i$.  
We can do this quickly using Lemma~\ref{lem:traversal_destination_fast}.
 Then, we follow the pointer to the corresponding node in $\calT$; we traverse the rest of $\calT$ one node at a time to find $f_{\lambda}(s_i)$.

\paragraph{How to Perform a Query.}
We want to calculate $f_\lambda(i)$ for a given $\lambda$ and $i$ using Lemma~\ref{lem:traversal_destination_fast} built on $\calT^{*}$.

We proceed with the following recursive algorithm.
We maintain a vertex $v$ and a set of alterations $a$; to begin, $v$ is the root of the tree and $a$ is empty.

We use Lemma~\ref{lem:traversal_destination_fast} to find the traversal destination of $(s_i,a)$ for $v$ with respect to $\lambda$.
Let $d$ be the index of the first alteration in $\lambda$; we split into two cases.

If $d$ is less than the depth of $\ell$, 
we jump to the ancestor of $\ell$ with depth $d$; call this ancestor $v_{\alpha}$.  
We pivot-alter $(s_i, a)$ by adding an alteration to $a$. 
Then, we find the altered set that is a child of $v_{\alpha}$ that contains $(s_i, a)$,  
set $v$ equal to this child, remove the first alteration from $\lambda$, and recurse.  If the child does not exist we return $\bot$.

If $d$ is greater than or equal to the depth of $\ell$, we follow the pointer to $\calT$.  
Then, we walk through $\calT$ beginning at $\ell$ to find $f_{\lambda}(i)$.  
By definition, the set $S$ for $\ell$ has size at most $\sigma\log^2 n$; we traverse $\calT$ until reaching a leaf whose set has size at most $\sigma$. By Lemma~\ref{lem:set_size}, this final traversal visits $O(\log\log n)$ nodes of $\calT$.

\paragraph{Correctness.}
Consider calculating $f_{\lambda}(i)$ by traversing $\calT$ using $s_i$.  Let $v_l$ be the last node traversed; by definition, either $v_l$ has label $f_{\lambda}(i)$, or $f_{\lambda}(i) = \bot$.  Notice that during this traversal, for any step without an alteration in $\lambda$, the traversal visits the child whose recursive subset contains the current string---that is to say, the process is identical to the traversals in Definition~\ref{def:traversal_destination}.

We maintain the invariant that after each recursive call, $v$ is on the path from the root to $v_l$.  To begin, $v$ is the root, so the invariant is trivially satisfied.

Consider an iteration beginning at some node $v$ which is on the path from the root to $v_l$.  If $d$ is greater than or equal to the depth of $\ell$, then there are no alterations between $v$ and $\ell$, and $d$ is on the path from the root to $v_l$; the invariant is satisfied.  
If $d$ is less than the depth of $\ell$, we jump to $v_{\alpha}$.  There are no alterations between $v$ and $v_{\alpha}$, so $v_{\alpha}$ is on the path from the root to $v_l$.  We traverse to a child of $v_{\alpha}$; this child must also be on the path from the root to $v_l$ by definition.

After the recursive calls are finished, we traverse $\calT$ to find $f_{\lambda}(i)$.  Since this final traversal starts at an ancestor of $v_l$ by the invariant, we will correctly obtain $f_{\lambda}(i)$.

\paragraph{Analysis.}
Each iteration requires $O(k^2 + k\log\log n)$ time.  
Each time we iterate, we remove an alteration from $\lambda$; since $\lambda$ begins with at most $k$ alterations, 
we obtain $O(k^3 + k^2\log\log n)$ query time.

The space and preprocessing follow immediately from Lemma~\ref{lem:traversal_destination_fast}.

\subsubsection{Proof of Lemma~\ref{lem:fast_query}}
As in the proof of Lemma~\ref{lem:fast_function} above, the bounds of Lemma~\ref{lem:traversal_destination_fast} have an extra $\log$ factor space, and $\log^2$ factor in preprocessing time.
Therefore, we again use the method from Section~\ref{sec:truncating_cgl} to truncate $\calT_0$ to nodes of size $O(\sigma\log^2 n)$; again we call this the \defn{waypoint tree} $\calT^*$.  We build the data structure from Lemma~\ref{lem:traversal_destination_fast} on $\calT^*$.

\paragraph{How to Perform a Query} 

First, we use Lemma~\ref{lem:traversal_destination_fast} to find the traversal destination $v_T$ of $q'$ in $\calT^*$, and to find $P(v,q', \calT^*)$.

We then need to find the rest of $P(v,q', \calT)$ (i.e., the nodes in $\calT\setminus \calT^*)$.  
These are the nodes of $\calT$ traversed by DFS in the tree traversal algorithm.  By definition, the parent of every such node must also be traversed by the tree traversal algorithm.  This means that every node in $P(v,q',\calT)\setminus \calT^*$ either has an ancestor in $P(v,q',\calT^*)$, or is a descendant of $v_T$.  We handle these two cases one at a time.

First, for every leaf of $P(v, q',\calT^*)$, we find the corresponding node in $\calT$, and traverse it using DFS, adding all nodes to $P(v, q',\calT)$.

Then, we follow the pointer from $v_T$ to find the corresponding node in $\calT$.  Beginning at $v_T$, we recursively traverse $\calT$; any nodes traversed by DFS are added to $P(v,q',\calT)$.

\paragraph{Analysis.}  
Initializing Lemma~\ref{lem:traversal_destination_fast} for $q$ requires $O(|q|)$ time; we do this once per query.

Finding the traversal destination and $P(v, q',\calT^*)$ requires $O(k^2 + k\log\log n + k|P(v, q',\calT^*)|)$ time.
By definition of $\calT^*$, node $v$ has $O(\log n)$ descendants in $\calT$; therefore, a manual traversal starting at $v$ visits $O(\log\log n)$ nodes. Each step in the traversal takes $O(k)$ time.  Doing DFS calls from these nodes visits $O(|P(v, q', \calT)|)$ further nodes.  

Adding all terms together, we obtain $O(k^2 + k\log\log n + k |P(v, q',\calT)|)$ total time.
The space and preprocessing follow immediately from Lemma~\ref{lem:traversal_destination_fast}.

\subsection{Finding the Traversal Destination Efficiently}
\label{sec:traversal_destination}

In this section we prove the following lemma.
\begin{lemma}
\label{lem:traversal_destination_fast}
    Let $\calT^*$ be a subtree of $\calT$.  There exists a data structure that requires $O(|\calT^*|\log|\calT^*| + n)$ space and $O(k|\calT^*|\log^2|\calT^*| + kn)$ processing time that can:
    \begin{enumerate}
        \item  Given a $k$-altered suffix $s_i$ of $T$ and a node $v\in \calT^*$, find the traversal destination of $s_i$ rooted at $v$ in $\calT^*$ in $O(k^2 + k\log\log n)$ time.
        \item When a query $q$ arrives, initialize the data structure for $q$ in $O(|q|)$ time.
        \item If already initialized for $q$, then given a $k$-altered query $q'$ and a node $v$, find the traversal destination of $q'$ rooted at $v$ in tree $\calT^*$, and find $P(v, q', \calT^*)$,  in $O(k^2 + k\log\log n + k|P(v, q', \calT^*)|)$ time.
    \end{enumerate}
\end{lemma}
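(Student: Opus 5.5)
Because alterations are handled one at a time, the main work is the unaltered case; I would then extend it with kangaroo jumps, as in the proof of Lemma~\ref{lem:fast_function}. Consider a manual traversal of a string $s$ from a node $v$. The branch taken at each node $u$ (pivot $p_u$, median LCP $m_u$) depends only on how $LCP(s,p_u)$ compares with $m_u$, and, on a tie, on the lexicographic order of $s$ and $p_u$. This is the same information a trie walk uses. So I would associate with $\calT^*$ a compacted trie on the (altered) pivots of its nodes. For every node $u$, I would also record a heavy-path (centroid) decomposition of the paths of $\calT^*$. Along each heavy path I would store, for every node $u$ on it, the pair $(m_u, \text{direction})$ needed to decide whether a string leaves the heavy path at $u$. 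The space goal is $O(|\calT^*|\log|\calT^*|)$: each pivot appears in $O(\log|\calT^*|)$ heavy-path structures, and each structure is built in $O(k\log|\calT^*|)$ time per element by sorting with Lemma~\ref{lem:lcp}, giving the $O(k|\calT^*|\log^2|\calT^*|)$ preprocessing. The additive $O(n)$ space and $O(kn)$ time cover the LCP structure of Lemma~\ref{lem:lcp} with $\tau=1$ and the suffix tree of $T$, into which $q$ is inserted in $O(|q|)$ time for item~2.

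The plan for item~1 has three steps. First, reduce to at most $k+1$ unaltered segments. Write $s=(s_i,a)$ with alterations at positions $a_1<\dots<a_{k'}$. Between consecutive alterations, $s$ agrees with a suffix of $T$. I would use the first-$(k+1)$-mismatches structure (Lemma~\ref{lem:lcp}) to compute $LCP(s,p_u)$ for any single $u$ in $O(k)$ time. Second, walk down the current heavy path. On it, the nodes whose pivots share a prefix of length at least $\ell$ with $s$ form a prefix of the path, since LCPs with pivots along a root-to-leaf path are monotone in the relevant sense because each child's set lies in a fixed LCP band of its parent. So the exit node is found by a predecessor search on LCP values. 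I would store a $y$-fast trie keyed by $m_u$ per heavy path, giving $O(\log\log n)$ per search. Third, each exit node changes heavy path, and there are only $O(\log|\calT^*|)$ heavy paths on a root-to-leaf path, which is too many. To get $O(k)$ searches instead, I would follow the CGL idea and make the search on a heavy path \emph{trie-based} rather than path-based. I would precompute, for every suffix of $T$ (unaltered), its traversal destination and exit points from each node via the suffix tree and a weighted-ancestor query. Then an unaltered segment is resolved in one predecessor query plus $O(1)$ LCA/weighted-ancestor queries, costing $O(k+\log\log n)$. Each of the $O(k)$ alterations forces at most one restart, giving $O(k^2+k\log\log n)$ total.

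Item~3 is the same procedure with the suffix-tree nodes of $q$ inserted at initialization. While walking, I would also report all nodes whose pivot has $q'$ as a prefix. By the structure of Table~\ref{table:match_queries}, these form subtrees hanging off the path once $LCP\ge|q'|$, so they can be enumerated by DFS at $O(k)$ cost each, giving the $k|P(v,q',\calT^*)|$ term.

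The main obstacle is the third step. The precomputed destinations are valid only for unaltered suffixes of $T$, while the pivots themselves are altered and the tree splits on median LCPs rather than characters. The tree therefore does not literally coincide with a trie of $T$'s suffixes. I would first try to prove that, within one heavy path and between two consecutive alterations of either $s$ or the pivots, the branching decisions agree with those made by the corresponding unaltered suffix of $T$. The pivots' alterations (at most $k$ per root-to-leaf path) would then add only $O(k)$ further restart points. This would bound the total number of predecessor queries by $O(k)$.
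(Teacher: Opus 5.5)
There is a genuine gap, and it sits exactly at what you call the main obstacle. The heavy-path machinery does not survive it, and the replacement you propose does not close it.

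\textbf{Step 2 fails.} The manual traversal in $\calT^*$ is not a trie walk. The child $S_{<m}$ is not lexicographically contiguous, and its pivot can have any LCP below $m$ with the parent's pivot. Take the paper's running example: $p_u=\texttt{BANANA}$, $m_u=3$, child pivot \texttt{BACDA}. The string \texttt{BACDX} descends into $S_{<m}$ with LCP $2$ to $p_u$, yet has LCP $4$ with the child's pivot. So the nodes sharing a long prefix with $s$ need not form a prefix of a heavy path. Also, the quantity compared with $m_u$ is $LCP(s,p_u)$, which changes from node to node. A $y$-fast trie keyed by the $m_u$ therefore cannot locate the exit node in one search.

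\textbf{Step 3 also fails}, for three reasons.
\begin{enumerate}
\item Storing exit points or destinations per suffix of $T$ per node does not fit in $O(|\calT^*|\log|\calT^*|+n)$ space. The same holds for storing them for arbitrary start nodes $v$, which every restart needs.
\item Resolving a segment with ``one predecessor query'' presupposes that a string traverses $\calT^*$ the same way as its predecessor. That is false. A string and its predecessor among suffixes of $T$, or even among all pivots of $\calT^*$, can fall into different recursive subsets.
\item The alterations of $s$ sit at string positions. Nothing maps a string position to the tree depth at which the traversal of $(s_i,a)$ leaves that of $s_i$. The kangaroo jumps of Lemma~\ref{lem:fast_function} work only because there the alterations are indexed by tree depth through $\lambda$.
\end{enumerate}
So nothing in the proposal yields the $O(k)$ bound on predecessor queries. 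The claim you ``would first try to prove'' carries the whole lemma.

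\textbf{The missing idea is the signpost substrings.} For a node with pivot $p$ and median LCP $m$, add four strings: $p[1..m]$, $p[1..m+1]$, $p[1..m+1]$ with its last character incremented, and $p[1..m]$ with its last character incremented. These cut lexicographic order into five intervals that coincide exactly with the recursive subsets. $S_{<m}$ is split between the two ends, with $S_{<\ell}$, $S_{>m}$, $S_{>\ell}$ in between.

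Consequently, the predecessor of $s$ among the signposts of descendants of $v$ has the same traversal destination as $s$ (Lemma~\ref{lem:signpost_substrings}). One $y$-fast trie query at $v$, over global indices and with stored destination pointers, answers item~1 with no walking in $\calT^*$ at all. Each signpost lies in $O(\log|\calT^*|)$ such tries, which gives the space bound.

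All alteration handling then moves into computing the global index of $s$. This uses a genuine compressed trie on pivots and signposts, balanced by median splitting. In that trie string depth is meaningful, so measured-ancestor kangaroo jumps are valid. Jumps caused by $q_i$ diverging from its best-matching suffix of $T$ are charged to alterations of the stored $(k+1)$-altered strings (Lemma~\ref{lem:few_pivot_recursions}). Your closing charging intuition is in this spirit, but it must be applied to that trie, not to $\calT^*$.

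For item~3, $P(v,q',\calT^*)$ is a contiguous run of descendant pivots in sorted order. It begins right after the predecessor of $q'$ in a per-node pivot $y$-fast trie. This avoids the path walking that your DFS-off-the-path plan would still require.
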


First, in Section~\ref{sec:traversal_destination_setup}, we set up the basics of the data structure---in short, we show that finding the traversal destination is equivalent to finding the predecessor among a specially selected set of altered substrings of $T$.  
In Section~\ref{sec:traversal_destination_proof},
we prove Lemma~\ref{lem:traversal_destination_fast} under the assumption that we can find such predecessors quickly.
Later, in Section~\ref{sec:pivot_search_proof}, we give a data structure that can handle these predecessor queries, completing the proof.

\subsubsection{Setup.}
\label{sec:traversal_destination_setup}
\paragraph{Predecessor Data Structure}
 We repeatedly use a \predds{}~\cite{Willard83} in our construction, which achieves the following bounds.
 For any set $S$ from a known universe of size $n^{O(1)}$, we can preprocess $S$ in $O(|S|\log\log n)$ expected time to create a data structure in $O(|S|)$ space that, for any $i$, can find $pred(i) = \max\{j\in S ~|~ j < i\}$ in $O(\log\log n)$ time.  

\paragraph{Signpost Substrings.}
 A challenge of our data structure---and a structural difference from the approach of~\cite{ColeGoLe04}---is that a string $s$ and its predecessor do not necessarily traverse the tree in the same way.  
 This is true if the predecessor is defined among suffixes of $T$, or even if it is the predecessor among all pivots of $\calT^*$.  
 Essentially, a string and its predecessor need not be in the same recursive subset; they may ``cross the boundary'' into another one.  
 This is because the definition of the recursive subsets is not particularly well-behaved with respect to predecessors.  For example, $S_{< m}$ contains substrings that are particularly high or low in lexicographic order; for another example, two strings in $S_{< \ell}$ have the same longest common prefix with each other as they do to a string in $S_{> \ell}$.  

To fix this issue, we add extra substrings into the predecessor data structure, which we call the \defn{signpost substrings}.  In short, the issue described above is that the predecessor of a string may ``cross the boundary'' into a different subset. If we add the ``boundaries'' themselves to the set, we no longer have this issue.

 Let us explain what the signpost substrings are in more detail; we will construct them explicitly below.
 For any node with set $S$, the recursive sets implicitly partition the strings into $5$ intervals \emph{in lexicographic order}:  observe that all strings in $S_{<\ell}$ are smaller than all strings in $S_{> m}$, which are in turn smaller than all strings in $S_{> \ell}$.  Strings in $S_{< m}$ are either smaller than the strings in $S_{< \ell}$, or larger than the strings in $S_{> \ell}$.  
 The $4$ signpost substrings are exactly the strings on the border between the intervals: so $s_1$ is the smallest string in $S_{< \ell}$ that is not in $S_{< m}$, and so on.  See Table~\ref{table:signpost_substrings}.

 \begin{table*}[t]
 \captionsetup{position=below}
 \begin{center}
 $p = \texttt{BANBNA\$}$ \\
 $m = 3$\\
 
    \renewcommand{\arraystretch}{1.5}
    \bigskip
    \begin{tabular}{p{8ex} p{8ex} p{8ex} p{8ex} p{12ex} p{8ex}  p{10ex}  p{8ex}  p{8ex} }
    \hline
     $S_{< m}$ & $s^1$ & $S_{< \ell}$ & $s^2$ & $S_{> m} $ & $s^3$ & $S_{> \ell}$ & $s^4$ & $S_{< m}$ \\
     \texttt{AN\$} \newline \texttt{BACDA\$}  & \texttt{BAN} & \texttt{BANANA\$} & \texttt{BANB} & \texttt{BANBAAA\$} \newline \texttt{BANBZAAAAA\$}& \texttt{BANC} & \texttt{BANCNAB\$} \newline \texttt{BANCZZ\$}
     & \texttt{BAO} & \texttt{CAT\$} 
     \\
     \hline
    \end{tabular}
    \end{center}
\caption{An example of the signpost substrings.  To emphasize how the signpost substrings help partition the elements, we have also included examples for each recursive subset; $S_{<m}$ has been split into two sets, one with elements $< p$ and one with elements $> p$. Elements in each column are lexicographically larger than those in the previous column.}
 \label{table:signpost_substrings}
\end{table*}

Now, we define each signpost substring explicitly.  The first signpost substring $s^1$ consists of $p$ truncated to the first $m$ characters.  
The second signpost substring $s^2$ consists of $p$ truncated to $m+1$ characters. 
The third, $s^3$, consists of $p$ truncated to $m+1$ characters, with the $m+1$st character incremented.
The fourth, $s^4$, consists of $p$ truncated to $m$ characters, with the $m$th character incremented.    
Each signpost substring is a $(k+1)$-altered substring of $T$; therefore, each can be stored using $O(k)$ space, and they can be used in the $(k+1)$ Mismatches Data Structure in Lemma~\ref{lem:lcp}.
The following lemma immediately follows from the signpost substring definition; the full proof is in the appendix.
\begin{restatable}{lemma}{signpostlemma}
\label{lem:signpost_substrings}
    For any string $s$ and any node $v$,
    let $pred(s, v)$ be the predecessor of $s$ among signpost substrings of descendants of $v$.  Then 
 $pred(s,v)$ and $s$ have the same traversal destination in $\calT^*$ starting at $v$.
\end{restatable}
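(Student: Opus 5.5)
Throughout, ``predecessor'' means the largest signpost substring of a descendant of $v$ that is lexicographically at most $s$; the argument is essentially the same for the strict version. I would prove the lemma by induction along the manual traversal from $v$ (Definition~\ref{def:traversal_destination}). The key claim is a single-node statement. Fix a node $u$ with pivot $p$, median LCP $m$ and signposts $s^1<s^2<s^3<s^4$. These split lexicographic order into five intervals:
\[
(-\infty,s^1),\ [s^1,s^2),\ [s^2,s^3),\ [s^3,s^4),\ [s^4,\infty).
\]
The claim is that each interval is exactly the set of strings routed from $u$ to $S_{<m}$, $S_{<\ell}$, $S_{>m}$, $S_{>\ell}$, $S_{<m}$ respectively. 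The signposts themselves are placed at the left ends of the intervals they open. Concretely, I would check that $s^1=p[1..m]$ satisfies $LCP(s^1,p)=m$ and $s^1<p$, so it routes to $S_{<\ell}$. Likewise $s^2=p[1..m+1]$ has $LCP=m+1>m$ and routes to $S_{>m}$. The string $s^3$ (incremented $(m+1)$st character) has $LCP=m$ and exceeds $p$, so it routes to $S_{>\ell}$. Finally $s^4$ (incremented $m$th character) has $LCP=m-1<m$, so it routes to $S_{<m}$.

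For an arbitrary $s$, I would case on $LCP(s,p)$ and on the comparison of $s$ with $p$ at the first mismatch. If $LCP(s,p)<m$, then $s$ disagrees with $p$ somewhere in the first $m$ characters, so $s<s^1$ or $s\ge s^4$. If $LCP(s,p)>m$, then $s$ has prefix $p[1..m+1]$, which puts it in $[s^2,s^3)$. If $LCP(s,p)=m$, then $s$ agrees with $p$ on $m$ characters and differs at position $m+1$; such an $s$ lies in $[s^1,s^2)$ when $s<p$ and in $[s^3,s^4)$ when $s>p$. These are routine lexicographic comparisons, with care about prefixes: a string counts as smaller than its extensions, and the $\$$ padding rules out degenerate prefix cases for suffixes. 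Incrementing a character is assumed to give a symbol strictly between consecutive alphabet symbols, or some similar convention.

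With the claim in hand, the argument along the path goes as follows. Let $v=u_0,u_1,\dots$ be the manual traversal path of $s$, and let $x=pred(s,v)$. I would show by induction on depth that $x$ lies in the same interval as $s$ at every $u_t$, and so follows the same child. The key point is that the signposts of $u_t$ are themselves among the candidates for the predecessor, since they belong to descendants of $v$. Because $s$ lies in an interval whose left endpoint is a signpost of $u_t$ (or in the leftmost interval), the predecessor $x\le s$ is at least that left endpoint. It is also below the next signpost, since $x\le s$ and $s$ is below it. So $x$ is in the same interval as $s$.

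The main obstacle is the edge cases, and I would handle them explicitly. The first is the two disconnected pieces of $S_{<m}$ together with the case where $s$ is in the leftmost interval: there $x$ might not exist, or might come from a far-away node, so I would need the convention that $pred$ is restricted or that a sentinel is included. The second is nodes whose truncated subtree in $\calT^*$ is a leaf, where the induction simply stops. The third is equality $x=s$, where $x$ is a signpost lying exactly on a boundary; the placement of the signposts at left ends was chosen precisely so that this case is consistent.
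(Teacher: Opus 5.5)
Your proposal is correct and is essentially the paper's appendix proof. The four signposts cut lexicographic order into five intervals that coincide exactly with the routing classes at each node ($S_{<m}$ being the two outer intervals), and since each node's own signposts are candidates for $pred(s,v)$, the predecessor lies in the same interval as $s$ at every node on the path. One correction to your aside: the strict-predecessor version is \emph{not} essentially the same, because if $s$ equals a signpost, say $s=s^1$ at $v$, then $s$ routes to $S_{<\ell}$ while every string strictly below $s^1$ routes to $S_{<m}$; the non-strict convention you actually use, which is also the one the paper's proof uses, is therefore needed.
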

\begin{proof}
We show that for any descendant $v'$ of $v$, $pred(s,v)$ is in the same recursive set as $s$; this is sufficient to give the lemma.  

For simplicity, in this proof we will use $a \leq b$ for strings $a,b$ to denote that $a$ is no later than $b$ in lexicographic order; otherwise we write $a > b$.  We repeatedly use the following fact: if $a \leq b \leq c$, then the first $LCP(a,c)$ characters of $b$ are the same as those of $a$ and $c$.

Let $s^1$, $s^2$, $s^3$, and $s^4$ be the signpost substrings of $v'$.
We have either $s^1 \leq pred(s,v)$ or $s^1 > s$ by definition of $pred(s,v)$; similarly for $s^2$, $s^3$, and $s^4$.

If $s$ is in $S_{< \ell}$ , we have $s^1 \leq s$, so $s^1 \leq pred(s,v) \leq s$.  We must have that $pred(s,v)$ matches $s^1$ and $s$ for the first $m-1$ characters, and $pred(s,v)[m] \leq s[m] \leq p[m]$; therefore, $pred(s,v) \in S_{< \ell}$.

If $s$ is in $S_{> m}$ , we have $s^2 \leq pred(s,v) \leq s$.  We must have that $pred(s,v)$ matches $s^1$ and $s$ for the first $m$ characters; therefore, $pred(s,v) \in S_{> m}$.

If $s$ is in $S_{> \ell}$, we have $s^3 \leq pred(s,v) \leq s$.  We must have that $pred(s,v)$ matches $s^1$ and $s$ for the first $m-1$ characters, and $p[m] \leq s^3[m] \leq pred(s,v)[m]$; therefore, $pred(s,v) \in S_{> \ell}$.

If $s$ is in $S_{< m}$, we split into two cases.  If $s^4 \leq s$, then $s^4 \leq pred(s,v)$; therefore, $pred(s,v)\in S_{< m}$.
If $s^4 > s$, let $i = LCP(s, p)$; we must have that $s[i] < p[i]$.  Since $pred(s,v) \leq s$, we have either $LCP(pred(s,v),p) < i$ or $pred(s,v)[i] \leq s[i]$; either way, $pred(s,v) \in S_{< m}$.
\end{proof}

\paragraph{Global Index.}
We create a set $P$ consisting of 
all 
pivots of $\calT^*$, and all signpost substrings.
 We sort all elements of $P$, and store them in a sorted array (since they are $(k+1)$-altered substrings of $T$ we can store them implicitly in $O(k)$ space each).  We call each element's position in the sorted array its \defn{global index}.  

 The purpose of the global index is to allow us to do predecessor queries more efficiently.  There are only $O(|P|)$ global indices, so we can do a predecessor query using a \predds{} in $O(\log\log |P|)$ time.

\paragraph{Pivot Search Data Structure.}

Our strategy for Lemma~\ref{lem:traversal_destination_fast} is to repeatedly find the predecessor of our query string among elements of $P$.  
We do this using the Pivot Search Data Structure described 
in Section~\ref{sec:pivot_search_proof}.  
This data structure allows us to find the global index of an altered query or suffix in $O(k^2 + k\log\log n)$ time; see 
Lemma~\ref{lem:pivot_search} for details.

\subsubsection{Proof of Lemma~\ref{lem:traversal_destination_fast}}
\label{sec:traversal_destination_proof}

First, we describe the data structure.
For each node $v$ of $\calT^*$ we store a \predds{} on the global index of all signpost substrings of all descendants of $v$; we call this the \defn{signpost \predds{}}.  We also store a pointer from each index in the \predds{} to its traversal destination.
Then, we store a second \predds{} on the global index of all pivots of descendants of $v$, we call this the \defn{pivot \predds{}}.
We also store all pivots of descendants of $v$ in sorted order so that we can iterate over them in $O(1)$ time.
Recall that we assume that there is a measured-ancestor data structure on $\calT^*$ that allows us to jump from any node to its depth-$d$ ancestor in $O(1)$ time.  We also build the Pivot Search Data Structure from Lemma~\ref{lem:pivot_search} on $|P|$.

Building the signpost \predds{} on all nodes requires a total $O(|\calT^*|\log |\calT^*|\log\log n)$ time since each signpost substring contributes to $O(\log |\calT^*|)$ y-fast tries.  Storing the pointer to each traversal destination of each signpost substring requires $O(|\calT^*|\log |\calT^*|)$ time.  
Building the pivot \predds{} on all nodes similarly requires $O(|\calT^*|\log|\calT^*|\log\log n)$ time.
Building the Pivot Search data structure requires $O(k|P|\log^2|P| + kn) = O(k|\calT^*|\log^2|\calT^*| + kn)$ time.

To initialize the data structure for $q$, we initialize the Pivot Search Data Structure for $q$ (Lemma~\ref{lem:pivot_search}); this requires $O(|q|)$ time. 

To find the traversal destination for a $k$-altered suffix $(s_i, a)$ of $T$ rooted at a vertex $v$, we find the global index $g_i$ of $(s_i, a)$.  Then, we find the predecessor of $g_i$ among signpost substrings of descendants of $v$ using the signpost \predds{} at $v$, and follow the pointer to its traversal destination.  By Lemma~\ref{lem:signpost_substrings}, this must be the traversal destination of $(s_i, a)$.
The same strategy works to find the traversal destination for a $k$-altered query. Finding the global index takes $O(k^2 + k\log\log n)$ time by Lemma~\ref{lem:pivot_search} and finding the predecessor takes $O(\log\log n)$ time; all other operations are $O(1)$.

Finally, note that all pivots in $P(v, q', \calT^*)$ must be a contiguous subset of the pivots of descendants of $v$ in sorted order, and that this subset must be immediately after the predecessor of $q'$.
Thus, to find $P(v, q',\calT^*)$, we find the global index of $q'$ using the Pivot Search Data Structure in $O(k^2 + k\log\log n)$ time.  Then, we find the predecessor of this global index in the pivot \predds{} at $v$.
We iterate through the following descendants of pivots of $v$ in sorted order; for each pivot we find, we use the $k+1$ mismatches data structure to determine if $q$ is a prefix of that pivot.  If so, we add it to $P(v, q', \calT^*)$; if not we are done.  This requires $O(k |P(v, q', \calT^*)|)$ time.

\subsection{The Pivot Search Data Structure}
\label{sec:pivot_search_proof}
 In this section we define the Pivot Search Data Structure, proving the following lemma.
\begin{lemma} 
    \label{lem:pivot_search}
    We can preprocess $P$ in $O(k |P|\log^2 |P| + kn)$ time to create a data structure using $O(|P|\log |P| +n )$ space to
    answer queries of the following form:
    \begin{enumerate}
    \item for any $k$-altered suffix $\hat{s_i}$, we can find the predecessor of $\hat{s_i}$ among elements of $P$ in $O(k^2 + k\log\log n)$ time;
    \item  given a query $q$, we can initialize $q$ in $O(|q|)$ time; then
    \item  after the data structure has been initialized, for any $k$-altered query $q'$, we can find the predecessor of $q'$ among elements of $P$ in $O(k^2 + k\log\log n)$ time.
    \end{enumerate}
\end{lemma}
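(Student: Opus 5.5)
We follow the approach of~\cite{ColeGoLe04}: reduce a predecessor query for an altered string to predecessor queries for \emph{unaltered} suffixes, and correct for the at most $k$ alterations using ``kangaroo jumps'' in the style of Lemma~\ref{lem:lcp}.

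\textbf{Unaltered strings.} For an unaltered suffix $s_i$ of $T$ we precompute its predecessor in $P$ directly. The elements of $P$ are $(k+1)$-altered substrings of $T$, so we can merge them into the suffix array order of $T$. Sorting $P$ with Lemma~\ref{lem:lcp} as comparator costs $O(k|P|\log|P|)$. We then store, for each of the $n$ suffixes, its global-index predecessor in an $O(n)$-size array, filled in $O(kn)$ time by a merge-style scan. For an unaltered suffix of $q$ we find its position among suffixes of $T$ during initialization, inserting all suffixes of $q$ into the suffix tree of $T$ in $O(|q|)$ time as noted in Section~\ref{sec:lcp}. This gives $O(1)$ predecessor lookups for unaltered suffixes of $q$ as well.

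\textbf{Handling alterations.} Let $\hat{s}=(s,a)$ with alterations at positions $d_1<\dots<d_t$, $t\le k$. We process the alterations left to right and maintain the interval $I$ of $P$ (in global-index order) whose elements share the current prefix of $\hat{s}$.

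First, look up the predecessor of $s$ in $P$ and its neighbor, and compute $LCP$ with both via Lemma~\ref{lem:lcp}. This gives the deepest level $\ell$ up to which $P$ agrees with $s$, and hence with $\hat s$ if $\ell<d_1$. If $\ell<d_1$, the predecessor of $\hat s$ equals the predecessor of $s$ and we are done.

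Otherwise, $\hat s$ agrees with $s$ through position $d_1-1$. We must locate, within the interval $I$ of elements of $P$ sharing that prefix, the position determined by the replacement character $c_1$ at $d_1$, followed by the suffix $T$ starting at the matching offset.

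To support this step we store, for each node of the compacted trie of $P$, a \predds{} over (next character, global index) at that node. We also store the depth-$\ell$ ancestor structure of~\cite{AmirLaLe07}, so that we can jump in $O(1)$ time to the trie node at string depth $d_1-1$. The $\log|P|$ space overhead comes from storing, at each heavy-path ancestor, the ranks needed to reenter a subtree. Concretely, we store each element of $P$ in $O(\log|P|)$ \predds{}s along a centroid decomposition, and building these with LCP comparisons costs $O(k|P|\log^2|P|)$.

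Inside the subtree for child $c_1$, we need the predecessor of the unaltered remainder $T[s+d_1\ldots]$. For this we use the suffix-array rank of that remainder: a \predds{} over the global indices of $P$-elements in the subtree, keyed by the suffix-array rank of the suffix of $T$ after the branch point. This is the ``rank from the branch'' trick of~\cite{ColeGoLe04}. Each alteration therefore costs one ancestor jump, one or two \predds{} queries in $O(\log\log n)$, and $O(k)$ LCP verification via Lemma~\ref{lem:lcp}. Over $t\le k$ alterations this totals $O(k^2+k\log\log n)$.

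\textbf{Main obstacle.} The hardest step is the reentry after an alteration. The elements of $P$ are themselves altered strings, so the suffix after the branch point is not a plain suffix of $T$ and is not ranked by the suffix array.

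We first try to rank each element of $P$ inside every centroid-path subtree it belongs to by its own first $k+1$ mismatches. This treats an element of $P$ as a piecewise sequence of at most $k+2$ unaltered segments. We then verify the predecessor returned by the \predds{} with one LCP check and correct by at most one step, since LCP with the true predecessor must dominate. This verification is also what handles the query-vs-$T$ case using the initialized suffixes of $q$. We expect the charging argument that bounds preprocessing by $O(k|P|\log^2|P|)$ to be routine once this ranking scheme is fixed.
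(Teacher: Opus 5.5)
Your outline (kangaroo jumps over the alterations, measured-ancestor jumps, per-node $y$-fast tries, suffixes of $q$ inserted into the suffix tree of $T$) has the same shape as the paper's proof. The trouble is that the step you flag as the main obstacle is the heart of the lemma, and your proposed fix does not work. At a trie node, the stored keys are the remainders of elements of $P$ below that node, and these are \emph{altered} strings. The search key is an unaltered suffix of $T$. Keying the $y$-fast trie by the suffix-array rank of the underlying unaltered suffix gives an order unrelated to the true order of the altered remainders. The returned element can then be arbitrarily far from the true predecessor, so ``one LCP check and correct by at most one step'' has no justification. Ranking the remainders only among themselves, with Lemma~\ref{lem:lcp} as comparator, also fails: it gives you no way to place the query key into that order.

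The missing idea is a common key space, and the paper builds it in three steps. First, it balances the compressed trie on $P$ by splitting the child containing the median suffix at each node. The height is then $O(\log|P|)$, so the set $S_P$ of all suffixes contained in all trie nodes has size $O(|P|\log|P|)$. Second, it sorts $S_P$ \emph{together with all $n$ suffixes of $T$}, using Lemma~\ref{lem:lcp} as an $O(k)$-time comparator. A string's position in this merged order is its trie index, and an array gives the trie index of any $s_i$ in $O(1)$. Third, each node stores a $y$-fast trie over the trie indices of the suffixes it contains. The predecessor and successor of an unaltered remainder among these altered suffixes are then exact, one of them maximizes the LCP, and a measured-ancestor jump gives the last traversal node. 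This is exactly where the $O(|P|\log|P|+n)$ space and $O(k|P|\log^2|P|+kn)$ preprocessing come from. Your merge of $P$ into suffix-array order is the right move, but it must be applied to every node-level remainder, not just to $P$. Relatedly, with a heavy-path decomposition you only have structures at path heads. After an alteration sends you into a heavy child, the query's remainder relative to the head is itself altered, so you cannot query the head's structure; balancing the trie avoids this.

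The altered-query case has a second gap. Locating a suffix $q_i$ of $q$ among the suffixes of $T$ does not give its predecessor in $P$ in $O(1)$ time. Elements of $P$ are altered strings and can lie strictly between two lexicographically consecutive suffixes of $T$. More fundamentally, $q_i$ has no trie index, so you must work through a proxy and bound how often the proxy leads you astray.

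The paper's proxy is the suffix $s_j$ of $T$ maximizing $LCP(s_j,q_i)$, available in $O(1)$ after inserting $q$ via Ukkonen's algorithm. It jumps to relative depth $\min\{d_a, LCP(q_i,s_j)+1\}$, where $d_a$ is the index of the first alteration in $a$. It then proves (Lemma~\ref{lem:few_pivot_recursions}) that whenever $q_i$ separates from $s_j$ and continues into a child $v_c$, every element of $P$ below $v_c$ carries an alteration on that edge. Otherwise its underlying suffix of $T$ would have a longer LCP with $q_i$ than $s_j$ does. Since elements of $P$ are $(k+1)$-altered, this happens at most $k+1$ times, giving $2k+1$ iterations and the $O(k^2+k\log\log n)$ bound. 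Your proposal has no counterpart to this charging argument, so the iteration count, and hence the query time, is not established.
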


The Pivot Search Data Structure consists of a compressed trie on $P$, with extra metadata added to aid fast traversal, and with a modification to ensure that its height is $O(\log |P|)$.  This strategy is similar to the ``LCP data structure'' in~\cite{ColeGoLe04}.   We begin by describing the trie modifications; then we describe how to complete each operation listed in Lemma~\ref{lem:pivot_search}.

We point out that all elements of $P$ are $(k+1)$-altered substrings of $T$, so they satisfy the requirements of Lemma~\ref{lem:lcp}.

 \paragraph{Trie Vocabulary.}
We begin with a compressed trie on the entries in $P$.  
We can treat a leaf $\ell$ of the compressed trie, and the corresponding string $p\in P$, interchangeably.   
 
For any trie node $v$, we call the edges along the path from the root to $v$ the \defn{string of $v$}. We say that $v$ \defn{contains} a string $s$ if there is a path from $v$ to a leaf $\ell$ that is a descendant of $v$ where $s$ is equal to the trie labels along that path.  
Note that in this case, $s$ may not be in $P$: rather it is the suffix of some $p\in P$.  
We call $s$ a \defn{suffix rooted at $v$}, or a \defn{suffix} if the node $v$ is not important.

We say that $p\in P$ is a \defn{descendant} of $v$ if the leaf of $p$ is a descendant of $v$.  Therefore, ``descendant'' refers to strings in $P$, whereas ``contains'' refers to suffixes of these strings.  In addition to using this vocabulary, we will make it clear at all points if we are discussing suffixes or elements of $P$.

 The \defn{size} of a tree node $v$ is the number of suffixes it contains; equivalently, the size is the number of $p\in P$ that are descendants of $v$.

\paragraph{Modifying the Trie.}
Consider the following modification of the trie to ensure that it has depth $O(\log |P|)$.
Consider each node $v$ in nondecreasing order of depth. Let $p$ be the median suffix that $v$ contains, and let $c$ be the child of $v$ containing $p$.  We split $c$ into two nodes: one is built recursively on suffixes $\leq p$, and one on suffixes $> p$.
 With this modification, the compressed trie has height $O(\log |P|)$.  We note that using Lemma~\ref{lem:lcp}, we can determine if a $k$-altered query or $k$-altered suffix of $T$ is $\leq p$ or $> p$ in $O(k)$ time, allowing us to traverse the trie as normal.
 
We build a measured-ancestor data structure on the trie~\cite{AmirLaLe07}.  This data structure does not increase the size of the tree and allows us to, given a leaf $\ell$ and a number of characters $d$, find the lowest ancestor of $\ell$ such that the path from the root to that ancestor has $\leq d$ total ancestor labels in $O(1)$ time.  We also store a pointer from each node in the trie to its predecessor.  

\paragraph{The Trie Index.}
Let $S_P$ be the set of all suffixes contained in all nodes of the trie (since the root has all elements of $P$ as a suffix, $P\subseteq S_P$).  We have that $|S_P| = O(|P|\log |P|)$.  We build a suffix array on $T$.  Then,   we build a sorted array on all suffixes in $S_P$, and all suffixes of $T$; the \defn{trie index} of any of these strings is its index in the sorted array.  We store an array so that given $i$, we can find the trie index of any suffix $s_i$ of $T$ in constant time.  
This data structure requires $O(|P|\log |P| + n)$ space.  

\paragraph{Modifying the Compressed Trie.}
For each node $v$ in the compressed trie, we store the largest and smallest $p\in P$ that are descendants of $v$.  
We also store a pointer to the lowest ancestor of $v$ with a descendant smaller than all descendants of $v$.  (We skip this pointer when the smallest $p\in P$ is a descendant of $v$, as then this ancestor does not exist.)
We also store a \predds{} on the trie index of all suffixes contained in $v$.  
We store a pointer from each index in the predecessor data structure to the leaf of that pivot suffix.
Finally (to help traverse when the alphabet is large), we store an additional \predds{} to answer the query: for a given character $c$, which outgoing edge of $v$ begins with the largest character $\leq c$?\footnote{There may be two such edges if the edge was split around the median; the data structure may return either.}

\paragraph{Finding the Predecessor Using the Last Traversal Node.}
Consider traversing the trie starting at a node $v$ using a (possibly altered) string $\frakq$. Let $v_q$ be the last trie node traversed; we call this the \defn{last traversal node}.  
This is equivalent to finding the substring $s'$ contained in $v$ that maximizes $LCP(s', \frakq)$.
We describe how, given the last traversal node, we can find the predecessor of $\frakq$ among suffixes of $v$.

 If $v_q$ is a leaf, then its string is the predecessor of $\frakq$.  
 Otherwise, if $v_q$ is not a leaf, we 
compare $\frakq$ to the smallest descendant of $v_q$.  

If $\frakq$ is larger, the predecessor of $\frakq$ is a descendant of $v_q$.  Since $v_q$ is the last node traversed by $\frakq$, each child of $v_q$ must either (1) have only descendants that are larger than $\frakq$, or (2) have only descendants that are smaller than $\frakq$.  We use the \predds{} stored at $v_q$ to find the outgoing edge of $v$ with the largest character $\leq \frakq[1]$.  The largest descendant of that child is the predecessor of $\frakq$.

If $\frakq$ is smaller, we follow the pointer from $v_q$ to the lowest ancestor with a descendant smaller than all descendants of $v_q$.  Then we proceed as in the case that $\frakq$ is larger to find the predecessor.

\paragraph{Initializing $q$.} 
To initialize $q$, 
we add all suffixes of $q$ to the suffix tree of $T$ in $O(|q|)$ time using~\cite{Ukkonen95}.  With this data structure, for any suffix $q_i$ of $q$, we can find the largest suffix of $T$ that is $\leq q_i$ in $O(1)$ time.

\subsubsection{Finding the Predecessor of an Altered Suffix of $T$}
\label{sec:predecessor_suffix}

In this section we explain how to perform the first type of query from Lemma~\ref{lem:pivot_search}: the predecessor in $P$ of a $k$-altered suffix of $T$.  We use $(s_i, a)$ to refer to this $k$-altered suffix.

The idea is that we will handle alterations in $a$ one at a time: using a predecessor to jump to a node as if the query were not altered, then jumping up to an ancestor to traverse the tree once using the alteration.  
This is the classic ``kangaroo jump'' strategy, which we also use in Lemmas~\ref{lem:lcp} and~\ref{lem:fast_function}.

 \paragraph{Finding the Last Traversal Node with No Alterations.}
Consider the case where $|a| = 0$.  
We find the last traversal node of $s_i$ from some trie node $v$.

Finding the last node in the traversal starting at $v$ is 
equivalent to finding the suffix that maximizes $LCP(p,s_i)$ among pivot suffixes contained in $v$.  The suffix that maximizes $LCP(p, s_i)$ must be the predecessor or the successor of $s_i$ among the suffixes contained in $v$.

Then we can do the following.  Using the trie index of $s_i$ and the \predds{} stored at $v$, we can find the predecessor and successor of $s_i$ among the suffixes contained in $v$.  We find the $LCP$ of each with $s_i$; whichever is longer is $p$.  
We can use the stored pointers to jump to the leaf of $p$; then, using the measured-ancestor data structure, we can jump to the ancestor of $p$ at depth at most $LCP(p, s_i)$ (which we can calculate using Lemma~\ref{lem:lcp}); this is $v_i$.

\paragraph{Recursively Traversing the Tree.}
Now we describe how to find the predecessor when $|a| > 0$.  As stated above, it is sufficient to find the last trie node traversed by the query $(s_i, a)$

Let us describe our recursive algorithm.
To begin, $v$ is the root of the trie.
We begin by ignoring the alterations, and finding the last traversal node $v_i$ for $s_i$ as above.
 If the depth of $v_i$ is smaller than the index of the first alteration in $a$, then the trie traversal from $v$ to $v_i$ is the same for $(s_i, a)$ and $s_i$, and we are done.

Otherwise, we use the measured-ancestor data structure to jump to the ancestor of $v_i$ with depth equal to the index of the first alteration in $a$; let $v'$ be this node.  
We remove the characters in $s_i$ along the path from $v$ to $v'$ (these characters already contributed to the traversal); this can be done by increasing $i$.
We traverse to the child of $v'$ using $(s_i, a)$, again removing any corresponding characters from $s_i$ and $a$.  Traversing to the child requires two steps: we look up the first character in $(s_i, a)$ using the \predds{} of the first characters of the children of $v'$; then, if the child was split because it contains the median, we use Lemma~\ref{lem:lcp} to determine which edge has a longer LCP with $(s_i, a)$.  
If $(s_i, a)$ does not match all characters on the edge from $v'$ to its child, we have reached the last node in the traversal, and we are done.
Otherwise, we set $v$ equal to this child and recurse.

\paragraph{Analysis.}  Each iteration performs a constant number of queries to Lemma~\ref{lem:lcp} (with $\tau = 1$), to the measured ancestor data structure, and to the \predds{}, for $O(k + \log\log n)$ total time.  Over all $k$ iterations, the total time is $O(k^2 + k\log\log n)$.

Now, the preprocessing.  To build a node, we must calculate its median descendant, and build two $y$-fast tries; the time to accomplish this over all nodes is $O(|P|\log |P|\log\log (|P| + n)) = O(|P|\log^2 |P|)$ time.
We can sort $S_P$ in $O(|P|\log^2 |P|)$ time.  We can merge this sorted array with the sorted array on all suffixes of $T$, doing comparisons in $O(k)$ time using Lemma~\ref{lem:lcp}, in $O(k|P|\log^2 |P| + kn)$ time. 

The space required is $O(|P|\log |P| + n)$.

 \subsubsection{Finding the Predecessor of an Altered Query}
\label{sec:predecessor_query}
Now, consider the third operation of Lemma~\ref{lem:pivot_search}: we want to find the predecessor in $P$ of a $k$-altered query, which we denote $q' = (q, a)$.  We will be recursively querying suffixes of $q'$, so we in fact write $q' = (q_i, a)$ for a suffix $q_i$ of $q$.  To begin, $q_i = q$.

 The method for finding the predecessor of an altered query is similar to that of an altered suffix; however, we need an extra step since we do not know (and it is not clear how to efficiently calculate) the trie index of $q_i$.  
 Our plan is as follows. First, we will find the (unaltered) suffix $s_j$ of $T$ that maximizes $LCP(s_j, q_i)$.  We can find this in $O(k)$ time by finding the predecessor and successor of $q_i$ among suffixes of $T$ (as mentioned above, this is possible in $O(1)$ time since we added all suffixes of $q$ to the suffix tree of $T$ during initialization), and returning the one with a larger $LCP$ with $q_i$.  Our plan is then to proceed as we did in Section~\ref{sec:predecessor_suffix}---but now, we must consider the case where $q_i$  and $s_j$ traverse the trie differently.

As before, our plan is a recursive algorithm based on kangaroo jumps.
Let $v$ be a node in the compressed trie, starting with $v$ equal to the root.
 To begin a recursive call, 
let $s_j$ be the (unaltered) suffix of $T$ that maximizes $LCP(s_j, q_j)$.   We begin by finding the last traversal node of $s_j$; call it $v_j$.

The first time when the traversal of $(q_i, a)$ is different from that of $s_j$ is either: (1) when $(q_i, a)$ splits from $q_i$, at the index of the first alteration in $a$, or (2) when $q_i$ splits from $s_j$, at index $LCP(q_i, s_j) + 1$.  We take the minimum of these two terms to find the index of the first split; call this index $d$.  We use the level ancestor data structure to find the ancestor of $v_j$ whose depth is $d$ more than $v$; this node is $v'$.

Then, we proceed as we did in Section~\ref{sec:predecessor_suffix}: we remove characters in $q_i$ that were already used on the path to $v'$; then, we traverse one step to a child $v_c$ of $v'$ using $(q_i, a)$, again removing matched characters; and finally we set $v$ equal to $v_c$ and recurse. 

\paragraph{Analysis.}  

We begin with the following lemma, which will help us bound the number of iterations.  In short, while kangaroo jumps to the first alteration in $a$ can be directly charged to the number of alterations in $a$, we need a way to charge the kangaroo jumps to $LCP(q_i, s_j) + 1$.  The following lemma allows us to charge these jumps to the alterations in $s_j$.

\begin{lemma}
\label{lem:few_pivot_recursions}
   Each time the above process traverses from a node $v'$ to its child $v_c$, for a node $v'$ where $q_i$ splits from $s_j$, all pivots that are descendants of $v_c$ have an alteration on the edge from $v'$ to $v_c$.
\end{lemma}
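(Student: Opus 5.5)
The plan is a proof by contradiction. Suppose some pivot $p\in P$ that descends from $v_c$ has \emph{no} alteration on the edge from $v'$ to $v_c$. Then I would show that some unaltered suffix of $T$ beats $s_j$ at maximizing the LCP with $q_i$, contradicting the choice of $s_j$.

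First I will fix the setup. Let $d$ be the depth at which the traversal from $v'$ to $v_c$ takes place, measured in characters from the start of the current $q_i$. At this split we are in case (2), so $d = LCP(q_i,s_j)+1$. This means $q_i$ and $s_j$ agree on the first $d-1$ characters and differ at position $d$. Since no alteration of $a$ occurs before $d$, $(q_i,a)$ agrees with $q_i$ up to and including position $d$. The step from $v'$ to $v_c$ is taken using $(q_i,a)$, so the edge label into $v_c$ begins with the character $q_i[d]$, up to the median split. In the median-split case, both split children share their first character, and I will argue on the first character of the edge, which still equals $q_i[d]$.

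Next I will use the assumed pivot $p$. Each $p\in P$ is a $(k+1)$-altered substring $(t,b)$ of $T$, viewed here through its suffix rooted at the trie node corresponding to $v$. If $b$ has no alteration at the position corresponding to the edge's first character, then the underlying unaltered suffix of $T$ has character $q_i[d]$ there. I also need the first $d-1$ characters. I would argue that the path from $v$ to $v'$ was traversed by $s_j$, and $s_j$ is unaltered. The point to make precise is that the same suffix of $T$ underlying $p$ realizes those $d-1$ characters unaltered, or otherwise a shorter suffix of $T$ does. This gives an unaltered suffix $s'$ of $T$ whose first $d$ characters are those of $p$ below $v$, and these equal $q_i[1..d]$. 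Then $LCP(s',q_i)\ge d > LCP(s_j,q_i)$, which contradicts the maximality of $s_j$, since the maximization ranges over all suffixes of $T$.

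The main obstacle is the step about the prefix. An alteration of $p$ could lie strictly between $v$ and $v'$, and the lemma's conclusion is stated only for the edge $v'\to v_c$, so it does not directly rule this out. I would first try to handle it via the recursion's invariant. At the start of each recursive call, $s_j$ is re-chosen for the current $q_i$, and the characters consumed so far were removed. So I only need agreement on the current segment from $v$ to $v'$. On that segment, $s_j$ itself matches $q_i$ exactly and unaltered. I would then splice: take the suffix of $T$ underlying $p$ at the edge position, and ask whether the maximizing suffix of $T$ for $q_i$ could be replaced by it. If the splicing argument fails, I would weaken the claim to ``has an alteration at some position in $[1,d]$ of the current segment.'' This still suffices to charge the jump to an alteration of $p$, and that charge is what the iteration bound needs.
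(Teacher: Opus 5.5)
Your fallback is the paper's argument, and you should commit to it. The paper takes a pivot $p=(s_p,a')$ below $v_c$ and observes that $p$ agrees with the query along the path from $v$ to $v_c$, while $s_j$ leaves that path on the edge $v'\to v_c$. It then concludes that if $p$ has no alteration \emph{on the path from $v$ to $v_c$}, then $LCP(q_i,s_p)>LCP(q_i,s_j)$, contradicting the choice of $s_j$. So the paper's proof also delivers only the ``some alteration in positions $1..d$ of the current segment'' version, not the edge-only version in the statement. That weaker version is all the iteration count needs. Each recursive call restarts at $v=v_c$, so the segments are disjoint. A pivot below the final node therefore spends a distinct one of its at most $k+1$ alterations on each segment where such a step occurred.

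The splicing attempt cannot succeed, because the edge-only claim is false as literally stated. Suppose $q_i$ begins with \texttt{ABC}. Suppose no suffix of $T$ begins with \texttt{ABC}, but some begin with \texttt{ABD}, so $LCP(q_i,s_j)=2$ and $d=3$. Suppose also $P$ contains a pivot whose underlying suffix is \texttt{XBC}$\dots$, with a single alteration at position $1$ turning \texttt{X} into \texttt{A}. Nothing in the construction forbids this. Then $v'$ is the trie node for \texttt{AB} and $v_c$ is its \texttt{C}-child. The pivot lies below $v_c$, but its only alteration sits above $v'$.

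One smaller gap in your setup: the step ``$(q_i,a)$ agrees with $q_i$ up to and including position $d$'' needs the first alteration of $a$ to lie strictly after $d$, not merely not before it. If that alteration is exactly at $d$, the edge into $v_c$ may start with the altered character rather than $q_i[d]$. Then $s_p$ need not beat $s_j$, and the contradiction disappears. Such tied steps must be charged to the alteration of $a$, which the count already allows.
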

\begin{proof}
   Let $p = (s_p, a')$ be a suffix of $P$ contained in $v$ that is also a descendant of $v_c$.
   By definition, $LCP(q_i, p) > LCP(q_i, s_j)$, since $p$ and $q_i$ each match all characters on the path from $v$ to $v_c$, whereas $s_j$ matches all characters on the path from $v$ to $v'$, but does not match the characters on the edge from $v'$ to $v_c$.
   If $p$ does not have an alteration on the path from $v$ to $v_c$, 
   $LCP(q_i, s_p) > LCP(q_i, s_j)$, contradicting the definition of $s_j$.
 \end{proof}

Now, we analyze the query time.
Each iteration requires $O(k + \log\log n)$ time. 
    Each time $d$ is the first alteration in $a$, we remove an alteration from $\frakq$; this can only happen $k$ times.  Since each element of $S_P$ is at most $k+1$-altered, by Lemma~\ref{lem:few_pivot_recursions}, $d$ can only be $LCP(q_i, s_j) + 1$ at most $k+1$ times.  In total, the algorithm recurses at most $2k+1$ times, for a total time of $O(k^2 + k\log\log n)$.

The space and preprocessing match the bounds in Section~\ref{sec:predecessor_suffix}.  This proves Lemma~\ref{lem:pivot_search}.

\fi

\section{Acknowledgments}
\label{sec:ack}

This research is supported in part by National Science Foundation grants CAREER-2045641, CNS-2325956, and 
CCF-2103813.
We would like to thank the anonymous reviewers for their helpful comments and feedback on previous versions of this paper.

\bibliographystyle{plain}  
\bibliography{cgl}

\iffull
\appendix
\section{Omitted Proofs}
\label{sec:omitted}

\correctnesscoreobs*

\begin{proof}
For any string $x$ and indices $y$ and $z$, let $x[y..z]$ be the substring of $x$ from $x[y]$ to $x[z]$, and $x[y...]$ be the suffix of $x$ starting at $x[y]$.  Recall that substrings are $1$-indexed.

By definition, $q[1..i] = p[1..i]$ and $s[1..j] = p[1..j]$.
    We split into three cases; all three are essentially the same argument.
\begin{enumerate}
	\item 
    If $i < j$, $i < |q|$, and $j < \min\{|s|, |p|\}$, then $q[i+1] \neq p[i+1]$, but $s[i+1] = p[i+1]$.  
 Note that 
 \[
 d_H(q,s) = d_H(q[1..i],s[1..i]) + d_H(q[i+1],s[i+1]) + d_H(q[i+2...], s[i+2...]).
 \]
 By definition of $i$, $d_H(q[1..i],s[1..i])=0$ and $d_H(q[i+1],s[i+1]) = 1$, so $d_H(q[i+2...], s[i+2...]) = d_H(q,s)-1$.
 By definition, $\hat{q}[i+1] = p[i+1] = s[i+1]$.  The same decomposition as above then gives $d_H(\hat{q}, s) = d_H(q,s)-1$.
	\item if $i > j$, 
    $i < |q|$, and $j < \min\{|s|, |p|\}$, then $q[j+1] = p[j+1]$, but $s[j+1] \neq p[j+1]$.  
 Note that 
 \[
 d_H(q,s) = d_H(q[1..j],s[1..j]) + d_H(q[j+1],s[j+1]) + d_H(q[j+2...], s[j+2...]).
 \]
 Again, $d_H(q[1..j],s[1..j])=0$ and $d_H(q[j+1],s[j+1]) = 1$, so $d_H(q[j+2...], s[j+2...]) = d_H(q,s)-1$, and 
 $\hat{s}[j+1] = p[j+1] = q[j+1]$.   Then $d_H(q, \hat{s}) = d_H(q,s) - 1$; 
	\item if $i = j$, $q[i+1] \neq s[j+1]$,  $i < |q|$, and $j < \min\{|s|, |p|\}$, then $q[i] = p[i] = s[i]$ but $q[i+1] \neq p[i+1]$ and $s[i+1] \neq p[i+1]$.  
 Note that 
 \[
 d_H(q,s) = d_H(q[1..i],s[1..i]) + d_H(q[i+1],s[i+1]) + d_H(q[i+2...], s[i+2...]).
 \]
 By definition of $i$, $d_H(q[1..i],s[1..i])=0$ and $d_H(q[i+1],s[i+1]) = 1$, so $d_H(q[i+2...], s[i+2...]) = d_H(q,s)-1$.
 By definition, $\hat{q}[i+1] = p[i+1] = \hat{s}[i+1]$.  The same decomposition as above then gives $d_H(\hat{q}, \hat{s}) = d_H(q,s)-1$.\qedhere
\end{enumerate}
\end{proof}

\correctnesscorematchobs*

\begin{proof}
First, consider $j \geq |q|$; 
we have $LCP(q, p) = |q|$ and $LCP(s, p) \geq |q|$, so $d_H(s,q) = 0$.

    Now, consider $j < |q|$.  
    We have 
    \[ 
    d_H(s,q) = d_H(s[1..j],q[1..j]) + d_H(s[j+1],q[j+1]) + d_H(s[j+1...],q[j+1...]).
    \]
    Since $i = |q|$, $q[1..j] = p[1..j]$, and since $j < |s|$, $d_H(s[1..j],q[1..j]) = 0$.  We have $q[j+1] = p[j+1]\neq s[j+1]$ since $j < |s|$.  This means that $d_H(s[j+1...],q[j+1...]) = d_H(s,q) - 1$.  Since $\hat{s}[j+1] = p[j+1] = q[j+1]$, we have $d_H(\hat{s},q) = d_H(s,q) - 1$.
\end{proof}

 \pathdistinctlabels*
\begin{proof}
Consider a node $v$ with set $S$ with $(s_i, a)\in S$ for some $a$.  If $(s_i, a)$ is not the pivot of $S$, then $(s_i, a)$ is in exactly one recursive subset $S'$, and $(s_i, a')$ is in exactly one altered subset $\hat{S'}$.  
If $(s_i,a)$ is the pivot, then it is not stored recursively.

The path to a child of $v$ consists of the path to $v$ with either \texttt{u} or \texttt{a} added.  We immediately have that inductively, all nodes at the same depth have distinct path labels; since the length of the path label is exactly the depth, all path labels must be distinct.  Similarly, if $(s_i, a_1)$ is a pivot, no descendants of $v$ contain $(s_i, a_1)$, so no path label is a prefix of another.
\end{proof}
\functioninversionchainslemma*

\begin{proof}
  This proof is closely adapted from the proof of~\cite[Lemma 4.2]{FiatNaor00}---essentially it is the same proof, but we do not need to condition on having sampled high-indegree elements, since we handle them explicitly in $h_c$.

    The key observation behind this proof is the following.  For any $i,j\in [n]$ with $i\neq j$, $\Pr_{g_c}[h_c(i) = h_c(j)] = O(\sigma/n)$.  This follows immediately from the fact that $|f^{-1}(j)| \leq \sigma$ for any $j\in [n-1]$.  The case where $f(i) = \bot$ adds $1/n$ to the probability that two elements collide.

Let's discuss the layout of this proof.
    Let $E_x$ be the event that the $x$th chain in the cluster contains $i$.  
    First, we show that a given chain in the cluster is unlikely to ``loop''; i.e.\ it is unlikely that some element appears twice in the chain.  Then, if the chain does not loop, we show that the contents of chain $x$ consist of $\sigma$ random samples from $[n-1]$; combining these two results allows us to lower bound $\Pr[E_x]$.  Finally, the $E_x$ are not independent; however, using a similar argument we can bound $\Pr[E_x \cap E_{x'}]$ for any $x\neq x'$ so that we can prove the lemma using the inclusion-exclusion principle.

    Let $e_1, e_2, \ldots, e_{\sigma}$ be random variables for the elements in the $x$th chain in the cluster.

    First we show the probability that a chain loops; i.e., contains less than $\sigma$ distinct elements.  
    Each successive $e_j$ is (if there is no loop in the first $j$ elements) uniformly distributed in $[n]$ since $g_c$ is $2\sigma$-independent.  Then the probability that $e_j$ collides with a previous element is, by union bound, at most $O(j\sigma/n)$.  Summing over all elements, we have probability $O(\sigma^3/n)$ that a chain loops.  
    
    Conditioned on the chain not looping,  the probability that a given $e_j$ is $i$ is $1/n$; again these are independent.  
    Using the binomial theorem, the probability that a given $i$ appears is then $1 - (1 - 1/n)^{\sigma} \geq \sigma/n - O(\sigma^2/n^2)$.  Multiplying by the probability the chain does not loop, we obtain $\Pr[E_x] \geq (1 - O(\sigma^3/n))(\sigma/n -  O(\sigma^2/n^2)) \geq \sigma/n - O(\sigma^2/n^2)$. 

    Now, we upper bound $\Pr[E_x \cap E_{x'}]$ for two fixed $x\neq x'$.  Let their elements be $e_1, \ldots, e_{\sigma}$ and $e_1',\ldots, e_{\sigma}'$.  We have $\Pr[E_x \cap E_{x'}] = \Pr[E_x] \cdot \Pr[E_{x'} ~|~ E_x]$.  We upper bound $\Pr[E_{x'} ~|~ E_x]$ by upper bounding the probability that the chains starting at $x$ and $x'$ have \emph{any} element in common.  We do this one element at a time: for any $j$, given no elements in common up to $e_j'$, the probability that $e_j'$ is equal to any of $e_1, \ldots, e_{\sigma}$ is $\leq \sigma^2/n$ since $g_c$ is $2\sigma$-independent.  Therefore, $\Pr[E_x ~|~ E_{x'}] \leq \sigma^3/n$.  Thus, $\Pr[E_x\cap E_{x'}] \leq \Pr[E_x]\cdot \sigma^3/n$.

    Via the inclusion-exclusion principle, we can lower bound the probability some chain in the cluster finds $i$ by $\sum_{x} E_x - \sum_{x,x'} E_x \cap E_{x'}$.  Using that the number of chains is $n/\sigma^3$, and substituting the above bounds, we obtain a lower bound of 
    \[
    \frac{n}{\sigma^3} \Pr[E_x] - \binom{n/\sigma^3}{2}\Pr[E_x]\cdot \sigma^3/n \geq 
    \frac{n}{2\sigma^3} \Pr[E_x] \geq 
    \frac{n}{2\sigma^3} (\sigma/n - O(\sigma^2/n^2)) = \Omega(1/\sigma^2).\qedhere
    \]
\end{proof}

\signpostlemma*

\begin{proof}
Assume by contradiction that there is some node $v'$ with set $S'$ and pivot $p'$ such that $v'$ is a descendant of $v$, and $s$ and $pred(s,v)$ are in different recursive subsets of $S'$.
Let $m'$ be the median LCP of $p$ and any string in $S'$, and let $s^1, s^2, s^3, s^4$ be the signpost substrings of $v'$. 

By definition, we have $s^1 < s^2 < s^3 < s^4$ in lexicographic order.  Therefore strings define a partition on any possible string $s'$ based on how $s'$ is ordered relative to the $s^1, s^2, s^3, s^4$.  Note that by definition, $s$ and $pred(s,v)$ are in the same partition.  For example, if $s^2 \leq s < s^3$, then $s^2 \leq pred(s,v) < s^3$---we cannot have $pred(s,v) > s^3$ since $pred(s,v) \leq s$, and we cannot have $pred(s,v) < s^2$ by definition of $pred(s,v)$.

Therefore, we are left to show that the partition defined by the signpost substrings exactly lines up to the recursive subsets.  Since both are partitions by definition, we only need to show one direction.

First, we show that $s'\in S_{< m}$, we have $s' < s^1$ or $s' \geq s^4$.  Let $i = LCP(s', p')$, so $i < m$.  We split into three cases.
\begin{enumerate}
\item If $i = |s'|$ then since $s^1$ consists of $p$ truncated to $m$ characters, $s'$ is a prefix of $s^1$, and therefore $s' < s^1$.
\item If $s'[i+1] > p'[i+1]$, recall that by definition $LCP(p', s^4) = m-1$.  If $i+1 \leq m - 1$ we have that $s^4$ and $p'$ match for $i+1$ characters so $s' > s^4$, if $i+1 = m-1$ we have that $s^4$ and $p'$ match for $i$ characters, and $s^4[i+1]$ is the smallest character larger than $p'[i+1]$, so $s' \geq s^4$.
\item If $s'[i+1] < p'[i+1]$, recall that $s^1$ is $p'$ truncated to $m$ characters.  Since $i+1 \leq m$, $LCP(s', s^1) = i$ and $s'[i+1] < s^1[i+1]$, so $s' < s^1$.
\end{enumerate}

Now, we show that if $s' \in S_{< \ell}$, we have $s^1\leq s' < s^2$.  By definition, $LCP(s', p') = m$, and if $s'$ has length $> m$, then $s'[m+1] < p'[m+1]$.  Since $s^1$ consists of $p'$ truncated to the first $m$ characters, $s^1$ is a prefix of $s'$, so $s^1 \leq s'$.  Since $s^2$ consists of $p'$ truncated to the first $m+1$ characters, 
$s^2$ matches $s'$ for $m$ characters, and either $s'$ has length $m$, or $s'[m+1] < s^2[m+1]$; either way, 
$s' < s^2$.

Now, we show that if $s'\in S_{> m}$, we have $s^2 \leq s' < s^3$.  Since $s^2$ consists of $p'$ truncated to the first $m$ characters, $s^2$ is a prefix of $s'$, so $s^2 \leq s'$.  Since $s^3$ consists of $p'$ truncated to the first $m+1$ characters, with the $m+1$st character incremented, we must have that $s'$ and $s^3$ match for the first $m$ characters, and $s'[m+1] = p[m+1] < s^3[m+1]$, so $s' < s^3$.

Finally, we show that if $s' \in S_{> \ell}$, we have $s^3\leq s' < s^4$.  
By definition, $LCP(s', p') = m$, and if $s'$ has length $> m$, then $s'[m+1] > p'[m+1]$.  Since $s^3$ consists of $p'$ truncated to the first $m+1$ characters with the $m+1$st character incremented, $LCP(s', s^3) = m$, and $s^3[m+1] \leq s'[m+1]$, so $s^3 \leq s'$.  Since $s^4$ consists of $p$ truncated to the first $m$ characters, with the $m$th character incremented, $LCP(s', s^4) = m-1$, and $s'[m] < s^4[m]$, so $s' < s^4$.
\end{proof}

\section{Discussion of Longest Common Prefix Data Structures}
\label{sec:lcp_discussion}

For the sublinear case (i.e., $\tau > 1$), when comparing an altered suffix of $T$ to an altered suffix of $q$, we used the Monte-Carlo data structure of Bille et al.~\cite{BilleGoKn15} to answer longest common prefix (LCP) queries.  In this section, we discuss why we chose this data structure rather than more recent and more performant data structures, and explain in more detail why the LCP data structures for $T$ and $q$ can be merged in $O(|q|)$ time.

\paragraph{Merging LCP Data Structures.}
Consider a new string $M$ obtained by concatenating $T$ and $q$.  Recall that $T$ ends with at least one occurrence of a special character $\$$ that does not appear in $T$ or $q$; therefore, the LCP of a suffix of $T$ and $q$ is the same as the LCP of the corresponding indices in $M$.  

With this in mind, our goal is to begin with Bille et al.'s data structure on $T$; then, given $q$, build the data structure on $M$ in $O(|q|)$ time.  

We briefly describe how Bille et al.'s data structure works.  
(This is the ``Monte Carlo data structure'' from~\cite[Section 3.3]{BilleGoKn15}.) They split the string into blocks of length $\tau$ ($\tau$ is the space-saving parameter to obtain $O(n/\tau)$ space---we will use it using the same value as $\tau$ elsewhere in the paper).  Then, for block $x$, they sample $b_x$ evenly spaced prefixes, called the ``sampled positions,'' where $b_x$ is a function of $x$ and $\tau$, and store the Karp-Rabin fingerprint of each.  (In particular, for each sampled position $i$, they store the Karp-Rabin fingerprint of the first $i$ characters of $T$.)

Since $b_x$ is a function of $x$ and $\tau$, we can immediately obtain the sampled positions of $q$ after it is appended to $T$.  If we store the Karp-Rabin fingerprint of $T$, we can find the fingerprint of all prefixes of $M$ in $O(|q|)$ additional time; storing the fingerprints at all sampled positions according to each $b_x$ obtains the combined data structure.

\paragraph{What We Want Out of an LCP Data Structure.}
Before discussing why we chose the data structure we did, we should briefly discuss what performance bounds we need out of the data structure, which can be improved, and how such improvements would affect our final bounds in Theorem~\ref{thm:result_small_space}.

Our LCP data structure needs to have $O(n/\tau)$ space to fit the space bounds of Theorem~\ref{thm:result_small_space}.  Within this space, we want the following; currently we achieve all but the last.
\begin{itemize}
    \item \textbf{Preprocessing time:} We require $O(n\binom{\log n}{k})$ time to build the function inversion data structures using Theorem~\ref{thm:function_inversion_all} (in total for all $\lambda$; see Section~\ref{sec:space_efficient_data_structure_definition}), so the LCP data structure should have preprocessing time of $O(n\binom{\log n}{k})$.  This rules out many previous results with $O(n^c)$ preprocessing time for $c > 1$, but e.g., $O(n\log n)$ and $O(n)$ time preprocessing work equally well for Theorem~\ref{thm:result_small_space}.
    \item \textbf{Efficient Merging:}  After building the LCP data structure on $T$, after $q$ arrives, we need to be able to answer LCP queries between any suffix of $T$ and any suffix of $q$ after $\tilde{O}(|q|)$ further preprocessing.
    \item \textbf{Query time:} In~\cite{BilleGoKn15} the query time is $O(\tau + \log n)$
    rather than the optimal $O(\tau)$.  This winds up being a lower-order term in our analysis.  In short, this is because we use~\cite{BilleGoKn15} only to either traverse the tree using a query $q$, or to find the distance from $q$ to a suffix of $T$.  Each time we find the distance from $q$ to a suffix of $T$, we spent $\Omega(\sigma^2)$ work obtaining the suffix of $T$ using the function inversion data structure; since $\tau > 1$ implies $\sigma > \log n$, this is a lower-order term.  Traversing the tree similarly winds up being a lower-order term.
    \item \textbf{Query correctness:} This is the area for improvement.
    In~\cite{BilleGoKn15}, the query is correct with high probability.  If we could instead use an LCP data structure that is correct with probability $1$, then Theorem~\ref{thm:result_small_space} would---like Theorem~\ref{thm:result_large_space}---give the correct answer with probability $1$.
\end{itemize}

\paragraph{Other Options for Longest Common Prefix Data Structures.}
Bille et al.~\cite{BilleGoKn15} give Las Vegas and deterministic variants of their data structure. However, they are at the cost of larger preprocessing or query time.
Similarly, Gawrychowski and Kociumaka~\cite{GawrychowskiKociumaka17} give a data structure with $O(\tau)$ query time and queries correct with probability $1$, but $O(n^{3/2})$ preprocessing time.

More recently, Birenzwige, Golan, and Porat~\cite{BirenzwigeGoPo20} gave a data structure with $O(\tau)$ queries that are always correct, and $O(n)$ preprocessing time with high probability.  Finally, Kosolobov and Sivukin~\cite{KosolobovSivukin24} give an optimal, fully deterministic data structure with $O(n)$ preprocessing and $O(\tau)$ query.

The challenge for using these recent data structures is that we want merges to be efficient.  This does not seem to be the case, at least out of the box.

To briefly describe the challenge, these results generally contain, as a subroutine, a sorting step.  This is not usually a problem if merges are not a goal: all $n$ suffixes of $T$ can be sorted in $O(n)$ time using standard techniques, and all suffixes of $q$ can be sorted in $O(|q|)$ time similarly.  However, merging the two sorted lists entails placing the suffixes of $q$ in sorted order with respect to the suffixes of $T$, which requires $\Theta(|q|\log n)$ time without further structure (which would affect the bound in Theorem~\ref{thm:result_small_space}).  See for example the proof of~\cite[Lemma 3]{KosolobovSivukin24}, or the process of converting blocks of $S$ into unique symbols (and indeed all uses of Lemma 2.4) in~\cite{BirenzwigeGoPo20}.

It is entirely possible that these challenges can be overcome---for example, Ukkonnen's algorithm~\cite{Ukkonen95} immediately allows us to merge the suffix array of $T$ and $q$ in $O(|q|)$ time; this strategy was used e.g., in~\cite{ColeGoLe04}.
 Alternatively, it seems possible that we could use~\cite{KosolobovSivukin24} as our LCP data structure with a $O(|q|\log n)$ time merge algorithm as sketched above;  if this works, we would obtain an equivalent of Theorem~\ref{thm:result_small_space} with $\tilde{O}\left(|q|\log n + 3^k\sigma^3 \tau  \log n \binom{\log n}{k} +  \sigma^2 \tau \log n(\occ)\right)$ time queries that are always correct.  
  We leave these avenues of improvement for future work.

\fi

\end{document}